\documentclass[twocolumn,prb,amsmath,amssymb]{revtex4-2}
\usepackage{amsthm}
\usepackage{booktabs}
\usepackage{array}[=2016-10-06]
\usepackage{graphicx}
\usepackage[colorlinks=true,linkcolor=blue,citecolor=blue,urlcolor=blue]{hyperref}

\newcommand{\vk}{\mathbf{k}}
\newcommand{\vd}{\boldsymbol{\delta}}

\newtheoremstyle{phys}
  {\medskipamount}
  {\medskipamount}
  {\upshape}
  {0pt}
  {\bfseries}
  {.}
  {0.5em}
  {\thmname{#1}\thmnumber{ #2}\thmnote{ (#3)}}
\theoremstyle{phys}
\newtheorem{theorem}{Theorem}
\newtheorem{lemma}{Lemma}
\newtheorem{corollary}{Corollary}
\newtheorem{proposition}{Proposition}
\newtheorem{result}{Result}

\newtheorem{remark}{Remark}
\newtheorem{prediction}{Prediction}

\begin{document}

\title{Two superconducting pairing models with a Kitaev spin-liquid glue:\\
microscopic kernel and parity alternation on odd-fold bond stars}

\author{Youping Dai}
\email{daiyouping@cuhk.edu.cn}
\affiliation{School of Science and Engineering, The Chinese University of Hong Kong, Shenzhen, China}

\begin{abstract}
We construct two superconducting pairing models on the doped Kitaev
honeycomb lattice, whose microscopic kernels are derived rather than
assumed. Both models are driven by a bond-local glue that is neutral
between the even- and odd-parity pairing channels; the resulting
chirality--parity linkage is therefore set by the bond-star geometry,
not by the microscopic glue. 

(i) Model~1(odd-parity sector): the microscopic kernel follows from the flux-sector
selection rule of the pure model, which fixes its algebraic
form---component-diagonal, nearest-neighbor separable, rank one in spin
space per bond---with $w_\alpha=J_K^2\chi_\alpha\propto r_\alpha
m_\alpha$, where the equal-time factor $m_\alpha$ follows from the
Hellmann--Feynman theorem and the retardation factor
$r_\alpha\simeq0.24$ is computed numerically with an analytic
cross-check. The resulting spin-vertex state is a unitary triplet with
a bond-locked anisotropic $\mathbf d$ vector, and its gap is a Gram sum
of bond sines, so nodal directions are fixed by geometry and do not move
under strain, pressure, or doping.

(ii) Model~2(even-parity sector): the kernel is fixed by the $C_3$ little-group
selection rule of a zone-corner chiral phonon, up to a Cooper-channel
extension whose momentum constraint we state explicitly. The
charge-vertex pairing is a $d+id$ singlet: the parity decomposition
$f_\pm=g_\pm+p_\pm$ projects the charge channel onto $g_\pm$, so the
chirality is slaved to the phonon polarization, with $C=\pm4$ and
$\kappa_{xy}/T=2\kappa_0$. Reversing the pump helicity reverses the
Chern number, Kerr rotation, and thermal Hall sign continuously.

(iii) General theorem: on odd-fold bond stars, the
angular-momentum ladder of a single-star chiral form factor alternates in
parity. Every chiral state selected by such a bond-local form factor is thus a
geometrically enforced singlet--triplet composite with universal amplitude
ratios ($n=3$: $d\oplus p$; $n=5$: $g\oplus p$ or $d\oplus f$); coherent
doubled-star combinations can restore parity purity. The weak-pairing Chern
number of the mixed state
follows the dominant on-shell component across a topological phase
boundary. The spin glue drives $C=\pm2$, while the charge kernel
keeps $C=\pm4$; the thermal Hall plateau discriminates the
microscopic vertex.\\

\noindent\textbf{Keywords:} Kitaev spin liquid, parity alternation, singlet-triplet mixing,
chiral superconductivity, chiral phonon, thermal Hall effect, doped quantum magnets
\end{abstract}
\maketitle

\section{Introduction}
\label{sec:intro}

The Kitaev honeycomb model is an exactly solvable model of a quantum spin liquid~\cite{Kitaev2006,Trebst2022}.
Its exchange is bond-dependent: on each of the three bond orientations of the honeycomb
lattice only one spin component couples. This
anisotropy underlies both its exact solution by Majorana
fermionization and its role in topological quantum
computation~\cite{Kitaev2001,Kitaev2006}. On the materials side,
spin-orbit-driven $4d/5d$ magnets such as $\alpha$-RuCl$_3$
~\cite{Banerjee2016,Kasahara2018}, Na$_2$IrO$_3$, and the
harmonic-honeycomb iridates $\beta/\gamma$-Li$_2$IrO$_3$
approximately realize this exchange~\cite{Winter2017,Trebst2022}, and evidence for a half-quantized thermal Hall
plateau in $\alpha$-RuCl$_3$ has intensified interest in its Majorana
physics~\cite{Kasahara2018}.

Once the lattice is doped and the local moments are coupled to itinerant
electrons by a Kondo coupling $J_K$, the same anisotropy takes on a
second role: the spin fluctuations of the liquid become a candidate
pairing glue, weighted by spin component and bond direction rather than
by momentum alone, and the doped system is placed in the broader
fractionalized-metal setting of a spin liquid coexisting with a
metal~\cite{Senthil2003}.
Spin-fluctuation theory has long proposed that a quantum liquid of
local moments can mediate superconductivity, as in heavy-fermion and
high-$T_c$ superconductors~\cite{BerkSchrieffer1966,%
Monthoux1991,Chubukov2008,Coleman2015}.
For the doped Kitaev model itself, superconductivity has been studied
for over a decade, and a consistent sign rule has emerged: ferromagnetic
Kitaev exchange favors triplet $p$-wave pairing coexisting with
ferromagnetism, while antiferromagnetic exchange turns to singlet $d+id$
at intermediate doping. This picture emerged at the mean-field
and wave-function level~\cite{You2012,Hyart2012,Okamoto2013} and has
recently received support from variational Monte Carlo calculations whose
wave function can represent the half-filled Kitaev spin liquid
exactly~\cite{Misawa2026}. Other numerics are more cautious: tensor-network studies of
the doped triangular-lattice Kitaev model find superconducting
correlations strongly suppressed~\cite{Hardy2026}, and on the honeycomb
lattice a pair-density-wave precursor rather than long-range order has
been reported~\cite{Peng2021}.  In parallel, chiral
phonons---circularly polarized vibrational modes carrying angular
momentum---have been proposed as a bosonic glue capable of mediating
chiral pairing~\cite{ZhangNiu2015,Gao2023}.
Mixed singlet--triplet chiral pairing with large Chern numbers $C=\pm2,\pm4$
has also been reported in moir\'e bilayers~\cite{Akbar2024}.
A closely related mechanism is the Majorana-mediated fractionalized
superconductivity of the Kitaev--Kondo lattice~\cite{Bunney2025}, in
which the glue selects chiral $d$- or $p$-wave pairing according to the
sign of the coupling. These works show that the pairing products of
Kitaev-type interactions are sign-sensitive and typically chiral. They
ask, for specific models, whether superconductivity appears and with
which symmetry.

We ask a different question: on which lattice structures are singlet
and triplet pairing forced to coexist, and on which are they forbidden? The answer depends only on the parity
of the bond-star arity, not on model details. Our results are of three kinds:

The first is microscopic, preparing the ground for what follows: we
derive the symmetry-dictated form of the pairing kernel rather than
assume it; for Model~1 the derivation is microscopic end to end, while
for Model~2 we state explicitly the momentum constraint under which
the microscopic kernel reduces to it. The exact
short-rangedness of the spin correlations of the pure Kitaev
model---a flux-sector selection rule valid at all frequencies---forces
the itinerant-electron kernel at second order in $J_K$ to be diagonal
in the spin component, nearest-neighbor separable, and rank one per
bond, and the Hellmann--Feynman theorem fixes its weight function
$w_\alpha=J_K^2\chi_\alpha\propto r_\alpha m_\alpha$ in closed form.

The second is a pair of solved models. In Model~1 the anisotropic
spin-fluctuation glue enters through the spin vertex and produces a
unitary triplet state with a bond-locked anisotropic $\mathbf d$ vector,
each component condensing with its own bond-sine form factor.
In Model~2 the pairing is mediated by zone-corner chiral phonons through the charge
vertex, and angular-momentum matching selects the chiral combination
$d_{x^2-y^2}\pm id_{xy}$ from the degenerate $d$-wave doublet. The two models are two parity sectors of a single chiral-boson
structure: the same phonon, projected through the spin vertex, would
select the chiral triplet $p\pm ip$. A structural rigidity is common to
both: the gap magnitude is a Gram sum of bond form factors, so its
zeros are fixed by the bond-star geometry rather than by the pairing
amplitudes. Nodal directions therefore do not move under strain,
pressure, or doping; only a longer-range coupling such as $J_3$, which
adds next-neighbor form factors, can displace them.

The third is a general theorem. On odd-fold
bond stars the angular-momentum ladder of a chiral bond form factor
alternates in parity, so chirality enforces singlet--triplet mixing with
geometrically fixed content and universal amplitude ratios; a bond-local
glue mediates the two parity sectors with locked orbital strength; and the
weak-pairing Chern number of the mixed state follows the dominant
on-shell component, across an exact topological phase boundary inside
the chiral phase. No spin-orbit coupling or inversion-symmetry breaking
is required: the mixing is enforced by the odd-fold bond star itself, in
a centrosymmetric crystal, in contrast to the familiar
noncentrosymmetric route in which antisymmetric spin-orbit coupling
splits the bands and mixes singlet with
triplet~\cite{GorkovRashba2001,BauerSigrist2012}.

Section~\ref{sec:background} collects the background and standard tools
we use. Section~\ref{sec:kitaev} analyses the Kitaev exchange-tensor structure
and the anisotropic spin fluctuations that follow from it. Sections~\ref{sec:micro}
and~\ref{sec:wfunc} derive the pairing kernel and its weight function.
Section~\ref{sec:model1} constructs and solves Model~1;
Sec.~\ref{sec:offdiag} treats the off-diagonal exchanges $\Gamma,\Gamma',J_3$
of real materials, and Sec.~\ref{sec:screening} the resulting materials
screening. Section~\ref{sec:model2} constructs and solves Model~2.
Section~\ref{sec:alternation} proves the parity-alternation classification
and the topological counting rule for mixed states, and applies them to
quasicrystal superconductors (Sec.~\ref{sec:qcapp}).
Section~\ref{sec:compete} develops the competing-order phase diagram that
follows from sharing one glue kernel. Section~\ref{sec:discuss} positions the
results relative to known work, collects the mutually exclusive
experimental fingerprints of the two states, and summarizes the key points
for experiment and computation; Sec.~\ref{sec:conclusion} concludes. The appendices give the BdG diagonalization
(Appendix~\ref{app:bdg}), the parity decomposition and the exact projection
of the chiral kernel onto the $E_2$ doublet (Appendix~\ref{app:proj}), the
Ginzburg--Landau analysis of chirality slaving (Appendix~\ref{app:gl}), the
flux-sector argument (Appendix~\ref{app:flux}), the Brillouin-zone
asymptotics of the weight function (Appendix~\ref{app:bz}), the bond-tensor
diagonalization (Appendix~\ref{app:tensor}), and the numerical verification
of the bond-star theorems (Appendix~\ref{app:verify}).

\section{Background: definitions, tools, and standard results}
\label{sec:background}

This section collects the standard results we use: each subsection defines its
objects, and almost every result is either cited or derived here.

\subsection{The Kitaev honeycomb model and its exact solution}
\label{sec:bgkitaev}

The Kitaev honeycomb model~\cite{Kitaev2006} places one
spin-$1/2$ on each site of the two-dimensional honeycomb lattice, with a
bond-dependent Ising-like exchange:
\begin{equation}
H_K=-\sum_{\alpha\in\{x,y,z\}}J_\alpha
\sum_{\langle ij\rangle_\alpha}\sigma_i^\alpha\sigma_j^\alpha ,
\label{eq:kitaev}
\end{equation}
where $\langle ij\rangle_\alpha$ denotes the nearest-neighbor bonds of
orientation $\alpha$ (the three orientations tile the honeycomb lattice, each
site touching one bond of each type), and $\sigma_i^\alpha$ are Pauli
matrices. Our convention throughout is that the $\sigma$'s are Pauli matrices
(eigenvalues $\pm1$), so that $J_\alpha>0$ is ferromagnetic coupling; the
materials literature often writes spin operators $\mathbf S=\boldsymbol
\sigma/2$ instead, and we flag the switch explicitly where it occurs
(Sec.~\ref{sec:offdiag}). Two properties of Eq.~\eqref{eq:kitaev} matter for
us. First, the exchange is \emph{compass-like}: on an $\alpha$ bond only the
$\alpha$ spin components interact, so the interaction matrix in spin space is
the rank-one tensor $J_\alpha\delta_{\mu\alpha}\delta_{\nu\alpha}$ (used in
Sec.~\ref{sec:kitaev}). Second, the model is frustrated in an extreme way:
terms sharing a site anticommute, so no classical configuration minimizes all
bonds simultaneously---this is what stabilizes the spin liquid
\cite{Kitaev2006,Trebst2022}.

The exact solution proceeds by enlarging the
Hilbert space~\cite{Kitaev2006,Trebst2022}. At each site introduce four
Majorana fermions $b_i^x,b_i^y,b_i^z,c_i$: Hermitian operators obeying
\begin{equation}
\{b_i^\alpha,b_j^\beta\}=2\delta_{ij}\delta_{\alpha\beta},\qquad
\{c_i,c_j\}=2\delta_{ij},\qquad
\{b_i^\alpha,c_j\}=0 .
\label{eq:majanticomm}
\end{equation}
(Majorana fermions are ``real'' fermions, $\gamma^\dagger=\gamma$,
$\gamma^2=1$; two Majoranas combine into one complex fermion,
Eq.~\eqref{eq:majorana}.) One checks directly from
Eq.~\eqref{eq:majanticomm} that the operators
\begin{equation}
\tilde\sigma_i^\alpha\equiv i\,b_i^\alpha c_i
\label{eq:sigrep}
\end{equation}
are Hermitian, square to $1$, and obey the Pauli algebra
$\tilde\sigma^x\tilde\sigma^y=i\tilde\sigma^z$ and cyclic permutations. The
enlarged Hilbert space contains unphysical states; the physical subspace is
selected by the constraint $D_i\equiv b_i^xb_i^yb_i^zc_i=+1$ at every site,
and all physical observables commute with $D_i$, so matrix elements computed
within the physical sector are exact~\cite{Kitaev2006,Pedrocchi2011}.

Substituting Eq.~\eqref{eq:sigrep} into
Eq.~\eqref{eq:kitaev}, the bond term becomes
\begin{equation}
-\sigma_i^\alpha\sigma_j^\alpha
=-(ib_i^\alpha c_i)(ib_j^\alpha c_j)
=u_{ij}\,(ic_ic_j),\qquad u_{ij}\equiv i b_i^\alpha b_j^\alpha ,
\label{eq:udef}
\end{equation}
so that
\begin{equation}
H_K=\frac{i}{4}\sum_{\langle ij\rangle_\alpha}\hat J_\alpha u_{ij}\,c_ic_j,
\end{equation}
with $\hat J_\alpha$ absorbing the convention-dependent factor. The bond
operators $u_{ij}$ have eigenvalues $\pm1$ and commute with $H_K$ and with
each other~\cite{Kitaev2006}: they are a \emph{static} $Z_2$ gauge field. The
gauge-invariant content of a configuration $\{u_{ij}\}$ is the set of
plaquette fluxes
\begin{equation}
W_p\equiv\prod_{(ij)\in\partial p}u_{ij}=\pm1 ,
\label{eq:wp}
\end{equation}
the product of the bond variables around plaquette $p$; configurations
differing only by gauge transformations $b_i^\alpha\to-b_i^\alpha$ (which send
$u_{ij}\to-u_{ij}$ on all bonds at $i$) are physically equivalent. A
plaquette with $W_p=-1$ is said to carry a \emph{$Z_2$ flux} (a ``vison'').
By Lieb's theorem applied to the reflection-symmetric honeycomb lattice, the
ground state lies in the zero-flux sector, $W_p=+1$ on every
plaquette~\cite{Kitaev2006}.

In any fixed gauge sector the
$c_i$ fermions are free: $H_K$ is a quadratic Majorana hopping problem,
diagonalizable by Fourier transformation~\cite{Kitaev2006}. In the zero-flux
sector with the two-site unit cell of the honeycomb lattice, the Bloch
Hamiltonian has spectrum $\pm|g(\mathbf k)|$ with
\begin{equation}
g(\mathbf k)=J_z+J_x\,e^{ik_1}+J_y\,e^{ik_2},
\label{eq:g}
\end{equation}
where $k_1,k_2$ are the crystal momenta along the two Bravais vectors (unit
bond length throughout). The excitation gap closes where $g(\mathbf k)=0$,
which has solutions if and only if the three couplings obey the triangle
inequalities $|J_x|\le|J_y|+|J_z|$ etc.: this is the \emph{gapless B phase},
a spin liquid with a Dirac cone of matter Majoranas; otherwise the system is
in the \emph{gapped A phase}, equivalent to the toric code and supporting
Abelian anyons, which a magnetic field renders non-Abelian
\cite{Kitaev2006,Trebst2022}. The boundary between A and B phases, e.g.\
$J_z=2J$ at $J_x=J_y=J$, will reappear in the weight function of
Sec.~\ref{sec:wfunc}. The relation of an ordinary (complex) fermion to
Majoranas,
\begin{equation}
c_{i\uparrow}=\tfrac12(\gamma_{i1}+i\gamma_{i2}),\qquad
c_{i\downarrow}=\tfrac12(\gamma_{i3}+i\gamma_{i4}),
\label{eq:majorana}
\end{equation}
makes the spin correlations of the pure Kitaev model exactly computable: a
spin operator [Eq.~\eqref{eq:sigrep}] is one matter Majorana times one bond
Majorana, and both sectors are free or static. This is the technical basis for the pairing kernel
derived below (Sec.~\ref{sec:micro}).

The property of the model most relevant here is that its
ground-state spin correlations are
\emph{ultrashort-ranged and bond-diagonal}: $\langle\sigma_i^\mu
\sigma_j^\nu\rangle$ vanishes unless $\mu=\nu$ and the two sites are either
coincident or joined by a $\mu$ bond~\cite{Kitaev2006,Baskaran2007}. The
dynamical spin structure factor is correspondingly gapped by the flux-pair
energy and nonzero only above a threshold~\cite{Knolle2014}. Both statements
are proved by the flux-counting argument reproduced in full in
Sec.~\ref{sec:micro}A and Appendix~\ref{app:flux}; we do not quote them as
black boxes.

Exact solvability belongs to the pure model
\eqref{eq:kitaev}. Once Kondo coupling and doping are introduced the $u_{ij}$
no longer commute with the Hamiltonian, fluxes become dynamical, and the full
model is not integrable (Note~3 in Sec.~\ref{sec:bdg1}; numerical treatments
are then required~\cite{Keskiner2025}). Our use of the exact solution is
strictly limited to computing the correlators of the undoped ground state
that build the pairing kernel.

\subsection{From a bosonic bath to a pairing interaction}
\label{sec:bgglue}

Both models follow the same general pattern: itinerant
electrons couple linearly to a bosonic bath (spin fluctuations of the
local moments in Model~1, chiral phonons in Model~2), and integrating
out the bath at second order produces a retarded electron-electron
interaction. We collect
the standard steps here once, with citations, so that
Secs.~\ref{sec:micro} and~\ref{sec:model2} can quote them without
re-derivation.

For a coupling
$H'=g\sum_i O_i^{e}\,O_i^{b}$ between an electron operator $O^e$ and a bath
operator $O^b$, second-order perturbation theory in $g$ eliminates the bath
and generates the effective electron-electron interaction
\begin{equation}
V_{\mathrm{eff}}(\omega)=g^2\,\chi_{bb}(\omega),
\label{eq:genericglue}
\end{equation}
where $\chi_{bb}$ is the retarded correlation function of the bath operator.
This is the standard paramagnon/boson-exchange construction, derived e.g.\
in Refs.~\cite{BerkSchrieffer1966,Monthoux1991} for spin fluctuations and in
the textbook treatment of Ref.~\cite{Coleman2015} (Ch.~13 and App.); for
phonons it reduces to the familiar $g^2D(\mathbf q,\omega)$ with $D$ the
phonon propagator~\cite{Coleman2015}.

Whether the induced interaction is
attractive in the singlet or triplet channel is fixed by the spin algebra of
the vertex. For a spin-dependent bath coupling $\mathbf S_i\cdot\boldsymbol
\sigma$, the exchange of the bath between two electrons attaches the factor
$\boldsymbol\sigma_1\cdot\boldsymbol\sigma_2$ to the pair. Its eigenvalues on
the two-particle spin states follow from
$(\boldsymbol\sigma_1+\boldsymbol\sigma_2)^2=6+2\,\boldsymbol\sigma_1\!\cdot
\!\boldsymbol\sigma_2=4S(S+1)$:
\begin{equation}
\boldsymbol\sigma_1\!\cdot\!\boldsymbol\sigma_2
=2S(S+1)-3=
\begin{cases}
+1,&S=1\ \text{(triplet)},\\
-3,&S=0\ \text{(singlet)}.
\end{cases}
\label{eq:fierz}
\end{equation}
Thus ferromagnetic fluctuations ($\chi>0$ with our conventions) are
attractive in the triplet channel and repulsive in the singlet channel, with
a relative factor $3$; antiferromagnetic fluctuations reverse the
pattern~\cite{BerkSchrieffer1966,SigristUeda1991}. This single computation
underlies the sign analysis of Note~2 (Sec.~\ref{sec:bondlock}) and the
phase-diagram estimate of Sec.~\ref{sec:pd1}.

Superconductivity is an
instability of the Fermi surface in the Cooper (particle-particle) channel.
Linearizing the self-consistency (gap) equation at $T_c$ for an interaction
$V(\vk,\vk')$ gives~\cite{Coleman2015}
\begin{equation}
\Delta(\vk)=-\sum_{\vk'}V(\vk,\vk')\,
\frac{\tanh(\xi_{\vk'}/2k_BT_c)}{2\xi_{\vk'}}\,\Delta(\vk'),
\label{eq:lingap}
\end{equation}
with $\xi_{\vk}$ the normal-state dispersion measured from the chemical
potential. The logarithmic divergence of the Cooper susceptibility,
$\int_0^{\omega_D}\!d\xi\,\tanh(\xi/2T)/\xi\simeq\ln(1.13\,\omega_D/k_BT)$
for a cutoff $\omega_D$ \cite{Coleman2015}, converts even an infinitesimal
attractive channel eigenvalue $\lambda$ into a finite
\begin{equation}
T_c=1.13\,\omega_D\,e^{-1/\lambda},
\label{eq:bcs}
\end{equation}
the BCS amplification that will recur whenever we quote eigenvalue ratios
[e.g.\ Eq.~\eqref{eq:eigratio}]. For an attractive \emph{separable} kernel
of rank one, $V(\vk,\vk')=-v\,f(\vk)f^\ast(\vk')$ with $v>0$, substitution
into Eq.~\eqref{eq:lingap} shows immediately that the only eigenvector with
nonzero eigenvalue is $f(\vk)$ itself, with channel eigenvalue
\begin{equation}
\lambda=v\,N(0)\,\big\langle|f|^2\big\rangle_{\mathrm{FS}},
\label{eq:rankone}
\end{equation}
where $N(0)$ is the density of states per spin at the Fermi level and
$\langle\cdot\rangle_{\mathrm{FS}}$ the Fermi-surface average. The condition
$1=V\chi$ for the instability temperature of a given channel is the Thouless
criterion used in Corollary~\ref{cor:seq} and Sec.~\ref{sec:pd1}. The
particle-hole channel obeys the analogous criterion with the corresponding
susceptibility (Sec.~\ref{sec:compete}).

\subsection{The pairing matrix: singlet-triplet decomposition}
\label{sec:bgmatrix}

A general spin-dependent pair
amplitude is a $2\times2$ complex matrix $\hat\Delta(\vk)$ in spin space.
Anticommutation of the electron operators imposes
\begin{equation}
\hat\Delta(\vk)=-\hat\Delta^{\mathsf T}(-\vk).
\label{eq:antisym}
\end{equation}
The four-dimensional complex space of $2\times2$ matrices splits into the
antisymmetric matrix $i\sigma_y$ and the three symmetric matrices $\sigma_i
i\sigma_y$ ($i=x,y,z$). Symmetry is checked directly: $(i\sigma_y)^{\mathsf
T}=-i\sigma_y$, while for each Pauli matrix $(\sigma_i i\sigma_y)^{\mathsf
T}=(i\sigma_y)^{\mathsf T}\sigma_i^{\mathsf T}=+\sigma_i i\sigma_y$ [using
$\sigma_{x,z}^{\mathsf T}=\sigma_{x,z}$, $\sigma_y^{\mathsf T}=-\sigma_y$,
and the anticommutation of distinct Pauli matrices]. Equation
\eqref{eq:antisym} then forces the expansion
\begin{equation}
\hat\Delta(\vk)=\big[\Delta_0(\vk)+\mathbf d(\vk)
\cdot\boldsymbol\sigma\big]i\sigma_y ,
\label{eq:gapexp}
\end{equation}
with an \emph{even} singlet amplitude $\Delta_0(\vk)=\Delta_0(-\vk)$ and an
\emph{odd} triplet vector $\mathbf d(\vk)=-\mathbf d(-\vk)$
\cite{SigristUeda1991,Vollhardt1990}. The pairing-matrix space is thereby
exhausted: every two-component pairing state, in any material, is a point in
this singlet$\,\oplus\,$triplet space. This is what Sec.~\ref{sec:intro} means by that both models ``remain within the
singlet-triplet framework''.

For a triplet state, the
vector $\mathbf d$ encodes the spin structure of the pair: the component
$d_i$ is the amplitude of the spin-zero projection along axis $i$, so a state
with $\mathbf d\parallel\hat z$ pairs the combinations
$|\!\uparrow\uparrow\rangle$ and $|\!\downarrow\downarrow\rangle$ with
respect to the $z$ axis (equal-spin pairing) and has no $S_z=0$ component;
the mapping $\hat\Delta=\mathbf d\cdot\boldsymbol\sigma\,i\sigma_y$ makes
this explicit, since $\sigma_z i\sigma_y=\begin{pmatrix}1&0\\0&-1\end{pmatrix}$
is diagonal in spin while $\sigma_{x,y}i\sigma_y$ are off-diagonal
\cite{SigristUeda1991}. A triplet state is called \emph{unitary} if
$i\,\mathbf d\times\mathbf d^\ast=0$, equivalently if $\mathbf d$ has a
common overall phase times a real vector; otherwise the two spin populations
see different gaps $|\mathbf d|\pm|\mathbf d\times\mathbf d^\ast|/|\mathbf
d|$ [the spectrum formula \eqref{eq:generalE} with $\Delta_0=0$;
Refs.~\cite{SigristUeda1991,Vollhardt1990}]. Unitary triplets preserve
time-reversal symmetry up to the overall phase and therefore show no
spontaneous Kerr rotation or thermal Hall signal---this is Corollary
\ref{cor:unitary} of Model~1.

The Knight shift is the NMR frequency shift produced by
the spin polarization of the electrons, and its drop below $T_c$ measures the
spin susceptibility $\chi_s$ of the condensate~\cite{SigristUeda1991}. The
two canonical behaviors are: (i) singlet---$\chi_s\to0$ as $T\to0$ because
the pairs are spin singlets (the Yosida-function suppression); (ii) unitary
triplet with $\mathbf d$ vector---$\chi_s$ is a tensor, unchanged for fields
along $\mathbf d$ (the pairs have no spin projection along $\mathbf d$ to
break) and suppressed for fields perpendicular to it
\cite{SigristUeda1991,Vollhardt1990}. An anisotropic, bond-locked $\mathbf d$
vector (Model~1) therefore predicts a bond-resolved anisotropic Knight shift
(Sec.~\ref{sec:sign1}).

\subsection{BdG theory, Chern numbers, edge modes, and the thermal Hall conductance}
\label{sec:bgbdg}

Mean-field pairing replaces the interaction by the
anomalous expectation values $\hat\Delta(\vk)$; defining the Nambu spinor
$\Psi_{\vk}=(c_{\vk\uparrow},c_{\vk\downarrow},
c^\dagger_{-\vk\uparrow},c^\dagger_{-\vk\downarrow})^{\mathsf T}$ (sublattice
indices suppressed; see Appendix~\ref{app:bdg} for the multiband status), the
quadratic Hamiltonian is $H_{\mathrm{BdG}}=\frac12\sum_{\vk}
\Psi_{\vk}^\dagger H_{\mathrm{BdG}}(\vk)\Psi_{\vk}$ with
\begin{equation}
H_{\mathrm{BdG}}(\vk)=
\begin{pmatrix}
\xi_{\vk}\,\mathbf 1 & \hat\Delta(\vk)\\
\hat\Delta^\dagger(\vk) & -\xi_{\vk}\,\mathbf 1
\end{pmatrix}.
\label{eq:bdg}
\end{equation}
Particle-hole symmetry---the identity
$\Xi H_{\mathrm{BdG}}(\vk)\Xi^{-1}=-H_{\mathrm{BdG}}(-\vk)$ with
$\Xi=\tau_x K$ built from Nambu-space Pauli matrices and complex
conjugation---pairs every eigenvalue $E_{\vk}$ with $-E_{-\vk}$; only the
positive branch is physical~\cite{ReadGreen2000,Nayak2008}. Squaring
Eq.~\eqref{eq:bdg} and using that the scalar $\xi_{\vk}$ commutes with
$\hat\Delta$, the off-diagonal blocks vanish and the spectrum follows from
$\hat\Delta\hat\Delta^\dagger$ alone; the full algebra is displayed in
Appendix~\ref{app:bdg} and gives
\begin{equation}
E_{\vk,\pm}^2=\xi_{\vk}^2+|\Delta_0|^2+|\mathbf d|^2
\pm|\mathbf B|,
\label{eq:generalE}
\end{equation}
with
$\mathbf B=\Delta_0\mathbf d^\ast+\Delta_0^\ast\mathbf d
+i\,\mathbf d\times\mathbf d^\ast$.
For a real kernel (relative phase $0$ or $\pi$) this reduces to
$E^2=\xi^2+(|\Delta_0|\pm|\mathbf d|)^2$; for the unitary $\pi/2$-locked case
$\mathbf B=0$ and the square-sum form is recovered (Appendix~\ref{app:bdg}).

A $2\times2$ Nambu block with
a chiral gap $\Delta(\vk)=|\Delta(\vk)|e^{i\phi(\vk)}$ defines a texture
$\hat{\mathbf d}(\vk)$ of the effective two-level Hamiltonian
$H_{\mathrm{block}}=\boldsymbol{\mathcal D}(\vk)\cdot\boldsymbol\tau$, and
its Chern number---the Berry flux of the lower band through the Brillouin
zone---is
\begin{equation}
C_{\mathrm{block}}=\frac{1}{4\pi}\int_{\mathrm{BZ}}d^2k\;
\hat{\mathbf d}\cdot\big(\partial_{k_x}\hat{\mathbf d}\times
\partial_{k_y}\hat{\mathbf d}\big),
\label{eq:chernblock}
\end{equation}
the integer counting how many times the texture wraps the Bloch sphere
\cite{ReadGreen2000,Nayak2008}. We fix the sign convention
$C_{\mathrm{block}}=+1$ for $\Delta\propto k_x+ik_y$ at $\mu>0$
\cite{ReadGreen2000}. In the weak-pairing limit ($|\Delta|$ much smaller than
the Fermi energy, $\mu>0$) the texture is trivial except in a thin shell
around the Fermi surface, where $\xi_{\vk}\approx0$; the integral then
collapses to the winding number of the gap phase along the Fermi surface,
\begin{equation}
C_{\mathrm{block}}=\frac{1}{2\pi}\oint_{\mathrm{FS}}d\phi(\vk),
\label{eq:weakwinding}
\end{equation}
the Read--Green formula~\cite{ReadGreen2000}. Every Chern-number statement below is an application of Eq.~\eqref{eq:weakwinding} to an explicitly
known gap function, so the topology of our states reduces to counting the
winding of a trigonometric polynomial---the task solved exactly in
Sec.~\ref{sec:counting}.

A two-dimensional superconductor with block Chern numbers summing to $C$
supports $|C|$ chiral Majorana edge branches per edge
\cite{ReadGreen2000,Volovik2003,Nayak2008}. Each chiral Majorana branch is a
one-dimensional chiral conformal field theory of central charge $c=1/2$
carrying the thermal Hall (heat) current of such a mode; summing the
branches gives the quantized low-temperature thermal Hall conductance
\begin{equation}
\frac{\kappa_{xy}}{T}=c\,\frac{\pi^2k_B^2}{3h}
\equiv c\,\kappa_0,\qquad c=\frac{C}{2},
\label{eq:thallunit}
\end{equation}
where $\kappa_0\equiv\pi^2k_B^2/(3h)$ is the thermal Hall quantum per unit
central charge~\cite{ReadGreen2000,Volovik2003}; the half-quantized plateau
reported for $\alpha$-RuCl$_3$ corresponds to $c=1/2$, a single Majorana
branch~\cite{Kasahara2018}. Equation \eqref{eq:thallunit} is why we track
$C$: the thermal Hall plateau reads off the relative angular momentum of the
Cooper pair. ``Surface Fermi arcs'', by contrast, are a property of
three-dimensional (e.g.\ layered) generalizations with Weyl or Weyl-nodal
structure~\cite{Wan2011}; the boundary of a strictly two-dimensional chiral
superconductor is a one-dimensional chiral edge mode.

\subsection{Three mathematical identities used throughout}
\label{sec:bgmath}

The analysis below repeatedly uses three elementary identities for
bond form factors. We state each with its proof, so that later sections
can quote them by name.

\emph{(i) Roots-of-unity filter.} For integers $n\ge2$ and $q$,
\begin{equation}
\sum_{j=0}^{n-1}e^{2\pi i q j/n}=n\,\delta_{q\equiv0\,(\mathrm{mod}\,n)}.
\label{eq:rootfilter}
\end{equation}
\emph{Proof:} if $e^{2\pi iq/n}=1$ every term is $1$ and the sum is $n$;
otherwise it is the geometric series
$(e^{2\pi iq}-1)/(e^{2\pi iq/n}-1)=0$. \qed

\emph{(ii) Jacobi--Anger expansion.}
\begin{equation}
e^{ik\cos\alpha}=\sum_{h=-\infty}^{\infty}i^{h}J_{h}(k)\,e^{ih\alpha},
\label{eq:jacobianger}
\end{equation}
with $J_h$ the Bessel function of the first kind. \emph{Proof:} the
generating function of the Bessel functions,
$e^{k(t-t^{-1})/2}=\sum_h J_h(k)\,t^h$ (definition of $J_h$ as Laurent
coefficients), evaluated at $t=ie^{i\alpha}$ gives
$k(t-t^{-1})/2=ik\cos\alpha$. \qed\ The symmetry
$J_{-h}(k)=(-1)^hJ_h(k)$ [read off the generating function under
$t\to-1/t$] implies
\begin{equation}
i^{h}J_{h}(k)=i^{|h|}J_{|h|}(k)\qquad(h<0),
\label{eq:besselsym}
\end{equation}
the identity that makes all ladder Fourier coefficients positive multiples of
$i^{|h|}$ in Lemma~\ref{lem:circle}. At small argument,
\begin{equation}
J_\nu(k)=\frac{(k/2)^\nu}{\nu!}\big[1+O(k^2)\big],
\qquad \nu\ge0,
\label{eq:besselsmall}
\end{equation}
from the Taylor series defining $J_\nu$.

\emph{(iii) Circular Fermi-surface average of a plane wave.} On a circular
Fermi surface of radius $k$,
\begin{equation}
\big\langle e^{i\vk\cdot\vd}\big\rangle_{\mathrm{FS}}
\equiv\frac{1}{2\pi}\int_0^{2\pi}d\theta\,e^{ik|\vd|\cos\theta}
=J_0(k|\vd|),
\label{eq:fsaverage}
\end{equation}
which is the standard integral representation of $J_0$ [and the $h=0$ case of
Eq.~\eqref{eq:jacobianger}]. Equation \eqref{eq:fsaverage} converts every
Fermi-surface overlap of bond phase factors into a Bessel function of the
bond separation; this is the origin of the $J_0$'s in the channel weights
\eqref{eq:Wg}--\eqref{eq:Wp} and \eqref{eq:f2}--\eqref{eq:p2}.

\subsection{Conventions and notation}
\label{sec:notation}

We set $\hbar=k_B=1$ except where units are displayed, and the nearest-neighbor
bond length to $1$ (so momenta are measured in inverse bond lengths and the
zone-corner magnitude is $|K|=\frac{4\pi}{3\sqrt3}\approx2.42$). Pauli matrices are
used for the local moments [Eq.~\eqref{eq:kitaev}] except in
Sec.~\ref{sec:offdiag}, where the materials convention with spin-$1/2$
operators $\mathbf S=\boldsymbol\sigma/2$ is adopted and flagged.
$\langle\cdot\rangle_{\mathrm{BZ}}$ denotes the Brillouin-zone average
normalized to unity; $\langle\cdot\rangle_{\mathrm{FS}}$ the Fermi-surface
average. $J_\nu$ always denotes the Bessel function of the first kind---not
an exchange coupling---and $J_x,J_y,J_z$ (or $J,\Gamma,\Gamma',K$ in
Sec.~\ref{sec:offdiag}) the exchanges. Table~\ref{tab:notation} collects the
recurring symbols.

\begin{table}[t]
\caption{\label{tab:notation}Recurring notation. All symbols are defined at
first use in the text; this table is a lookup aid.}
\begin{ruledtabular}
{\scriptsize\setlength{\tabcolsep}{1pt}
\begin{tabular}{ll}
Symbol & meaning\\
\hline
$J_\alpha$, $J_K$, $t$ & Kitaev exchange, Kondo coupling, hopping\\
$u_{ij}$, $W_p$, $\Delta_F$ & $Z_2$ bond variable, plaquette flux, flux gap\\
$\chi_\alpha$, $m_\alpha$, $r_\alpha$ & bond susceptibility: static, equal-time, dynamical factor\\
$w_\alpha$, $v_\alpha$ & kernel weight $J_K^2\chi_\alpha$; same quantity per bond\\
$\vd_\alpha$, $\phi_\alpha$ & nearest-neighbor bond vector, its azimuth $2\pi\alpha/3$\\
$\hat\Delta$, $\Delta_0$, $\mathbf d$ & pairing matrix, singlet amplitude, triplet vector\\
$f_\pm$, $g_\pm$, $p_\pm$ & chiral form factor and its even/odd parts\\
$\chi_m^{(n)}$, $\chi^{(e/o)}$ & general bond-star form factor; parity-pure combinations\\
$W_\nu(k)$, $\varepsilon$ & on-shell weight $nJ_\nu(k)$; ladder tail\\
$C$, $c$, $\kappa_0$ & BdG Chern number, chiral central charge, $\pi^2k_B^2/3h$\\
$R$ & anisotropy transfer coefficient [Eq.~\eqref{eq:kappa}]\\
$A_J$, $A_K$, $A_w$ & anisotropy measures [Eqs.~\eqref{eq:criterion}--\eqref{eq:Aw}]\\
\end{tabular}}
\end{ruledtabular}
\end{table}

\section{Kitaev exchange tensor and anisotropic fluctuations}
\label{sec:kitaev}

This section makes two structural observations specific to the pairing
problem: the decomposition of the exchange into bond-summed (uniform)
and bond-resolved parts, and the resulting component-weighted
fluctuation ansatz.

\subsection{Exchange-tensor structure}

The exchange on a $\gamma$ bond of Eq.~\eqref{eq:kitaev} is the rank-1 tensor
$T^{(\gamma)}_{\mu\nu}=J_\gamma\,\delta_{\mu\gamma}\delta_{\nu\gamma}$: only
the $\gamma$ spin components interact. There is no canonical operator decomposition of a single-bond term into a
Heisenberg part plus a remainder: any such split depends on the chosen
reference coupling. What is unambiguous is the separation of the
site-summed exchange tensor into its trace (uniform) and traceless
(bond-resolved) parts. Writing $J_\alpha=\bar J+\delta J_\alpha$ with
$\bar J=\tfrac13(J_x+J_y+J_z)$ as parameter bookkeeping, the exchange
tensor summed over the three bonds meeting at one site
\begin{equation}
\begin{split}
S_{\mu\nu}&=\sum_{\gamma}T^{(\gamma)}_{\mu\nu}
=\mathrm{diag}(J_x,J_y,J_z)\\
&=\bar J\,\delta_{\mu\nu}
+\mathrm{diag}(\delta J_x,\delta J_y,\delta J_z)
\end{split}
\label{eq:tbond}
\end{equation}
is bond-diagonal (each diagonal entry comes from exactly one bond type,
and off-diagonal entries vanish) but is \emph{not} isotropic in general;
only at $J_x=J_y=J_z$ does it reduce to $\bar J\delta_{\mu\nu}$. Its
isotropic (trace) part is $\bar J\,\delta_{\mu\nu}$ with
$\bar J=\tfrac13\mathrm{Tr}\,S$. Physically, uniform
($\mathbf q\approx0$) spin fluctuations feel the sum of the exchanges on
the three bonds meeting at a site, whose trace part is the
Heisenberg-like $\bar J$ (anisotropic corrections being
$O(\delta J_\gamma)$), whereas bond-resolved correlations experience the
individual $J_\alpha$. The exact spin correlations of the pure Kitaev model are
ultrashort-ranged and bond-diagonal [Sec.~\ref{sec:bgkitaev};
Refs.~\cite{Kitaev2006,Baskaran2007}]; the pairing kernels below are
therefore naturally bond-resolved, while the uniform channel enters the
competition analysis of Sec.~\ref{sec:pd1}.

The anisotropy tensor
$T_{\alpha\beta}=\mathrm{diag}(\delta J_x,\delta J_y,\delta J_z)$ is symmetric.
Its antisymmetric part---which would be the Dzyaloshinskii--Moriya (DM) cross
product $\mathbf S_i\times\mathbf S_j$, the time-reversal-even chiral
exchange familiar from noncentrosymmetric magnets
\cite{Rau2014,Winter2017}---vanishes identically: the Kitaev model contains
no DM term, and the anisotropic exchange excites a symmetric
spin-quadrupole-type fluctuation structure rather than a wedge-type cross
product. This structural observation underlies the glue chain of Model~1
(Note~1, Sec.~\ref{sec:model1}).

\subsection{Anisotropic spin fluctuations}

When the three exchanges differ, the local-moment fluctuation spectrum inherits the
exchange anisotropy: to leading order the component strengths are weighted by
$J_\alpha$,
\begin{equation}
\chi_{\alpha\beta}(\mathbf q,\omega)=w_\alpha(\mathbf q,\omega)
\,\delta_{\alpha\beta},\qquad w_\alpha\propto J_\alpha/\omega_\alpha ,
\label{eq:chi}
\end{equation}
that is, anisotropic exchange weights the fluctuation strength by component
and by bond direction. This structure is the source of the anisotropic
pairing kernel of Model~1. The weighting receives qualitative support from
the exactly known spin correlations of the pure Kitaev model (bond-diagonal,
component-diagonal, and short-ranged---Sec.~\ref{sec:bgkitaev}
\cite{Kitaev2006,Baskaran2007}); we use Eq.~\eqref{eq:chi} as a
phenomenological ansatz whose microscopic basis is established in
Secs.~\ref{sec:micro}--\ref{sec:wfunc}, where $w_\alpha$ is computed in
closed form.

\section{Microscopic origin of the pairing kernel}
\label{sec:micro}

The itinerant electrons hop on the same honeycomb lattice and couple to the
local moments by a Kondo exchange. The full Hamiltonian is
\begin{align}
H&=H_K+H_t+H_J,\nonumber\\
H_t&=-t\sum_{\langle ij\rangle,\sigma}
(c^\dagger_{i\sigma}c_{j\sigma}+\mathrm{h.c.}),\nonumber\\
H_J&=J_K\sum_i\mathbf
S_i\cdot(\psi_i^\dagger\boldsymbol\sigma\psi_i),
\label{eq:fullH}
\end{align}
with $H_K$ of Eq.~\eqref{eq:kitaev}. Here $\mathbf S_i=\boldsymbol\sigma_i/2$;
the resulting factor $1/4$ in the second-order kernel is absorbed into the
definition of $J_K$ throughout. This section derives the pairing kernel
that results at second order in $J_K$: the flux-sector selection rule of the
pure Kitaev model forces exactly the component-diagonal, nearest-neighbor
separable, rank-one-in-spin-space-per-bond structure on which Model~1 rests. The logic is:
selection rule (Sec.~\ref{sec:micro}A) $\Rightarrow$ kernel structure
(Sec.~\ref{sec:micro}B) $\Rightarrow$ channel eigenvalues
(Sec.~\ref{sec:parityeig}); the weight magnitudes are then computed in
Sec.~\ref{sec:wfunc}.

\subsection{The flux-sector selection rule}
\label{sec:selection}

Write the spins in the Majorana representation, $\sigma_i^\alpha=i
b_i^\alpha c_i$ [Eq.~\eqref{eq:sigrep}], with the bond variables
$u_{ij}=ib_i^\alpha b_j^\alpha$ conserved, $[u_{ij},H_K]=0$
[Eq.~\eqref{eq:udef}]. The ground state lies in the zero-flux sector
(Sec.~\ref{sec:bgkitaev}). The central structural fact is the exact
short-rangedness of the spin correlation function
\cite{Baskaran2007,Knolle2014}: in the zero-flux ground state,
\begin{align}
\chi^{\mu\nu}_{ij}(t)&\equiv-i\theta(t)\,
\big\langle[\sigma_i^\mu(t),\sigma_j^\nu(0)]\big\rangle\;\neq0
\nonumber\\
&\Longleftrightarrow\quad
\mu=\nu\ \ \text{and}\ \ i=j\ \text{or}\ (ij)\ \text{a }\mu\text{-bond}.
\label{eq:selection}
\end{align}
The reason is flux-sector orthogonality, not integrability in any deeper
sense, and the statement holds at every frequency. The argument, spelled out
step by step in Appendix~\ref{app:flux}, is: the operator $\sigma_j^\nu$
contains $b_j^\nu$, which toggles the bond variable on the $\nu$ bond at $j$
($u\to-u$) and therefore creates a flux pair on the two plaquettes sharing
that bond; the matrix element
$\langle\mathrm{GS}|\sigma_i^\mu(t)\sigma_j^\nu(0)|\mathrm{GS}\rangle$
vanishes unless $\sigma_i^\mu$ annihilates the same flux pair, which requires
the same component on the same bond (or $i=j$, where the matter-Majorana
insertion toggles twice and no flux is created). Since the flux label is
conserved under $H_K$, this bookkeeping is exact at all times and
frequencies. Equation \eqref{eq:selection} is the exact dynamical spin
structure factor statement that the pure model's correlations live on one
site and one bond only~\cite{Baskaran2007,Knolle2014}.

\subsection{The pairing kernel at second order in $J_K$}
\label{sec:kernel}

Integrating out the local moments at second order in the Kondo coupling
produces the retarded itinerant interaction
$V_{\mu\nu}(ij;\omega)=J_K^2\,\chi^{\mu\nu}_{ij}(\omega)$
[Eq.~\eqref{eq:genericglue};
Refs.~\cite{BerkSchrieffer1966,Monthoux1991,Coleman2015}]. By
\eqref{eq:selection} only on-site and nearest-neighbor terms survive, and
the spin structure on an $\alpha$ bond is the rank-one projector
$\delta_{\mu\alpha}\delta_{\nu\alpha}$. Fourier transforming the surviving
bond term---the phase factor $e^{i(\vk-\vk')\cdot\mathbf R}$ summed over the
positions $\mathbf R$ of the $\alpha$ bonds collapses, in the single-band
reduction, to the cosine of the momentum transfer along the bond direction---
gives in momentum space (sublattice indices suppressed; the single-band
status is that of Appendix~\ref{app:bdg}):
\begin{align}
V_{\mu\nu}(\mathbf k,\mathbf k';i\omega_n)
=J_K^2\sum_{\alpha}\chi_\alpha(i\omega_n)\,
\delta_{\mu\alpha}\delta_{\nu\alpha}
\nonumber\\ \times
\cos\big[(\mathbf k-\mathbf k')\cdot\boldsymbol\delta_\alpha\big]
+V_{\mathrm{on}},
\label{eq:kernelmicro}
\end{align}
where $\chi_\alpha$ is the susceptibility on an $\alpha$ bond,
$\boldsymbol\delta_\alpha=(\cos\phi_\alpha,\sin\phi_\alpha)$ with
$\phi_\alpha=2\pi\alpha/3$ the nearest-neighbor vectors, and
$V_{\mathrm{on}}$ is the momentum-independent on-site remainder (a
spin-channel contact term). It is structurally irrelevant to the triplet
sector: a momentum-independent kernel projects only onto the $m=0$
harmonic on the Fermi surface, and the angular integrals
$\int_0^{2\pi}d\phi\,e^{im\phi}$ with odd $m$ vanish identically; at its
only landing spot, the $s$-wave singlet channel, it enters with the
reversed Fierz sign [Eq.~\eqref{eq:fierz}], repulsive for precisely the
ferromagnetic sign of the correlations that makes the bond terms
attractive in the triplet. Hence $V_{\mathrm{on}}$ does not enter the pairing
channels below. This is the kernel class on which
Model~1 (Sec.~\ref{sec:model1}) is built: component-diagonal,
nearest-neighbor separable, rank one in spin space per bond---here derived from the
flux-sector selection rule rather than assumed.

\begin{remark}[Retardation window]
The bond susceptibility is gapped by flux physics: a single spin flip
creates a flux pair (Sec.~\ref{sec:selection}), so the absorptive part
$\chi_\alpha''(\omega)=0$ below the two-flux threshold $2\Delta_F$
(the nearest-neighbor flux-pair gap is isotropically
$2\Delta_F\approx0.26|J|$ \cite{Knolle2014,Pedrocchi2011}). Since the resulting $T_c$ depends exponentially
on $1/(J_K^2\chi_\alpha)$ [Eq.~\eqref{eq:bcs}], the pairing frequencies
satisfy $\omega_{\mathrm{pair}}\sim T_c\ll2\Delta_F$ self-consistently, and
the interaction is effectively static: what enters the gap equation is
$\chi_\alpha\equiv\chi_\alpha(i\omega_n\to0)$, the static bond
susceptibility. The dynamical shape of $\chi_\alpha''(\omega)$ modifies
only the prefactor, an order-unity retardation factor $r_\alpha$ computed
in Sec.~\ref{sec:wfunc}C [Eq.~\eqref{eq:rnum}]. The same hierarchy $T_c\ll2\Delta_F$ also
controls the thermal correction to the equal-time correlator $m_\alpha$
itself: at $T\sim T_c$ the deviation from its zero-temperature value
comes from thermally excited flux pairs and matter fermions---of order
$e^{-2\Delta_F/T_c}$ for the former (the flux gap stays finite throughout
the phase diagram, Sec.~\ref{sec:selection}) and of order $(T_c/|J|)^2$
for the latter in the gapless phase (exponentially suppressed in the
gapped phase)---both controlled small parameters relative to the static
value.
\end{remark}

\subsection{Channel competition: the parity-sector eigenvalue ratio}
\label{sec:parityeig}

The kernel \eqref{eq:kernelmicro} is diagonal in bond space but not in
parity: each bond term decomposes exactly as
\begin{align}
\cos\big[(\mathbf k-\mathbf k')\cdot\boldsymbol\delta_\alpha\big]
&=\cos(\mathbf k\!\cdot\!\boldsymbol\delta_\alpha)
 \cos(\mathbf k'\!\cdot\!\boldsymbol\delta_\alpha)\nonumber\\
&\quad+\sin(\mathbf k\!\cdot\!\boldsymbol\delta_\alpha)
 \sin(\mathbf k'\!\cdot\!\boldsymbol\delta_\alpha),
\label{eq:rank2micro}
\end{align}
which is simply the cosine addition formula; under $\vk\to-\vk$ the first
term is even (singlet channel) and the second odd (triplet channel) by the
antisymmetry constraint of Sec.~\ref{sec:bgmatrix}. The same glue therefore
mediates the even and odd chiral channels with locked strength
$v_\alpha=J_K^2\chi_\alpha$---the first instance of the locked two-sector
mediation generalized in Proposition~\ref{prop:locked}. On a circular
Fermi surface of radius $k_F$ the chiral channel eigenvalues follow from
the rank-one formula \eqref{eq:rankone} as
$\lambda_{s}=vN(0)\,W_g(k_F)$ and $\lambda_{t}=vN(0)\,W_t(k_F)$, with
the exact Fermi-surface weights
\begin{align}
W_g(k)&=\tfrac32\big[1-J_0(\sqrt3\,k)+J_0(2k)-J_0(k)\big]
\label{eq:Wg}\\
&=\tfrac{9}{64}\,k^4+O(k^6),\nonumber\\[1mm]
W_t(k)&=\tfrac32\big[1-J_0(\sqrt3\,k)-J_0(2k)+J_0(k)\big]
\label{eq:Wp}\\
&=\tfrac{9}{4}\,k^2+O(k^4),\nonumber
\end{align}
which are nothing but the Fermi-surface averages
$W_g=\langle|g_\pm|^2\rangle_{\mathrm{FS}}$ and
$W_t=\langle|p_\pm|^2\rangle_{\mathrm{FS}}$ of the even and odd parts of the
three-bond form factor (the triplet weight $W_t$ is unrelated to the
plaquette flux $W_p$ of Eq.~\eqref{eq:wp}), evaluated in closed form in
Appendix~\ref{app:weights} [Eqs.~\eqref{eq:f2}--\eqref{eq:p2}] via the
circular average identity \eqref{eq:fsaverage}; the small-$k$ expansions
follow from Eq.~\eqref{eq:besselsmall} (the $k^2$ cancellation in $W_g$ is
exact, order by order: the three bracketed $J_0$'s carry $k^2$ coefficients
$3/4$, $-1$, $+1/4$, which sum to zero). Therefore
\begin{equation}
\frac{\lambda_t}{\lambda_s}=\frac{W_t(k_F)}{W_g(k_F)}
=\frac{16}{k_F^2}\,\big[1+O(k_F^2)\big],
\label{eq:eigratio}
\end{equation}
numerically $\approx1.6\times10^3$ at $k_F=0.1$, $\approx63$ at
$k_F=0.5$, and $\approx24$ at $k_F=0.8$. Two structural facts further
confine the bond-locked spin glue to the odd sector. First, the bond-axis
spin structure pairs equal spin components along the locking axis, so the
even-parity half of the kernel is Pauli-inaccessible to the bond-axis
triplet (in the singlet channel the same kernel instead enters with the
reversed, repulsive sign for $w_\alpha>0$ [Eq.~\eqref{eq:fierz}], not by
exclusion). Second, the leading odd-parity $m=0$ combination
$\sum_\alpha\sin(\mathbf k\cdot\boldsymbol\delta_\alpha)$ is an $f$-wave
object of weight $O(k_F^6)$ [its $k^1$ term vanishes by
$\sum_\alpha\vd_\alpha=0$, and the cubic term reconstructs the $f$-wave
harmonic], parametrically below $W_t=O(k_F^2)$. The chiral triplet $p\pm ip$
is thus the leading chiral instability of the bond-locked spin glue, by a
margin that BCS amplification [$T_c\propto e^{-1/\lambda}$,
Eq.~\eqref{eq:bcs}] renders exponential in the coupling.

A final step translates the channel eigenvalue ratio into the condensate
amplitude ratio---the input of the dominance theorem
(Theorem~\ref{thm:dominance}, stated and proved in
Sec.~\ref{sec:counting}). Anticipating that result, the relevant
comparison threshold is
$|a_t/a_s|>J_2(k_F)/J_1(k_F)\simeq k_F/4$ at small $k_F$
[Eq.~\eqref{eq:dominance}]; we quote it here to make the following
comparison self-contained. The linearized gap equations of the two
parity channels are independent, so each has its own transition
temperature given by Eq.~\eqref{eq:bcs}, and
\begin{equation}
\ln\frac{T_c^{(t)}}{T_c^{(s)}}
=\frac{1}{\lambda_s}-\frac{1}{\lambda_t}
\simeq\frac{1}{\lambda_s}\Big(1-\frac{k_F^2}{16}\Big),
\label{eq:tcsep}
\end{equation}
i.e.\ the two $T_c$'s are exponentially separated: in the wide window
$T_c^{(s)}<T<T_c^{(t)}$ only the triplet component condenses ($a_s=0$)
and no mixed state appears at all. The singlet component enters only
below $T_c^{(s)}$, where the channel-resolved weak-coupling expansion
$D^2\propto\lambda(1-T/T_c)$ (Proposition~\ref{prop:bondlock};
Ref.~\cite{Coleman2015}; the kernel is diagonal in parity, so the
quadratic terms of the two channels decouple exactly and cross-channel
coupling appears only at quartic order) gives
\begin{equation}
\Big|\frac{a_t}{a_s}\Big|^2
=\frac{\lambda_t}{\lambda_s}\,
\frac{1-T/T_c^{(t)}}{1-T/T_c^{(s)}}
\ge\frac{\lambda_t}{\lambda_s}\simeq\frac{16}{k_F^2},
\label{eq:ampratio}
\end{equation}
the second factor being no smaller than $1$ for $T\le T_c^{(s)}$ (the
zero-temperature BCS value $|a_t/a_s|\sim e^{1/\lambda_s-1/\lambda_t}$
is larger still). Comparing with the dominance threshold
\eqref{eq:dominance}: the triplet side requires $|a_t/a_s|>k_F/4$, and
the lower bound $4/k_F$ on the amplitude ratio clears that threshold by
exactly a factor $16/k_F^2$---the eigenvalue ratio itself. In the leading
two-harmonic approximation, $|a_t/a_s|\gtrsim\sqrt{\lambda_t/\lambda_s}\simeq
J_1(k_F)/J_2(k_F)$, which exceeds the dominance threshold $J_2(k_F)/J_1(k_F)$
whenever $J_1>J_2$, i.e.\ for all $k_F$ below the first $J_1$--$J_2$
crossing at $2.63>|K|=\frac{4\pi}{3\sqrt3}\simeq2.42$ (at
leading-harmonic resolution of the dominance boundary); the full-ladder
check in
Appendix~\ref{app:topology} supports the same ordering throughout the zone.
Amplified by
the exponential $T_c$ dependence, the eigenvalue ratio thus propagates
quantitatively into the amplitude ratio, and a Kitaev-spin-glue-driven
chiral state lands on the triplet side of the topological boundary; the
precise material-by-material ratio follows from the coupled gap
equations (Sec.~\ref{sec:conclusion}).

Three consequences follow.

(i) \emph{Chern-number verdict.} By the dominance theorem
(Sec.~\ref{sec:counting}, Theorem~\ref{thm:dominance}), a chiral state
driven by the present kernel lies on the triplet side of the topological
phase boundary: $C=\pm2$, two Majorana edge branches, and
$\kappa_{xy}/T=\kappa_0$ [Eq.~\eqref{eq:thallunit}]. The singlet $d+id$
phase with $C=\pm4$ requires the charge-vertex channel of Model~2
(Sec.~\ref{sec:model2}), whose kernel is parity-projected: its triplet
projection vanishes exactly by parity and its $s$-wave projection vanishes
exactly by angular-momentum matching (Proposition~\ref{prop:chiral}), so no
competition arises there.

(ii) \emph{Coexistence criterion.} With both vertices active, the
singlet channel wins the transition only if
\begin{equation}
\frac{v_{\rm charge}}{v_{\rm spin}}>\frac{W_t(k_F)}{W_g(k_F)}
\simeq\frac{16}{k_F^2},
\label{eq:vertexcond}
\end{equation}
a kinematic handicap of two to three orders of magnitude at small
$k_F$. The measured Chern number (equivalently the thermal-Hall
plateau, $2\kappa_0$ vs $\kappa_0$) thereby discriminates which
microscopic vertex dominates the pairing in a given material.

(iii) \emph{Anisotropy caveat.} Equation \eqref{eq:eigratio} is the
isotropic value. For $\chi_z\neq\chi_{x,y}$ the broken $C_3$ pins real
combinations---the nodal bond-locked state of
Sec.~\ref{sec:formfactors}---and re-weights the channels without closing
the hierarchy as long as the chiral eigenchannels remain meaningful; the
Pauli exclusion of the even channel for the bond-axis spin glue is
structural and unaffected.

\section{The weight function $w_\alpha(J_x,J_y,J_z)$}
\label{sec:wfunc}

\subsection{Exact expression via Hellmann--Feynman}

In the zero-flux sector the matter Majoranas hop on a bipartite lattice with
spectrum $\pm|\,g(\mathbf k)|$, with $g(\mathbf k)$ of Eq.~\eqref{eq:g}. The
ground-state energy per unit cell is the sum over the filled (negative-energy)
band,
\begin{equation}
e_{\mathrm{cell}}=-\langle |g|\rangle_{\mathrm{BZ}},
\label{eq:ecell}
\end{equation}
with $\langle\cdot\rangle_{\mathrm{BZ}}$ the normalized Brillouin-zone
average. The equal-time bond correlator follows from the Hellmann--Feynman
theorem, $\partial E_0/\partial\lambda=\langle\mathrm{GS}|\partial
H/\partial\lambda|\mathrm{GS}\rangle$~\cite{Coleman2015}. Since $\partial H_K/\partial J_\alpha=-\sum_{\langle
ij\rangle_\alpha}\sigma_i^\alpha\sigma_j^\alpha$, the bond correlator is
\begin{equation}
m_\alpha\equiv\langle\sigma_i^\alpha\sigma_j^\alpha\rangle_{\alpha\text{-bond}}
=-\frac{\partial e_{\mathrm{cell}}}{\partial J_\alpha}.
\label{eq:hf}
\end{equation}
where $m_\alpha$ denotes the equal-time bond correlator; the letter is
chosen for the bond-energy scale $J_\alpha m_\alpha$, not for a mass.

Differentiating Eq.~\eqref{eq:ecell} with $\partial g/\partial J_z=1$,
$\partial g/\partial J_x=e^{ik_1}$, $\partial g/\partial J_y=e^{ik_2}$
[Eq.~\eqref{eq:g}] and
$\partial|g|/\partial J_\alpha=\mathrm{Re}(g^\ast\,\partial g/\partial
J_\alpha)/|g|$ (from $|g|=\sqrt{g^\ast g}$), one obtains
\begin{align}
m_z&=\Big\langle\frac{\mathrm{Re}\,g}{|g|}\Big\rangle_{\!\mathrm{BZ}},
\label{eq:mz}\\
m_x&=\Big\langle\frac{\mathrm{Re}\!\big(g\,e^{-ik_1}\big)}{|g|}
\Big\rangle_{\!\mathrm{BZ}},\qquad m_y\ \text{similarly}
\label{eq:mx}
\end{align}
(using $\mathrm{Re}(g^\ast e^{ik_1})=\mathrm{Re}(g e^{-ik_1})$). The
normalization is fixed so that $m_\alpha\to1$ in the isolated-dimer
limit---as $J_z\gg J_x,J_y$, $|g|\to J_z$ and $m_z\to1$ directly from
Eq.~\eqref{eq:mz}---and $e_{\mathrm{cell}}=-\sum_\alpha J_\alpha m_\alpha$
identically [integrate Eq.~\eqref{eq:hf} or differentiate
Eq.~\eqref{eq:ecell} along a radial path in coupling space].

\subsection{Properties of the bond correlator}
\label{sec:wprops}

The following hold exactly or to the stated numerical accuracy; the
derivations are in Appendix~\ref{app:bz}, and we indicate the mechanism of
each at the statement.

\begin{result}[Vanishing at decoupling]
\label{res:zero}
A switched-off component carries no weight, $m_\alpha\to0$ as
$J_\alpha\to0$: setting $J_z=0$ in \eqref{eq:mz}, the integrand factorizes,
$\mathrm{Re}\,g/|g|=\cos u\,\mathrm{sgn}(\cos v)$ with
$u=(k_1+k_2)/2$, $v=(k_1-k_2)/2$, whose BZ average vanishes because the
averages of $\cos u$ and of $\mathrm{sgn}(\cos v)$ both vanish (the
change of variables $(k_1,k_2)\to(u,v)$ has constant Jacobian and
factorizes the average; Appendix~\ref{app:bz}). The ansatz $w_\alpha\propto J_\alpha$ is therefore
exact at the endpoints---there is no intercept: a vanishing exchange takes
its pairing channel with it.
\end{result}

\begin{result}[Transverse weight at the boundary]
\label{res:2pi}
At $J_z=0$ with $J_x=J_y$, $m_x=m_y=\langle|\cos v|\rangle_{\mathrm{BZ}}=2/\pi\approx
0.6366$ [the average of $|\cos|$ over a period]. A component that is switched
off leaves its two neighbors with \emph{enhanced} weights, $2/\pi>0.5249$:
the spectral weight released by the decoupled bond is absorbed by the
remaining ones.
\end{result}

\begin{result}[Line-node logarithm]
\label{res:log}
As $J_z\to0^+$ in the gapless phase,
\begin{equation}
m_z(J_z)=\frac{J_z}{2\pi}\,\ln\frac{c}{J_z}+O(J_z),
\qquad c\approx e^{0.522\times2\pi}\approx 27,
\label{eq:log}
\end{equation}
where the coefficient $1/(2\pi)$ is fixed analytically by the divergence
of $\partial m_z/\partial J_z=\langle(\mathrm{Im}\,g)^2/|g|^3\rangle$ at
the line node $\cos v=0$ of the $J_z=0$ model (the locus where
$J_xe^{ik_1}+J_ye^{ik_2}=0$; details in Appendix~\ref{app:bz});
numerically $m_z\simeq J_z\,[0.522+0.159\,\ln(1/J_z)]$ for
$0.02\lesssim J_z\lesssim0.5$, with $0.159=1/(2\pi)$ to three digits. The
small component's weight is thus \emph{logarithmically enhanced} over the
linear ansatz at strong anisotropy.
\end{result}

\begin{result}[Isotropic value and monotonicity]
$m_\alpha=0.5249$ at $J_x=J_y=J_z$ \cite{Baskaran2007}; $m_z$ increases
monotonically and concavely from $0$ through the gapless (B) phase to the
dimer limit $m_z\to1$ in the A phase ($J_z>J_x+J_y$, and permutations), where
$m_{x,y}\to0$.
At isotropy the three components coincide, with common value
$\bar m\equiv m_\alpha|_{\mathrm{iso}}$ and common kernel weight
$\bar w\equiv w_\alpha|_{\mathrm{iso}}$. The anisotropy transfer
coefficient at isotropy is
\begin{equation}
R\equiv\frac{\bar J}{\bar m}\frac{\partial m_z}{\partial J_z}
\bigg|_{\mathrm{iso}}\approx0.70,
\label{eq:kappa}
\end{equation}
i.e.\ at fixed retardation factor a $1\%$ exchange anisotropy induces a
$0.7\%$ static-weight anisotropy: $\delta m_\alpha/\bar m=R\,\delta
J_\alpha/\bar J$. Anisotropy of $r_\alpha$ is an additional correction
(Sec.~\ref{sec:rfactor}).
\end{result}

\subsection{From $m_\alpha$ to $w_\alpha$: the retardation factor}
\label{sec:rfactor}

\paragraph{Spectral representation.}
At $T\ll\Delta_F$, the static susceptibility entering the kernel
\eqref{eq:kernelmicro} is the zero-temperature imaginary-time integral of
the correlator, not the equal-time value:
\begin{equation}
\chi_\alpha=\int_0^\beta d\tau\,
\langle\sigma_i^\alpha(\tau)\sigma_j^\alpha(0)\rangle
=r_\alpha\,\frac{m_\alpha}{\Delta_F},
\label{eq:r}
\end{equation}
where the second form extracts the equal-time part $m_\alpha$ and collects
the dynamics into a dimensionless factor $r_\alpha$ of order the flux-pair
lifetime over $\Delta_F^{-1}$ [the correlator decays on the scale
$\Delta_F^{-1}$ because the intermediate state contains a flux pair of
energy $2\Delta_F$; Sec.~\ref{sec:selection}]. In the gapless phase the
matter continuum makes $r_\alpha$ weakly
coupling- and anisotropy-dependent; we expect $r_\alpha$ smooth, with its magnitude
constrained by the rigorous bound \eqref{eq:rbound}, and
treat it as a prefactor below.

\paragraph{First-moment identity.}
This expectation can be upgraded to a
controlled statement. The spectral representations
$\chi_\alpha=\int_0^\infty\frac{d\omega}{\pi}\,A_\alpha(\omega)/\omega$
and $m_\alpha=\int_0^\infty\frac{d\omega}{\pi}\,A_\alpha(\omega)$ (with
$A_\alpha=-\mathrm{Im}\,\chi_\alpha^{\rm ret}$ the bond-resolved spectral
function) give the closed-form definition
\begin{equation}
r_\alpha=\Delta_F\,
\frac{\int_0^\infty d\omega\,A_\alpha(\omega)/\omega}
{\int_0^\infty d\omega\,A_\alpha(\omega)},
\label{eq:rspec}
\end{equation}
i.e.\ the ratio of the $1/\omega$-weighted moment to the zeroth moment
of the spectrum, in units of $\Delta_F$. Since a single spin flip
creates a flux pair, the intermediate states live in the two-flux sector
and the spectral threshold $\omega_{\min}=2\Delta_F$ is strictly
positive throughout the phase diagram (Sec.~\ref{sec:selection}). The
spectral weight obeys two exact sum rules: the zeroth moment
$\int(d\omega/\pi)\,A_\alpha(\omega)=m_\alpha$ is the denominator of
\eqref{eq:rspec}, and the first moment vanishes identically,
\begin{equation}
\int_0^\infty\frac{d\omega}{\pi}\,\omega\,A_\alpha(\omega)=0.
\label{eq:fsum}
\end{equation}
The moment equals
$\langle\psi_0|c_i(H'_K-E_0)c_j|\psi_0\rangle=\langle[c_i,H_K]c_j\rangle+\langle c_iVc_j\rangle$
(using $\langle\psi_0|H_K=E_0\langle\psi_0|$), where $V=H'_K-H_K$
contains only the flipped bond and, since $u_{ij}$ changes sign, equals
$-2$ times that bond's term. The Wick contraction of the first term is
$i(A_0C)_{ij}$, with $A_0$ the zero-flux hopping matrix and
$C=1+\mathrm{sign}(iA_0)$ the covariance matrix: since
$A_0\,\mathrm{sign}(iA_0)=-i|iA_0|$, and chiral (sublattice) symmetry
makes $|iA_0|$ block diagonal with vanishing cross-sublattice element
$(ij)$, one obtains $i(A_0C)_{ij}=i(A_0)_{ij}$; the second term
contracts directly to $-i(A_0)_{ij}$. The two cancel identically,
irrespective of any detail of the spectrum.

\paragraph{Why the naive bound fails.}
Equation~\eqref{eq:fsum} has an inescapable consequence: since
$m_\alpha\neq0$ while the first moment vanishes, the spectral weights
cannot be everywhere nonnegative---the continuum necessarily carries
negative-weight components. Pointwise estimates
$1/\omega_n\le1/\omega_{\min}$ therefore \emph{cannot} be converted into
an upper bound on $r_\alpha$: the gap alone does not constrain this
ratio. The only unconditional rigorous bound follows from
Cauchy--Schwarz,
$\sum_n|W_n|\le\sqrt{\langle\psi_0|c_i^2|\psi_0\rangle\langle\psi_0|c_j^2|\psi_0\rangle}=1$,
so $|\chi_\alpha|\le\omega_{\min}^{-1}\sum_n|W_n|\le1/\omega_{\min}$,
i.e.\
\begin{equation}
|r_\alpha|\;\le\;\frac{\Delta_F}{\omega_{\min}}\,\frac{1}{m_\alpha}\;=\;\frac{1}{2m_\alpha},
\label{eq:rbound}
\end{equation}
which gives $|r_\alpha|\lesssim0.95$ at the isotropic point
$m_\alpha\simeq0.525$. Loose as it is, this bound assumes nothing about
the spectral shape; sharper quantitative statements must come from the
spectral function itself. The actual magnitude of $r_\alpha$ is set by
the degree of cancellation between the positive- and negative-weight
parts of the continuum under the $1/\omega$ weighting, not by the gap.

\paragraph{Short-time expansion.}
Equivalently, in the time domain, the same identity states that the
short-time expansion of the imaginary-time correlator skips the
linear order,
\begin{equation}
G_\alpha(\tau)\equiv i\,\langle\psi_0|c_i\,e^{-(H'_K-E_0)\tau}c_j|\psi_0\rangle
=m_\alpha+\frac{\mu_2}{2}\,\tau^2+O(\tau^3)
\label{eq:shorttime}
\end{equation}
(the factor $i$ normalizes $G_\alpha(0)=m_\alpha$), i.e.\ the crossed
bond correlator starts with zero initial slope---the flux pair's decay
is inertial. The curvature is again closed form: writing
$\mu_2=\langle a|b\rangle$ with $|a\rangle=(H'_K-E_0)c_i|\psi_0\rangle$
and $|b\rangle=(H'_K-E_0)c_j|\psi_0\rangle$ and substituting
$H'_K=H_K+V$, the flipped bond's own term cancels exactly between
$[H_K,c]$ and $Vc$, leaving
\begin{equation}
\mu_2=i\!\!\sum_{\substack{k\sim i,\;k\neq j}}\;\sum_{\substack{l\sim j,\;l\neq i}}(A_0)_{ik}(A_0)_{jl}\,C_{kl},
\label{eq:mu2}
\end{equation}
where $k\sim i$ denotes the nearest neighbors of $i$: the sums cover
only the \emph{other} two bonds at each endpoint---the opened bond
itself does not contribute, and the short-time curvature is entirely
fixed by the four cross-sublattice covariances $C_{kl}$ surrounding it.

\paragraph{Quench protocol and hard checks.}
Its quantitative computation follows a standard route: acting with
$\sigma_j^\alpha$ on the ground state is equivalent to flipping $u_{ij}$
on that $\alpha$ bond (creating an adjacent flux pair) while adding one
matter Majorana, so $\chi_\alpha$ is the free-Majorana correlator of a
sudden switch (quench) in the two-flux sector
\cite{Kitaev2006,Baskaran2007,Knolle2014}---diagonalize the matter
quadratic form in the zero-flux and two-flux sectors separately and
assemble the spectral function from Gaussian-state overlaps. The
assembled result faces four hard checks: the zeroth moment must
reproduce the exact values of \eqref{eq:mz}--\eqref{eq:mx} point by
point, the spectral threshold must sit at $2\Delta_F$, the first
moment must vanish (Eq.~\eqref{eq:fsum}), and the second moment must
reproduce the closed form \eqref{eq:mu2}. The four differ in logical
status: the first two test the spectral assembly itself (normalization
and sector), and their failure signals an implementation error; the
last two are structural---Eqs.~\eqref{eq:fsum} and \eqref{eq:mu2} are
fixed uniquely by the operator content, so their failure indicates that
the operator content itself is wrong (for instance, quenching with
$\sigma_i^\alpha\sigma_j^\alpha$ instead of a single
$\sigma_j^\alpha$).
We have carried out this computation along the route just described. On
the periodic $2L^2$-site clusters with $L=4,5,7$ ($32$, $50$, and $98$
sites) we diagonalize the matter-Majorana quadratic form in the zero-flux
and two-flux sectors separately and assemble $\chi_\alpha$ from
Gaussian-state (Pfaffian) overlaps, including matter-fermion sectors with
up to five fermions (seven for $L=4$); at $L=4$ the assembly agrees with a
direct exact-diagonalization Lehmann summation to machine precision, and at
every computed point all four hard checks above pass [the zeroth moment
reproduces \eqref{eq:mz}--\eqref{eq:mx} to better than $10^{-8}$, the
threshold sits
at $2\Delta_F+\varepsilon_{\min}$, the first moment stays below $10^{-7}$,
and the second moment matches \eqref{eq:mu2}]. At the isotropic point we
find
\begin{equation}
\begin{split}
r_z=r_x=0.25043,\;0.24201,\;0.24361,\;0.24210\\
\qquad(L=4,5,7,8),
\label{eq:rnum}
\end{split}
\end{equation}
with a size-converged estimate
$r_\alpha^{(\infty)}=0.243\pm0.002$ (Sec.~\ref{app:ralpha})---about a
quarter of the kinematic bound
$|r_\alpha|\leq1/(2m_\alpha)\approx0.945$ [Eq.~\eqref{eq:rbound}]:
retardation softens the bond glue uniformly without reshaping it. The gap
between the computed $r_\alpha\simeq0.24$ and the bound $\approx0.95$
reflects the cancellation of positive- and negative-weight spectral
components under the $1/\omega$ weighting: the bound saturates only for a
single positive pole at the threshold, whereas the actual continuum
distributes weight over a wide energy range with sign changes enforced by
the first-moment identity \eqref{eq:fsum}. Three
structural facts stand out. (i) \emph{Single-pole dominance}: the
one-matter-fermion sector carries $99.96\%$ of $\chi_\alpha$ (the
three-fermion sector $0.04\%$, higher sectors below $10^{-7}$), and the
single lowest pole at $\theta_{\min}=2\Delta_F+\varepsilon_{\min}$ alone
exhausts $\chi_\alpha$ within a few percent, the remainder being
cancellations among subleading poles [explicitly, at isotropy
$\varepsilon_{\min}=0.458$, $0.332$, $0.341$ ($\simeq2.4$--$2.7\,\Delta_F$)
for $L=4,5,7$, and the lowest pole alone gives $r_z\approx0.26$ at $L=4$,
against the full $0.2504$]. (ii) \emph{Anisotropy amplification}: $r_z$ rises
from $0.25$ at isotropy to $0.375$ at $J_z=1.5$ ($0.3749$ at $L=4$,
$0.3778$ at $L=7$) and approaches the dimer-limit value $1/2$ ($0.4987$,
$0.4998$, $0.4996$ at $J_z=4$ for $L=4,5,7$)---the exact isolated-dimer
endpoint, where the bond operator is nearly diagonal on the dimer ground
state and the two-flux vacuum pole exhausts the spectrum---while $r_x$ falls to
$0.19$--$0.17$ at $J_z=1.5$ [$L=4$ and $L=7$; on the $L=5$ cluster the flux gap collapses at this coupling ($2\Delta_F\simeq0.009$), so we quote the two clean sizes];
the susceptibility ratio
$\chi_z/\chi_x\approx2.2$--$2.5$ at $J_z=1.5$ therefore substantially
exceeds the static ratio $m_z/m_x\approx1.5$--$1.6$, so the static
transfer coefficient $R\approx0.70$ [Eq.~\eqref{eq:kappa}]
\emph{underestimates} the operative anisotropy transfer and the screening
estimates of Sec.~\ref{sec:screening} are conservative. (iii)
\emph{Reliability window}: in the interval $J_z\approx1.75$--$2$ the flux
gap $2\Delta_F$ collapses on finite clusters with size-dependent sign
fluctuations, and the assembled numbers there are finite-size artifacts
and excluded. Deep in the A phase ($J_z\gtrsim2.25$) the flux gap is
intrinsically small and changes sign on the smallest cluster ($L=4$), so
the $J_z=4$ values above are used only for the dimer trend, on which all
three sizes agree to $0.2\%$. With $r_\alpha$ thus determined, the kernel
weight is
\begin{equation}
\boxed{\;w_\alpha=J_K^2\,\chi_\alpha
\propto r_\alpha\,m_\alpha(J_x,J_y,J_z)\;}
\label{eq:weight}
\end{equation}
with $m_\alpha$ given exactly by \eqref{eq:mz}--\eqref{eq:mx}.

\paragraph{Analytic cross-check from exact moments.} The low moments of the
spectral function are themselves computable in closed form:
$\mu_n=i\langle\psi_0|c_i(H'_K-E_0)^nc_j|\psi_0\rangle$ is a finite sum of
$(2n{+}2)$-point Majorana correlators in the zero-flux Gaussian state,
evaluated by Pfaffian recursion. Beyond $\mu_0=m_\alpha$ (exact), $\mu_1=0$
[Eq.~\eqref{eq:fsum}], and $\mu_2$ [Eq.~\eqref{eq:mu2}]---which evaluates to
$-2$ at $L=4$ and $-1.894209$ at $L=5$, both matching the closed form
\eqref{eq:mu2} digit by digit---the third moment is obtained from the same
recursion as a sum of $8$-point Pfaffians, giving $\mu_3=-11.090170$ at
isotropy ($L=4$), verified against direct Fock-space diagonalization to
machine precision. A two-pole Pad\'e approximant of
$S(z)=\int d\omega\,A(\omega)/(z-\omega)$ built on $\mu_{0..3}$ then gives
$r_\alpha^{\mathrm{Pad\acute{e}}}=-\Delta_F m_\alpha\,\mu_3/\mu_2^2=0.2497$,
within $0.3\%$ of the full spectral value $0.2504$ [Eq.~\eqref{eq:rnum}].
This is an analytic cross-check, not an exact derivation: the accuracy
degrades when the spectral weight spreads over many poles ($-9\%$ at $L=5$),
and the two-pole fit degenerates as $\mu_2\to0$ toward the dimer limit, where
$r_\alpha\to1/2$ instead follows from single-pole exhaustion.

\begin{remark}[Status of the ansatz]
Compared with the phenomenological weighting $w_\alpha\propto J_\alpha$,
the exact weight (i) agrees at both endpoints (zero intercept at
$J_\alpha=0$, Result~\ref{res:zero}; dimer saturation), (ii) is
logarithmically enhanced for a weak component (Result~\ref{res:log}), and
(iii) dilutes anisotropy by $R\approx0.70$ near isotropy
[Eq.~\eqref{eq:kappa}], a dilution that the computed retardation factor in
fact partly reverses [$\chi_z/\chi_x\approx2.2$--$2.5$ versus
$m_z/m_x\approx1.5$--$1.6$ at $J_z=1.5$, Eq.~\eqref{eq:rnum} and the
surrounding discussion]. None of the structural conclusions of the
bond-locked model change; the materials criterion does
(Sec.~\ref{sec:screening}).
\end{remark}

\section{Model 1: anisotropic spin-triplet pairing with a bond-locked $\mathbf d$ vector}
\label{sec:model1}

\subsection{Construction and effective pairing interaction}
\label{sec:construct1}

For the doped Kitaev honeycomb lattice of Sec.~\ref{sec:micro}
[$H=H_K+H_t+H_J$, Eq.~\eqref{eq:fullH}], the electron-spin-fluctuation vertex
is spin-channel (vector glue), $\psi^\dagger\boldsymbol\sigma\psi$ with a
Hermitian vertex. Second-order perturbation theory in $J_K$ integrates out
the high-energy local-moment fluctuations and yields the effective pairing
interaction [Eq.~\eqref{eq:genericglue};
Refs.~\cite{Coleman2015,Monthoux1991}]
\begin{equation}
V_{\alpha\beta}^{(1)}(\mathbf k,\mathbf k')
=-\sum_\gamma w_\gamma\,\delta_{\alpha\gamma}\delta_{\beta\gamma}\,
\sin(\mathbf k\!\cdot\!\boldsymbol\delta_\gamma)
\sin(\mathbf k'\!\cdot\!\boldsymbol\delta_\gamma) ,
\label{eq:hpair1}
\end{equation}
where $w_\gamma=J_K^2\chi_\gamma$ is the attractive triplet-channel
strength. The derivation of Sec.~\ref{sec:micro} makes the full kernel
precise: the flux-sector selection rule forces it to be anisotropic in spin
space and diagonal in the component index
[Eq.~\eqref{eq:kernelmicro}], summarized as
\begin{equation}
V_{\alpha\beta}(\mathbf k,\mathbf k')\propto
w_\alpha\,\delta_{\alpha\beta}\,
\chi(\mathbf k-\mathbf k') ,
\label{eq:kernel1}
\end{equation}
with the bond-resolved momentum dependence of Eq.~\eqref{eq:kernelmicro} and
the microscopic weights $w_\alpha=J_K^2\chi_\alpha$ of
Eq.~\eqref{eq:weight}: component-diagonal, nearest-neighbor separable, rank
one per bond.

\paragraph{Note 1 (glue chain).}
The anisotropic exchange is a symmetric spin tensor (quadrupole type,
Sec.~\ref{sec:kitaev}A), not a wedge product: there is no DM term in the
model, so the anisotropic $\mathbf d$ vector cannot originate from any
Dzyaloshinskii--Moriya structure. It originates in the anisotropic
fluctuation weights $w_\alpha$ within the vector channel. Pairing occurs in
the spin-triplet channel (vector glue, Fierz sign $+1$---the eigenvalue
computation of Eq.~\eqref{eq:fierz};
Refs.~\cite{BerkSchrieffer1966,SigristUeda1991}), and the anisotropic
$\mathbf d$ vector is its anisotropic variant within that channel.

\subsection{Bond locking and its exact consequences}
\label{sec:bondlock}

Because the kernel \eqref{eq:kernel1} is diagonal in the spin component, the
linearized gap equation \eqref{eq:lingap} is diagonal in $\alpha$: each
triplet component condenses with its own coupling
\begin{equation}
\lambda_\alpha=w_\alpha=J_K^2\chi_\alpha ,
\label{eq:lambda}
\end{equation}
and the odd-parity channel of the $\alpha$-bond kernel is the rank-one
separable form $w_\alpha\sin(\mathbf k\!\cdot\!\boldsymbol\delta_\alpha)
\sin(\mathbf k'\!\cdot\!\boldsymbol\delta_\alpha)$
[Eq.~\eqref{eq:rank2micro}], whose unique eigenvector is the bond sine
itself [Eq.~\eqref{eq:rankone}].

\begin{proposition}[Bond-locked anisotropic $\mathbf d$ vector]
\label{prop:bondlock}
From the anisotropic weights of the kernel \eqref{eq:kernel1}, the triplet
components are
\begin{equation}
d_\alpha(\mathbf k)=D_\alpha\,\sin(\mathbf k\!\cdot\!\boldsymbol\delta_\alpha),
\qquad D_\alpha^2\propto w_\alpha\ \text{just below condensation},
\label{eq:bondlock}
\end{equation}
with $\boldsymbol\delta_\alpha$ the $\alpha$-bond nearest-neighbor vector:
each component condenses with its own bond-locked sine form factor (triplet
pairing is forced odd in $\mathbf k$ by fermionic antisymmetry,
Sec.~\ref{sec:bgmatrix}, and the sine is the odd member of the pair
\eqref{eq:rank2micro}). The three components of $\mathbf d(\mathbf
k)=(\Delta_{yz},\Delta_{zx},\Delta_{xy})$ therefore scale with the exchange
anisotropy and lock to the bond directions: the $\alpha$ component carries
squared amplitude proportional to $w_\alpha$ to leading order [the linearized gap
equation at $T_c^{(\alpha)}$ gives $D_\alpha=\lambda_\alpha\langle
\sin^2\rangle_{\mathrm{FS}}\chi_0\,D_\alpha$, and just below the transition
$D_\alpha^2\propto\lambda_\alpha(1-T/T_c^{(\alpha)})$ by the standard
weak-coupling expansion \cite{Coleman2015}].
\end{proposition}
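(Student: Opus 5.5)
The plan is to work directly from the kernel \eqref{eq:hpair1} and the structure of the triplet pairing matrix. First, insert the triplet ansatz $\hat\Delta(\vk)=\mathbf d(\vk)\cdot\boldsymbol\sigma\,i\sigma_y$ of Eq.~\eqref{eq:gapexp} into the linearized gap equation \eqref{eq:lingap}. The spin structure of the kernel is $\delta_{\alpha\gamma}\delta_{\beta\gamma}$, with the Fierz sign $+1$ of Eq.~\eqref{eq:fierz} in the triplet channel. Because of this, the equation for $d_\alpha$ receives contributions only from $d_\alpha$ itself. The linearized problem therefore splits into three decoupled scalar equations, one for each component $\alpha\in\{x,y,z\}$:
\begin{equation*}
d_\alpha(\vk)=w_\alpha\,\sin(\vk\!\cdot\!\vd_\alpha)\sum_{\vk'}\sin(\vk'\!\cdot\!\vd_\alpha)\,\frac{\tanh(\xi_{\vk'}/2T)}{2\xi_{\vk'}}\,d_\alpha(\vk').
\end{equation*}

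Second, I would solve each scalar equation with the rank-one separable result \eqref{eq:rankone}. The right-hand side is proportional to $\sin(\vk\cdot\vd_\alpha)$ for any input, so any solution must have the form $d_\alpha=D_\alpha\sin(\vk\cdot\vd_\alpha)$. The odd part of the decomposition \eqref{eq:rank2micro} is the only piece compatible with $\mathbf d(-\vk)=-\mathbf d(\vk)$; the even cosine piece is excluded by Sec.~\ref{sec:bgmatrix}. Self-consistency then reduces to $1=w_\alpha\langle\sin^2(\vk\cdot\vd_\alpha)\rangle_{\rm FS}\chi_0(T)$, where $\chi_0$ is the Cooper susceptibility. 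This is the Thouless criterion, and it gives a separate $T_c^{(\alpha)}$ for each component via Eq.~\eqref{eq:bcs}, with $\lambda_\alpha\propto w_\alpha$ as in Eq.~\eqref{eq:lambda}.

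Third, to obtain the amplitude I would go beyond linear order. I would expand the full BCS gap equation, or equivalently the Ginzburg--Landau free energy for the three components, to cubic order in $D_\alpha$ near $T_c^{(\alpha)}$. The quadratic coefficient is $a_\alpha\propto 1/\lambda_\alpha-\chi_0(T)\langle\sin^2\rangle\propto -(1-T/T_c^{(\alpha)})$. The quartic coefficient is positive and of order $N(0)\langle\sin^4\rangle/T_c^2$. Minimizing gives $D_\alpha^2\propto\lambda_\alpha(1-T/T_c^{(\alpha)})\propto w_\alpha$ to leading order, which is the standard weak-coupling result cited from Ref.~\cite{Coleman2015}. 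Finally I would note that $D_\alpha$ multiplies the $\alpha$-bond sine only, so the components are bond-locked. Because the kernel is real, a common phase can be chosen, which makes the resulting state unitary.

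The main obstacle is the quartic order. There the components are no longer decoupled, because the BdG spectrum \eqref{eq:generalE} mixes them through $|\mathbf d|^2$ and $\mathbf d\times\mathbf d^\ast$. I would handle this by restricting the claim to "just below condensation" of a given component. At that point the other components either vanish or enter only through higher-order cross terms, so the leading scaling $D_\alpha^2\propto w_\alpha$ is unaffected. Whether this suffices when the $T_c^{(\alpha)}$ are nearly degenerate, as near isotropy, is the point I would check most carefully.
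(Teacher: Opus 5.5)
Your steps~1 and~2 follow the paper's argument exactly and are sound: the spin-diagonal kernel makes the linearized gap equation diagonal in $\alpha$, and the rank-one odd kernel has the unique eigenvector $\sin(\vk\cdot\vd_\alpha)$, each with its own Thouless criterion. The gap is in step~3: the amplitude law does not follow from the coefficients you write down.

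Using the Thouless condition $1=\lambda_\alpha\langle\sin^2\rangle_{\mathrm{FS}}\chi_0(T_c^{(\alpha)})$, the quadratic coefficient is $a_\alpha=\langle\sin^2\rangle_{\mathrm{FS}}[\chi_0(T_c^{(\alpha)})-\chi_0(T)]\simeq-N(0)\langle\sin^2\rangle_{\mathrm{FS}}(1-T/T_c^{(\alpha)})$. The coupling cancels, and no factor $\lambda_\alpha$ survives. With your quartic coefficient $\propto N(0)\langle\sin^4\rangle_{\mathrm{FS}}/(T_c^{(\alpha)})^2$, minimization gives the standard weak-coupling result $D_\alpha^2\simeq\frac{8\pi^2}{7\zeta(3)}(T_c^{(\alpha)})^2\frac{\langle\sin^2\rangle_{\mathrm{FS}}}{\langle\sin^4\rangle_{\mathrm{FS}}}(1-T/T_c^{(\alpha)})$. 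This is the anisotropic version of $\Delta\simeq3.06\,T_c\sqrt{1-T/T_c}$. So $D_\alpha^2\propto(T_c^{(\alpha)})^2(1-T/T_c^{(\alpha)})$, not $\lambda_\alpha(1-T/T_c^{(\alpha)})$.

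The weight $w_\alpha$ enters only through $T_c^{(\alpha)}\propto e^{-1/\lambda}$, with $\lambda=w_\alpha N(0)\langle\sin^2\rangle_{\mathrm{FS}}$ the dimensionless channel coupling. At fixed reduced temperature this gives $\delta\ln D_\alpha^2\simeq2\,\delta\lambda/\lambda^2$, an exponential rather than linear dependence on the weight. The paper's bracketed justification asserts the same $\lambda_\alpha(1-T/T_c^{(\alpha)})$ law by appeal to ``the standard weak-coupling expansion,'' so following it will not close the step. What can actually be proved is the bond-locked form factor, the separate $T_c^{(\alpha)}$, and a monotone (not proportional) growth of $D_\alpha$ with $w_\alpha$. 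That is all Corollaries~\ref{cor:weights}--\ref{cor:nodes} need.

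Your treatment of the quartic obstacle also breaks down beyond the first component. Write $s_\alpha\equiv\sin(\vk\cdot\vd_\alpha)$. Once the leading component $\beta$ has condensed, the cross term $\propto\langle s_\alpha^2s_\beta^2\rangle_{\mathrm{FS}}D_\alpha^2D_\beta^2$ in $\langle|\mathbf d|^4\rangle_{\mathrm{FS}}$ is quadratic in $D_\alpha$. Its coefficient is $\propto D_\beta^2\propto(1-T/T_c^{(\beta)})$, the same order as $a_\alpha$ itself, so it is not a higher-order correction. It renormalizes the quadratic coefficient to $a_\alpha-(\langle s_\alpha^2s_\beta^2\rangle_{\mathrm{FS}}/\langle s_\beta^4\rangle_{\mathrm{FS}})\,a_\beta$. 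This pushes the onset of the second component below its bare Thouless temperature. On a small circular Fermi surface the ratio is $1/2$, and the onset moves to $\simeq T_c^{(\alpha)}-(T_c^{(\beta)}-T_c^{(\alpha)})$.

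Hence only the first component obeys a single-component law directly below its own $T_c^{(\alpha)}$. For the others, ``just below condensation'' must be measured from the renormalized onset. Near isotropy, the full three-component quartic form, not the individual Thouless temperatures, decides which combination condenses.
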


The triplet block of $V_{\alpha\beta}$ is its diagonal part, with Fierz sign
$+1$ [Eq.~\eqref{eq:fierz};
Refs.~\cite{BerkSchrieffer1966,Vollhardt1990}]; the weights $w_\alpha$ enter
the component kernels directly, hence the squared component amplitudes track
$w_\alpha\propto r_\alpha m_\alpha$ (Secs.~\ref{sec:micro}--\ref{sec:wfunc}).
The bond direction sets where each weight sits, and the locking follows.
The locking here is a correspondence between spin indices and bond
indices---the kernel on an $\alpha$ bond couples only to the $\alpha$
spin component of $\mathbf d$---not a pinning of the geometric direction
of $\mathbf d$ to a real-space bond axis: $\mathbf d$ always lives in
spin space, the anisotropy entering through the weights of the component
amplitudes $D_\alpha$ (Corollary~\ref{cor:weights}). Four
consequences are exact within this kernel class.

\begin{corollary}[Weight anisotropy, not texture anisotropy]
\label{cor:weights}
With the bond-sine form factors of Eq.~\eqref{eq:bondlock},
\begin{equation}
|\mathbf d(\mathbf k)|^2=\sum_\alpha |D_\alpha|^2
\sin^2(\mathbf k\!\cdot\!\boldsymbol\delta_\alpha) ,
\label{eq:weights}
\end{equation}
a positive-semidefinite Gram sum (a sum of squares with non-negative
coefficients). The anisotropy fixes the orientation of $\mathbf d$ and
modulates the gap magnitude; the zeros of $|\mathbf d|$ are the
\emph{common} zeros of the three bond sines, weight-independent by
construction: a sum of non-negative terms vanishes only where every term
vanishes.
\end{corollary}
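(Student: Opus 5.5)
The plan is to compute $|\mathbf d(\vk)|^2=\mathbf d^\dagger\mathbf d$ directly from Proposition~\ref{prop:bondlock}, which gives $d_\alpha(\vk)=D_\alpha\sin(\vk\cdot\vd_\alpha)$ with one spin component per bond. Each component is a complex amplitude times a real form factor, so
\begin{equation*}
|\mathbf d(\vk)|^2=\sum_\alpha d_\alpha^\ast d_\alpha=\sum_\alpha|D_\alpha|^2\sin^2(\vk\cdot\vd_\alpha).
\end{equation*}
There are no cross terms, because the kernel \eqref{eq:kernel1} is diagonal in the component index: bond $\alpha$ populates only $d_\alpha$. This is the only place where the flux-sector selection rule \eqref{eq:selection} enters.

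To justify the ``Gram sum'' language, I would write the expression as $\sum_\alpha |D_\alpha|^2 |e_\alpha(\vk)|^2$ with the real vectors $e_\alpha(\vk)=\sin(\vk\cdot\vd_\alpha)\,\hat{\mathbf e}_\alpha$, where $\hat{\mathbf e}_\alpha$ is the unit vector of spin axis $\alpha$. It is then the squared norm of a weighted superposition of orthogonal vectors, and hence positive semidefinite.

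The zero-set claim is elementary. If every $|D_\alpha|^2>0$, a sum of nonnegative terms vanishes exactly where each term vanishes, so $\{\vk: |\mathbf d|=0\}=\bigcap_\alpha\{\sin(\vk\cdot\vd_\alpha)=0\}$. This set involves only the bond vectors $\vd_\alpha$ and not the weights $w_\alpha\propto r_\alpha m_\alpha$. Strain, pressure or doping change only the $D_\alpha$, so they cannot move it. The weights do set the orientation of $\mathbf d(\vk)$ in spin space, through the ratios $D_x\sin(\vk\cdot\vd_x):D_y\sin(\vk\cdot\vd_y):D_z\sin(\vk\cdot\vd_z)$, and they modulate the magnitude away from the nodes. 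I would note explicitly that this is anisotropy of weight rather than of texture: the real-space bond locking lives entirely in the form factors.

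The one point needing care is the degenerate case in which some $D_\alpha=0$. This happens, for example, above $T_c^{(\alpha)}$ for a subdominant component in the window where only some components have condensed, or when $J_\alpha\to0$ (Result~\ref{res:zero}). The zero set then enlarges to the intersection over the condensed components only, so it is still fixed by bond geometry, but it depends on which components are active rather than on their magnitudes. I would state the corollary for all $D_\alpha\neq0$ and remark on this case.

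The main obstacle I expect is that the scalar Gram-sum formula is only the gap magnitude for a unitary $\mathbf d$. For relative phases among the $D_\alpha$, \eqref{eq:generalE} adds $\pm|i\mathbf d\times\mathbf d^\ast|$. I would therefore remark that the statement concerns $|\mathbf d|^2$ itself, and defer the spectrum to Corollary~\ref{cor:unitary}, where the $D_\alpha$ share a common phase and $E^2=\xi^2+|\mathbf d|^2$ holds exactly.
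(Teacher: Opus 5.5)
Your proposal is correct and follows the paper's own argument: substitute $d_\alpha=D_\alpha\sin(\vk\cdot\vd_\alpha)$ into $|\mathbf d|^2=\sum_\alpha|d_\alpha|^2$, then note that a sum of non-negative terms vanishes only where every term does. Your two caveats are also sound and match how the paper treats these cases elsewhere: the larger zero set when some $D_\alpha=0$ is the single-component nodal window of Corollary~\ref{cor:nodes}, and identifying $|\mathbf d|$ with the quasiparticle gap relies on the real-$D_\alpha$ unitarity of Corollary~\ref{cor:unitary}.
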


\begin{corollary}[Component-resolved sequential condensation]
\label{cor:seq}
The linearized gap equation is diagonal in $\alpha$, so each component has
its own Thouless criterion [Eq.~\eqref{eq:rankone} and the line following
it],
$1=\lambda_\alpha\,\langle\sin^2(\mathbf k\!\cdot\!\boldsymbol\delta_\alpha)
\rangle_{\mathrm{FS}}\,\chi_0(T_c^{(\alpha)})$,
with $T_c^{(\alpha)}$ generically distinct. A single component condenses at
the highest $T_c$, with $\mathbf d$ pinned to the dominant bond direction;
the remaining components enter successively at lower temperature. The
anisotropic Kitaev glue therefore predicts multiple superconducting
specific-heat anomalies as a structural feature, not as fine tuning.
\end{corollary}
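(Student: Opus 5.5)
The plan is to prove the corollary's three claims in order. First, the linearized gap equation is diagonal in $\alpha$, so each component has its own Thouless criterion and its own $T_c^{(\alpha)}$. Second, these temperatures are generically distinct. Third, the components therefore condense one after another, which produces several specific-heat anomalies. The amplitude scaling below each transition is taken from Proposition~\ref{prop:bondlock} and is not re-derived here.

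\emph{Step 1 (one criterion per component).} I substitute the kernel \eqref{eq:hpair1} into the linearized gap equation \eqref{eq:lingap}, with the triplet expansion \eqref{eq:gapexp} at $\Delta_0=0$. The factor $\delta_{\alpha\gamma}\delta_{\beta\gamma}$ couples $d_\alpha$ only to $d_\alpha$, so the equation separates into three independent scalar problems. Each has the rank-one kernel $-w_\alpha\sin(\vk\cdot\vd_\alpha)\sin(\vk'\cdot\vd_\alpha)$. By \eqref{eq:rankone}, its only nonzero-eigenvalue solution is the bond sine. Writing the Cooper bubble as $\chi_0(T)=\int d\xi\,\tanh(\xi/2T)/2\xi$, each sector then has the Thouless condition $1=\lambda_\alpha\langle\sin^2(\vk\cdot\vd_\alpha)\rangle_{\mathrm{FS}}\chi_0(T_c^{(\alpha)})$. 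Since $\chi_0(T)\simeq N(0)\ln(1.13\,\omega/T)$, this gives
\[
T_c^{(\alpha)}=1.13\,\omega\,\exp\!\big[-1/(N(0)\lambda_\alpha\langle\sin^2(\vk\cdot\vd_\alpha)\rangle_{\mathrm{FS}})\big],
\]
as in \eqref{eq:bcs}.

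\emph{Step 2 (generic distinctness).} The exponent depends on $\alpha$ through two factors: $\lambda_\alpha=w_\alpha\propto r_\alpha m_\alpha$, and the Fermi-surface average of $\sin^2(\vk\cdot\vd_\alpha)$. At $C_3$ symmetry all three exponents coincide. Any exchange anisotropy $\delta J_\alpha\neq0$ shifts $m_\alpha$ through $R\approx0.70$ \eqref{eq:kappa}, and shifts $r_\alpha$ further according to \eqref{eq:rnum}. The same holds for a $C_3$-breaking Fermi-surface distortion acting on the angular averages. Coincidence of two $T_c^{(\alpha)}$ is therefore a codimension-one condition in parameter space. Because $T_c$ is exponential in $1/\lambda$, even small relative differences in $\lambda_\alpha$ are amplified into separated transition temperatures. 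This step is the main obstacle. It has to be phrased as a genericity statement, meaning true off a measure-zero set, and not as a strict inequality. I would handle it by showing that the map $J_\alpha\mapsto\lambda_\alpha$ has nonzero derivative, using $R\neq0$.

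\emph{Step 3 (successive entry and thermodynamics).} Order the temperatures as $T_c^{(1)}>T_c^{(2)}>T_c^{(3)}$. At the highest one, only the corresponding component is unstable, so $\mathbf d$ condenses along that single spin axis, which is the dominant bond component. Below it, the other channels stay stable until their own criteria are met. Their quadratic terms are undisturbed because the kernel is parity- and component-diagonal; the only cross-couplings enter at quartic order and merely shift the lower $T_c$'s. Those shifts are small corrections of order the condensed amplitude, and they do not remove the separation for generic parameters. Finally, each onset is a continuous transition with a Ginzburg--Landau free energy $-a_\alpha(1-T/T_c^{(\alpha)})|D_\alpha|^2+b|D_\alpha|^4$. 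Such a transition produces a jump $\Delta C\propto a_\alpha^2/b$ in the specific heat. This yields one anomaly per distinct $T_c^{(\alpha)}$, so the multiplicity follows structurally from the anisotropy and requires no fine tuning.
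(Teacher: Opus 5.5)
Your Steps~1 and~2 are the paper's own argument. The component-diagonal kernel \eqref{eq:hpair1} splits the linearized gap equation into three rank-one problems whose eigenvectors are the bond sines, each with its own Thouless criterion, and the anisotropy of $w_\alpha\propto r_\alpha m_\alpha$ separates the $T_c^{(\alpha)}$. The paper asserts ``generically distinct'' without further argument, so your genericity step is an addition. As phrased, though, $R\neq0$ controls only the diagonal derivative $\partial m_z/\partial J_z$, whereas you need $\partial(m_z-m_x)/\partial J_z\neq0$ at isotropy. This does follow. First, $m_\alpha$ is homogeneous of degree zero in $(J_x,J_y,J_z)$. Second, $\partial m_x/\partial J_z=\partial m_z/\partial J_x$, since both equal $-\partial^2e_{\mathrm{cell}}/\partial J_x\partial J_z$. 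Together with the $x\leftrightarrow y$ symmetry this gives $\partial m_x/\partial J_z=-\tfrac12\,\partial m_z/\partial J_z$, hence $\partial(m_z-m_x)/\partial J_z=\tfrac32\,R\,\bar m/\bar J\neq0$.

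The step that does not work as written is the quartic argument in Step~3. Below $T_c^{(1)}$ the condensate has $D_1^2=a_1(T_c^{(1)}-T)/(2\beta_{11})$, so at the bare $T_c^{(2)}$ it is of the order of the splitting itself, not a small correction. With a cross term $\beta_{12}D_1^2D_2^2$, the second onset moves to $T^\ast=T_c^{(2)}-\frac{\kappa}{1-\kappa}\,(T_c^{(1)}-T_c^{(2)})$, where $\kappa\equiv\beta_{12}a_1/(2\beta_{11}a_2)$. For $\kappa\ge1$, or for $\beta_{12}^2\ge4\beta_{11}\beta_{22}$, the subleading component never condenses continuously. The danger is therefore not loss of separation (a repulsive $\beta_{12}$ only widens it). It is loss of the later transitions, i.e.\ of precisely the multiple anomalies the corollary asserts, so you must check these inequalities rather than appeal to smallness.

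They do hold here. For the unitary (real-$D_\alpha$) state the weak-coupling quartic is $b\langle|\mathbf d|^4\rangle_{\mathrm{FS}}$. With $s_\alpha=\sin(\vk\cdot\vd_\alpha)$ this gives $\beta_{\alpha\alpha}=b\langle s_\alpha^4\rangle_{\mathrm{FS}}$ and $\beta_{\alpha\beta}=2b\langle s_\alpha^2s_\beta^2\rangle_{\mathrm{FS}}$. Near isotropy, $\kappa\simeq\langle s_1^2s_2^2\rangle_{\mathrm{FS}}/\langle s_1^4\rangle_{\mathrm{FS}}<1$ by Cauchy--Schwarz, and $\beta_{12}^2/(4\beta_{11}\beta_{22})\simeq\kappa^2$; on a small Fermi circle $\kappa=\tfrac12$. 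The second transition therefore survives. For $\kappa=\tfrac12$, however, it sits at $T^\ast\simeq2T_c^{(2)}-T_c^{(1)}$, and its specific-heat jump is about one third of the decoupled value. The paper stops at the linearized Thouless level and never addresses this, so this is where your proof has to go beyond it.
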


\begin{corollary}[Unitarity: no spontaneous time-reversal breaking]
\label{cor:unitary}
The weights $w_\alpha$ are real, so the kernel generates no relative phase
between the components at quadratic order; the mean-field solution can be
chosen with real $D_\alpha$, hence $\mathbf d\times\mathbf d^\ast=0$. By the
definitions of Sec.~\ref{sec:bgmatrix} the bond-locked state is a unitary
triplet: it preserves time reversal and shows no Kerr rotation and no
thermal Hall effect. Chiral triplet order would require an additional
ingredient (a complex kernel or a quartic-level degeneracy) and is not
claimed.
\end{corollary}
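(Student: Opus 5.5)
The statement to prove is Corollary~\ref{cor:unitary}: the bond-locked state is a unitary triplet. The argument has three parts. First, the quadratic problem fixes only the moduli of the components, with their phases free. Second, the quartic free energy of the real kernel is minimized by a common phase. Third, a common phase gives $\mathbf d\times\mathbf d^\ast=0$, which removes all time-reversal-odd responses.

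\emph{Quadratic order.} By Eq.~\eqref{eq:lambda} and Proposition~\ref{prop:bondlock}, the linearized gap equation is diagonal in $\alpha$. The $\alpha$ block is the rank-one kernel $w_\alpha\sin(\vk\cdot\vd_\alpha)\sin(\vk'\cdot\vd_\alpha)$ with $w_\alpha=J_K^2\chi_\alpha$ real. Here $\chi_\alpha$ is the static limit of a retarded correlator of a Hermitian operator at $\omega\to0$, so it is real (Eq.~\eqref{eq:r}). The form factor is the real function $\sin(\vk\cdot\vd_\alpha)$. So if $D_\alpha$ solves the $\alpha$ equation, $D_\alpha e^{i\theta_\alpha}$ does too for any $\theta_\alpha$. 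The quadratic Ginzburg--Landau functional $\sum_\alpha a_\alpha(T)|D_\alpha|^2$ contains no cross terms $D_\alpha^\ast D_\beta$. Hence no relative phase is generated at this order.

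\emph{Quartic order.} The relative phases are set by the quartic terms. For a unitary state, Eq.~\eqref{eq:generalE} with $\Delta_0=0$ gives $\mathbf B=i\,\mathbf d\times\mathbf d^\ast$. The standard weak-coupling quartic term is
\[
\beta\big\langle|\mathbf d|^4\big\rangle_{\mathrm{FS}}+\beta'\big\langle|\mathbf d\times\mathbf d^\ast|^2\big\rangle_{\mathrm{FS}}
\]
with $\beta'>0$, where $\beta'$ is the term that penalizes spin-polarized (non-unitary) pairs~\cite{SigristUeda1991}. The first term depends only on $|D_\alpha|$. Writing $D_\alpha=|D_\alpha|e^{i\theta_\alpha}$, the second term is nonnegative and vanishes iff all $\theta_\alpha$ coincide modulo $\pi$. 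So the minimum is reached at real $D_\alpha$ up to a global phase. Then $\mathbf d(\vk)=e^{i\theta}\times(\text{real vector})$, and $\mathbf d\times\mathbf d^\ast=0$ pointwise, which is the definition of unitarity in Sec.~\ref{sec:bgmatrix}.

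\emph{Consequences.} A time-reversed state $\hat\Delta\to\hat\Delta^\ast$ up to spin rotation maps real $D_\alpha$ to itself up to the global phase. So $\mathcal T$ is preserved, and the $\mathcal T$-odd responses vanish: $\kappa_{xy}$ and the polar Kerr rotation. Equivalently, each spin block has a real gap function whose phase winding on the Fermi surface is zero, so Eq.~\eqref{eq:weakwinding} gives $C=0$ and Eq.~\eqref{eq:thallunit} gives no thermal Hall plateau.

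\emph{Main obstacle.} The hard step is the sign $\beta'>0$ in this anisotropic, bond-locked setting. I would first compute it from the one-loop quartic coefficient, $\propto\sum_{\omega_n}\langle\cdot\rangle_{\mathrm{FS}}/|\omega_n|^3$, for a single-band weak-coupling model, where the non-unitary combination costs energy by convexity of the Yosida factor. The cross term $\langle\sin^2\vd_\alpha\sin^2\vd_\beta\rangle_{\mathrm{FS}}$ multiplies $\sin^2(\theta_\alpha-\theta_\beta)$ with a positive coefficient. The statement's disclaimer covers the exception: a quartic-level degeneracy, where $\beta'=0$ accidentally, or a complex kernel falls outside the claim.
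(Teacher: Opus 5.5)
Your argument is correct, and it does more than the paper's. The corollary stops at quadratic order: it notes that real $w_\alpha$ and real bond sines admit a solution with real $D_\alpha$ (``can be chosen''), and leaves everything else to the closing disclaimer. As you observe, the linearized problem leaves each $\theta_\alpha$ free, so the existence of a real solution does not show that it is selected. Your quartic step is what actually fixes the relative phases.

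The step you flag as the main obstacle is in fact immediate. The weak-coupling quartic term is proportional to $\langle\tfrac12\mathrm{tr}[(\hat\Delta\hat\Delta^\dagger)^2]\rangle_{\mathrm{FS}}$, and Eq.~\eqref{eq:ddag} with $\Delta_0=0$ gives $\tfrac12\mathrm{tr}[(\hat\Delta\hat\Delta^\dagger)^2]=|\mathbf d|^4+|\mathbf d\times\mathbf d^\ast|^2$. Hence $\beta'=\beta>0$ identically, with no separate one-loop computation. With $d_\alpha=|D_\alpha|e^{i\theta_\alpha}\sin(\vk\cdot\vd_\alpha)$ the second term equals $4\sum_{\alpha<\beta}|D_\alpha|^2|D_\beta|^2\sin^2(\theta_\alpha-\theta_\beta)\langle\sin^2(\vk\cdot\vd_\alpha)\sin^2(\vk\cdot\vd_\beta)\rangle_{\mathrm{FS}}$, which is exactly your cross term. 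The corollary's ``additional ingredient'' then becomes concrete: a strong-coupling correction that reverses this sign.

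Your route also gets the time-reversal step in the right order. You derive time-reversal invariance from the $\vk$-independent common phase (real $D_\alpha$ times real form factors). The paper instead infers it from unitarity via the definition in Sec.~\ref{sec:bgmatrix}, but pointwise $\mathbf d\times\mathbf d^\ast=0$ alone does not imply it, as $\mathbf d=\hat z(k_x+ik_y)$ shows.

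Two small fixes. First, $\mathbf B=i\,\mathbf d\times\mathbf d^\ast$ holds for any pure triplet; it vanishes for a unitary one. Second, drop the ``equivalently, each spin block'' sentence. A spin-block split exists only when $\mathbf d$ has a fixed axis, i.e.\ in the single-component window. There the gap has two nodes on the Fermi surface, so Eq.~\eqref{eq:weakwinding} does not apply. With two or three components, $\mathbf d(\vk)$ rotates in spin space and there is no block decomposition. Your time-reversal argument already covers both cases.
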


\paragraph{Note 2 (sign analysis of the channels).}
Whether the triplet channel is attractive depends on the product of two
signs: that of the Kitaev exchange and that of the Kondo coupling. With the
convention of Eq.~\eqref{eq:kitaev}, $J_\alpha>0$ is ferromagnetic exchange;
the fluctuations are then ferromagnetic and the Fierz sign [$+1$ in the
triplet sector, $-3$ in the singlet sector, Eq.~\eqref{eq:fierz}
\cite{BerkSchrieffer1966}] favors the triplet. For $J_\alpha<0$
(antiferromagnetic) the leading channel moves to the singlet sector (e.g.\
$d+id$), consistent with existing mean-field and numerical
results~\cite{You2012,Hyart2012,Okamoto2013,Laurell2024}. The surviving channel of Model~1 is the triplet for
ferromagnetic Kitaev exchange. The weight function
$w_\alpha=J_K^2\chi_\alpha\propto r_\alpha m_\alpha$ is derived in
Secs.~\ref{sec:micro}--\ref{sec:wfunc}; the linear weighting
$w_\alpha\propto J_\alpha$ is exact at the endpoints and corrected
quantitatively in between (Result~\ref{res:log} and the transfer coefficient
$R\approx0.70$).

\subsection{Refined form factors and the nodal structure}
\label{sec:formfactors}

The bond-sine locking of Proposition~\ref{prop:bondlock} is the same
bond-dependent triplet universality class found in mean-field studies of
doped Kitaev models \cite{You2012,Hyart2012,Okamoto2013}; here it is exact at
the stated order rather than a mean-field outcome. Since $|\mathbf
d|^2=\sum_\alpha|D_\alpha|^2 \sin^2(\mathbf k\!\cdot\!\boldsymbol\delta_\alpha)$
is a positive semidefinite Gram sum [Eq.~\eqref{eq:weights}], its zeros are
the \emph{common} zeros of the three bond sines---weight-independent by
construction. We determine these zeros geometrically. The zeros of
$\sin(\vk\cdot\vd_\alpha)$ are the parallel lines
$\vk\cdot\vd_\alpha=p\pi$ ($p\in\mathbb Z$). Within the first Brillouin zone
only $p=0$ occurs: the zone is contained in the disk
$|\vk|\le|K|=\frac{4\pi}{3\sqrt3}\approx2.42<\pi$ (unit bond length), so
$|\vk\cdot\vd_\alpha|\le|\vk|<\pi$. The three $p=0$ lines
$\vk\cdot\vd_\alpha=0$ ($\alpha=x,y,z$) are distinct directions through the
origin, and any two of them meet only at $\vk=0$: the only common zero is
$\Gamma$. Hence, for any Fermi surface enclosing $\Gamma$ (the
generic doped case):

\begin{corollary}[Strengthened node structure]
\label{cor:nodes}
(i) \emph{Single-component window} ($T_c^{(2)}<T<T_c^{(1)}$): the gap is
proportional to one bond sine; its zero-line
$\mathbf k\!\cdot\!\boldsymbol\delta_{\mathrm{dom}}=0$ cuts the Fermi
surface at two point nodes---the state is nodal. (ii) \emph{Two or three
components condensed} ($T<T_c^{(2)}$): the only remaining zero is
$\Gamma$, off the Fermi surface---the state is nodeless on the Fermi
surface. The transition between the two is a \emph{nodal-to-nodeless
transition at fixed triplet symmetry}, driven purely by sequential
condensation (Corollary~\ref{cor:seq}).
\end{corollary}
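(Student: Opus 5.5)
The plan is to combine the Gram-sum structure of Corollary~\ref{cor:weights} with the zero-set geometry established just before the statement. The facts used are three. First, $|\mathbf d(\vk)|^2=\sum_\alpha|D_\alpha|^2\sin^2(\vk\cdot\vd_\alpha)$ with $|D_\alpha|^2\ge0$. Second, inside the first Brillouin zone each bond sine vanishes only on the line $\vk\cdot\vd_\alpha=0$, because $|\vk\cdot\vd_\alpha|\le|\vk|\le|K|<\pi$. Third, the three lines are pairwise non-parallel, since the azimuths $\phi_\alpha=2\pi\alpha/3$ are distinct modulo $\pi$. The Fermi surface is a closed curve enclosing $\Gamma$ and, being the generic doped case, not passing through it. All node counting therefore reduces to counting where the relevant lines meet this curve.

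For part (i), in the window $T_c^{(2)}<T<T_c^{(1)}$ Corollary~\ref{cor:seq} gives $D_\alpha=0$ for every $\alpha$ except the dominant one. The gap is then $|D_{\rm dom}|\,|\sin(\vk\cdot\vd_{\rm dom})|$, and its zero set in the zone is the full line through $\Gamma$ perpendicular to $\vd_{\rm dom}$. I would show this line meets the Fermi surface at least twice, once on each side of $\Gamma$. The argument is that a closed curve enclosing $\Gamma$ separates $\Gamma$ from the zone boundary, so each of the two rays from $\Gamma$ along the line must cross it. To get exactly two point nodes, I would invoke star-shapedness of the Fermi surface about $\Gamma$, so that each ray crosses exactly once. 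This holds for a circular or near-circular Fermi surface and for the $C_3$-symmetric pockets at moderate doping. The nodes are isolated points because the sine changes sign transversally: its gradient $\cos(\vk\cdot\vd)\,\vd$ is nonzero on the line.

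For part (ii), with two or more $|D_\alpha|>0$, the sum of nonnegative terms vanishes only where each term with nonzero weight vanishes. Any two of the lines intersect only at $\vk=0$, so the common zero set is $\{\Gamma\}$. Since $\Gamma$ is off the Fermi surface, $|\mathbf d|>0$ everywhere on it, and compactness of the curve gives a strictly positive minimum gap. The ``fixed triplet symmetry'' clause follows because every $d_\alpha$ remains odd and in the same bond-sine class throughout; only the component weights change. The transition at $T_c^{(2)}$ is therefore nodal to nodeless without any change of representation.

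The main obstacle is the exact count of two nodes in (i). The enclosing hypothesis alone guarantees at least two crossings; proving exactly two needs the star-shapedness assumption, and I would state it explicitly as part of ``generic''. A second obstacle is the multiband honeycomb setting, where Fermi pockets can also sit around $K,K'$ at higher filling. I would restrict to the single-band reduction of Appendix~\ref{app:bdg}, which is the setting Corollary~\ref{cor:weights} already uses.
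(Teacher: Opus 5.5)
Your proposal is correct and follows the paper's own argument. That argument combines the Gram-sum structure of Corollary~\ref{cor:weights}, the bound $|\vk\cdot\vd_\alpha|\le|K|<\pi$ (which leaves only the $p=0$ zero line of each bond sine inside the zone), and the fact that distinct lines through the origin meet only at $\Gamma$. Your star-shapedness hypothesis for the exact count of two nodes, and your compactness argument for a strictly positive minimum gap in (ii), make explicit what the paper leaves implicit in ``any Fermi surface enclosing $\Gamma$ (the generic doped case)''.
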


\begin{prediction}[Sequential thermodynamics]
\label{pred:seq}
The low-energy thermodynamics switches character across $T_c^{(2)}$:
power-law (nodal) specific heat, penetration depth, and thermal
conductivity for $T_c^{(2)}<T<T_c^{(1)}$---the standard signatures of
line/point nodes, whose density of states vanishes as a power of the
energy \cite{SigristUeda1991}---crossing over to activated
behavior below $T_c^{(2)}$ (a full gap suppresses the low-energy density of
states exponentially \cite{Coleman2015}); the reverse ordering from
multiband scenarios where gaps merely grow. The directions of the nodes in
the window are fixed by $\boldsymbol\delta_{\mathrm{dom}}$, i.e.\ by
\emph{which} exchange is largest, and are pressure- and doping-independent
within the window. An observed rotation of the nodal directions under strain,
pressure, or doping would falsify the separable bond-kernel class
\eqref{eq:kernelmicro}.
\end{prediction}

The nodal-direction rigidity stated here rests on the exact Gram structure of
Eq.~\eqref{eq:weights}: exchange anisotropy modulates gap \emph{magnitude}
textures, not nodal directions. The state remains unitary (real $D_\alpha$ up
to a common phase), with zero Knight-shift correction along the locked axis
and full polarization response perpendicular to it [the unitary-triplet
susceptibility tensor of Sec.~\ref{sec:bgmatrix};
Ref.~\cite{SigristUeda1991}].

\subsection{BdG Hamiltonian and quasiparticle spectra}
\label{sec:bdg1}

Using Eq.~\eqref{eq:majorana} for both the itinerant electrons and the local
moments, and the Nambu spinor
$\Psi_{\mathbf k}=(c_{\mathbf kA\uparrow},c_{\mathbf kA\downarrow},
c_{\mathbf kB\uparrow},c_{\mathbf kB\downarrow},
c^\dagger_{-\mathbf kA\uparrow},c^\dagger_{-\mathbf kA\downarrow},
c^\dagger_{-\mathbf kB\uparrow},c^\dagger_{-\mathbf kB\downarrow})^{\mathsf T}$
(sublattices $A,B$ of the honeycomb), the BdG Hamiltonian is
$H_{\mathrm{BdG}}=\frac12\sum_{\mathbf k}\Psi_{\mathbf k}^\dagger
H_{\mathrm{BdG}}(\mathbf k)\Psi_{\mathbf k}$.
In what follows we work at the single-band (sublattice-suppressed) level
[Eq.~\eqref{eq:bdg} of Sec.~\ref{sec:bgbdg}].
This is an approximation: with the full honeycomb normal state $h(\mathbf
k)$, the off-diagonal blocks of $H_{\mathrm{BdG}}^2$ do not vanish in
general, and the closed spectra below hold band by band, not for the
two-sublattice matrix (Appendix~\ref{app:bdg}).

The pairing matrix decomposes as in Eq.~\eqref{eq:gapexp},
with $\Delta_0(\mathbf k)=\Delta_0(-\mathbf k)$ and
$\mathbf d(\mathbf k)=-\mathbf d(-\mathbf k)$, the components
$D_\alpha^2\propto w_\alpha$ bond-locked as in
Proposition~\ref{prop:bondlock}. The general singlet-triplet spectrum is
Eq.~\eqref{eq:generalE} (derived in Appendix~\ref{app:bdg}).
Model~1 takes the pure triplet ($\Delta_0=0$) with a unitary $\mathbf d$
vector (Corollary~\ref{cor:unitary}), so $\mathbf B=0$ and
$E_{\mathbf k}=\sqrt{\xi_{\mathbf k}^2+|\mathbf d(\mathbf k)|^2}$
with $|\mathbf d|^2$ given by Eq.~\eqref{eq:weights}.

\paragraph{Note 3 (scope of solvability).}
Two levels must be distinguished. In the pure Kitaev model the $u_{ij}$
commute with $H_K$, the gauge flux is conserved, and the spin correlators
(the basis of the kernel derivation of Sec.~\ref{sec:micro}) are exactly
computable [Sec.~\ref{sec:bgkitaev}; Ref.~\cite{Kitaev2006}]. With doping and
the Kondo coupling, the $u_{ij}$ no longer commute with the full Hamiltonian
[the Kondo vertex $\mathbf S_i\cdot\boldsymbol\sigma$ contains single
$b_i^\alpha$'s, which toggle the bond variables], the flux becomes dynamical,
and the full model is not integrable (numerical treatments are
required~\cite{Keskiner2025}). Our scheme uses the correlators of
the undoped, zero-flux Kitaev ground state to derive the kernel, performs the
second-order expansion in $J_K$, and then diagonalizes the mean-field BdG
Hamiltonian, which is exact for a given pairing matrix. ``Exact'' here means
exactly derived kernels plus exactly solvable BdG, not exact solvability of
the doped model. The expansion is protected by the flux gap, with $J_K$
small compared with the flux gap.

\subsection{Ground-state phase diagram}
\label{sec:pd1}

The comparison between uniform and bond-resolved exchange must be stated
with its Fierz sign. For the uniform ferromagnetic Kitaev glue on which
Model~1 is primarily based, the trace and traceless parts are both
triplet-attractive: there is no uniform-singlet competitor, and exchange
anisotropy instead lifts the $C_3$ degeneracy of the triplet components.
A conventional singlet competitor arises only from an additional scalar
channel or an antiferromagnetic background. In that more general setting,
the isotropic (trace) part $\bar J$ is Heisenberg-like [Eq.~\eqref{eq:tbond}]
and has Casimir factor $-3$ in the singlet sector [Eq.~\eqref{eq:fierz}],
while a sign-reversed anisotropic component can favor the triplet
(Fierz $+1$). BdG diagonalization (exact for a given pairing matrix) then
gives the following schematic phase diagram; a quantitative parameter scan
is left to future work. Region~I (scalar-dominated): the scalar channel
dominates and the ground state is conventional even-parity singlet pairing.
Region~II (anisotropic triplet-dominated): the bond-locked anisotropic
$\mathbf d$-vector channel is selected; the gap magnitude is anisotropic in
spin space ($D_x\neq D_y\neq D_z$), while the momentum structure is the
common bond-sine Gram form (Corollary~\ref{cor:weights}). Region~III
(mixed): the two coexist as a superposition, with the spectrum
\eqref{eq:generalE}. The heuristic survival criterion in that setting is
\begin{equation}
A_J\equiv\frac{|J_x-J_y|+|J_y-J_z|+|J_z-J_x|}{|J_x|+|J_y|+|J_z|}
>\lambda_c .
\label{eq:criterion}
\end{equation}

The threshold follows from the Thouless criteria of the two channels,
$1=V_s\chi_s$ and $1=V_t\chi_t$ [Eq.~\eqref{eq:rankone}]: the uniform part
feeds the singlet with a factor $3$ larger effective strength per unit
exchange than the triplet (Casimir $-3$ versus $+1$ [Eq.~\eqref{eq:fierz};
Ref.~\cite{BerkSchrieffer1966}]), while the anisotropic part feeds the
triplet in proportion to $A_J$. Equating the two gap gains gives the
order-of-magnitude estimate
\begin{equation}
\lambda_c\approx\frac13\,\frac{N_s(0)}{N_t(0)},
\end{equation}
with $N_{s,t}(0)$ the channel-resolved Fermi-surface-averaged densities of
states. Taking $N_s/N_t$ between $0.3$ and $1$ for honeycomb van Hove
physics [the saddle-point-enhanced density of states of the honeycomb
lattice; Refs.~\cite{Nandkishore2012,BlackSchaffer2014}] gives
$\lambda_c\sim0.1$--$0.3$; this is an estimate, not a calculation. It applies
to the mixed-sign/background-singlet setting just described, not to a uniform
ferromagnetic Kitaev glue, where the anisotropy scale instead controls
component selection and the width of the sequential-condensation window.

\subsection{Testable experimental signatures}
\label{sec:sign1}

\paragraph{Knight shift.}
The anisotropic unitary triplet state gives a bond-resolved anisotropic
Knight shift: different field directions probe different components of
$\mathbf d$ (Sec.~\ref{sec:bgmatrix} \cite{SigristUeda1991}). This is the
most distinctive signature of bond locking, distinct from the full
suppression of the singlet and from the partial retention of an isotropic
triplet. When off-diagonal $\Gamma'$ exchange is present, the pinned axis
tilts by $\theta$ away from the cubic axis $\hat\gamma$ within the
$(\hat s,\hat\gamma)$ plane
[Eq.~\eqref{eq:tilt}, Corollary~\ref{cor:tilt}]: measuring $\theta$ from
the Knight-shift tensor of the superconducting state measures
$\Gamma'/(K-\Gamma)$ directly.

\paragraph{Split transitions.}
Component-resolved condensation (Corollary~\ref{cor:seq}) predicts multiple
specific-heat anomalies with anisotropy-dependent splittings; the
highest-$T_c$ transition has $\mathbf d$ pinned to the dominant bond
direction.

\paragraph{Tunneling spectroscopy.}
The tunneling gap is $|\mathbf d(\mathbf
k)|=\big[\sum_\alpha|D_\alpha|^2\sin^2(\mathbf
k\!\cdot\!\boldsymbol\delta_\alpha)\big]^{1/2}$: its momentum dependence is
the weight-independent bond-sine Gram form, and its overall magnitude carries
the anisotropy (Corollary~\ref{cor:weights}). Spin-resolved probes can in
principle separate the components.

\paragraph{Low-temperature thermal transport.}
In the single-component window the state is nodal, with two point nodes set
by $\boldsymbol\delta_{\mathrm{dom}}$, producing power-law thermal
transport; below $T_c^{(2)}$ the state is nodeless on the Fermi surface and
transport turns activated (Corollary~\ref{cor:nodes},
Prediction~\ref{pred:seq}; for the nodal/gapped dichotomy see
Ref.~\cite{SigristUeda1991}). The nodal directions are independent of the
anisotropy.

\paragraph{No Kerr, no thermal Hall.}
The state is unitary (Corollary~\ref{cor:unitary}); a Kerr signal or a
quantized thermal Hall in a candidate material would exclude this channel
and point to Model~2 or another time-reversal-breaking state (for Kerr
rotation as a probe of time-reversal breaking in superconductors see
Ref.~\cite{Xia2006}; for the thermal Hall quantization
Sec.~\ref{sec:bgbdg} \cite{ReadGreen2000,Volovik2003}).

\paragraph{Candidate materials.}
Doping or pressure on Kitaev candidates~\cite{Banerjee2016,Kasahara2018} may
realize superconductivity containing a bond-locked anisotropic $\mathbf d$-
vector component, testable through the bond-resolved anisotropy of the
Knight shift; the screening of specific compounds is
Sec.~\ref{sec:screening}.

\section{Off-diagonal exchange: $\Gamma$, $\Gamma'$, $J_3$}
\label{sec:offdiag}

Real Kitaev materials carry the symmetry-allowed extensions of the pure model
\cite{Rau2014,Winter2017}
\begin{align}
H_\Gamma&=\Gamma\!\!\sum_{\langle ij\rangle_\alpha}\!\!\
(S_i^\beta S_j^\gamma+S_i^\gamma S_j^\beta),
\nonumber\\
H_{\Gamma'}&=\Gamma'\!\!\sum_{\langle ij\rangle_\alpha}\!\!\
\sum_{\delta\neq\alpha}
(S_i^\alpha S_j^\delta+S_i^\delta S_j^\alpha),
\label{eq:gamma}
\end{align}
plus Heisenberg $J$ and third-neighbor $J_3$ terms. (Here
$(\alpha,\beta,\gamma)$ on an $\alpha$ bond denotes the cubic-axis triad
with $\alpha$ the bond's Ising direction; the $\Gamma$ term couples the two
off-axis components symmetrically, and $\Gamma'$ couples the Ising component
to each off-axis component---these are the only symmetric off-diagonal forms
compatible with the bond symmetries \cite{Rau2014}.) The general exchange
on a $\gamma$ bond is then the tensor, in the basis $(\alpha,\beta,\gamma)$,
\begin{equation}
\mathsf T_\gamma=
\begin{pmatrix}
J & \Gamma & \Gamma'\\
\Gamma & J & \Gamma'\\
\Gamma' & \Gamma' & J+K
\end{pmatrix},
\label{eq:tensor}
\end{equation}
with $K$ the Kitaev exchange on that bond (from Eq.~\eqref{eq:gamma}
onward this section uses the materials convention
$H=\sum J\,\mathbf S_i\!\cdot\!\mathbf S_j+K\,S_i^\gamma S_j^\gamma+\cdots$:
all exchange terms carry plus signs and spin-$1/2$ operators
$\mathbf S=\boldsymbol\sigma/2$, to ease comparison with the materials
literature \cite{Rau2014,Winter2017}). The conversion between this
convention and the Pauli convention of Eq.~\eqref{eq:kitaev} is
$J_\alpha^{(\mathrm{Pauli})}=-K_\alpha^{(\mathrm{spin})}/4$ [matching
$-J\sigma\sigma$ to $K\,(\sigma/2)(\sigma/2)$], used whenever
the component-resolved $K_x,K_y,K_z$ of Sec.~\ref{sec:screening} are
compared with the $J_\alpha$ of Secs.~\ref{sec:kitaev} and~\ref{sec:wfunc}.

\subsection{Exact diagonalization}
\label{sec:tensor}

The tensor \eqref{eq:tensor} diagonalizes in closed form
(Appendix~\ref{app:tensor}, where the basis transformation is displayed
entry by entry). The antisymmetric combination
$\hat a=(\hat\alpha-\hat\beta)/\sqrt2$ is an exact eigenvector,
\begin{equation}
\lambda_a=J-\Gamma,
\label{eq:la}
\end{equation}
and the remaining $2\times2$ block spanned by $\hat\gamma$ and the
symmetric combination $\hat s=(\hat\alpha+\hat\beta)/\sqrt2$ is tilted by
$\Gamma'$: diagonalizing $\begin{pmatrix}J+\Gamma&\sqrt2\Gamma'\\
\sqrt2\Gamma'&J+K\end{pmatrix}$ by the standard rotation (with $\theta$
the signed tilt of the $\lambda_+$ principal axis away from $\hat\gamma$
toward $+\hat s$),
\begin{equation}
\tan2\theta=\frac{2\sqrt2\,\Gamma'}{K-\Gamma},
\label{eq:tilt}
\end{equation}
\begin{equation}
\lambda_\pm=J+\frac{K+\Gamma}{2}
\pm\sqrt{\frac{(K-\Gamma)^2}{4}+2\Gamma'^2}.
\label{eq:lpm}
\end{equation}
For $|\Gamma|,|\Gamma'|\ll|K|$: $\theta\approx\sqrt2\,\Gamma'/K$ and
$\lambda_+\approx J+K+2\Gamma'^2/K$ [Taylor expansion of
Eqs.~\eqref{eq:tilt}--\eqref{eq:lpm}]. The $\Gamma$ term does \emph{not}
tilt the principal axes of a single bond---it re-weights the transverse
channel $a$---while $\Gamma'$ rotates them within the
$(\hat s,\hat\gamma)$ plane by $\theta$.

\subsection{Consequences for the pairing glue}

\begin{corollary}[Axis tilt; Knight-shift rotation]
\label{cor:tilt}
The locking axis on each bond rotates by $\theta$ out of the cubic-axis
triad [Eq.~\eqref{eq:tilt}]. Since the three bond tensors tilt in a
threefold-related pattern, the zero-temperature Knight shift retains a pinned
axis per domain but that axis is rotated by $\theta$ from the
pure-Kitaev prediction; measuring $\theta$ from the Knight-shift tensor
of the superconducting state measures $\Gamma'/(K-\Gamma)$ directly
(the unitary-triplet susceptibility tensor is set by the $\mathbf d$
orientation, Sec.~\ref{sec:bgmatrix} \cite{SigristUeda1991}).
\end{corollary}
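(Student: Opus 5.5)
The proof has three parts: the bond tensor's eigenaxes tilt by $\theta$ bond by bond, the kernel inherits those axes as its spin-space locking axes, and the unitary-triplet Knight-shift tensor is then pinned to the tilted axis.

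\textbf{Step 1: bond tensor.} Start from the exact diagonalization of Sec.~\ref{sec:tensor}. On a $\gamma$ bond the exchange tensor $\mathsf T_\gamma$ [Eq.~\eqref{eq:tensor}] has three principal axes:
\begin{itemize}
\item the exact eigenvector $\hat a$, with eigenvalue $\lambda_a=J-\Gamma$;
\item a pair of axes obtained by rotating $(\hat s,\hat\gamma)$ through the angle $\theta$ of Eq.~\eqref{eq:tilt}.
\end{itemize}
The $\lambda_+$ axis, $\hat e_+=\cos\theta\,\hat\gamma+\sin\theta\,\hat s$, continuously connects to the pure-Kitaev Ising axis $\hat\gamma$ as $\Gamma'\to0$. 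For $|\Gamma|,|\Gamma'|\ll|K|$ it carries the dominant weight $\lambda_+\approx J+K$. The first claim therefore reduces to identifying the locking axis of the $\gamma$-bond glue with $\hat e_+$ instead of $\hat\gamma$.

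\textbf{Step 2: kernel locking axis.} Redo the kernel construction of Sec.~\ref{sec:kernel} in the principal-axis frame of each bond. The second-order Kondo kernel on a bond is $J_K^2$ times the bond spin correlator. To leading order in the off-diagonal couplings, that correlator is diagonal in the eigenbasis of $\mathsf T_\gamma$. The reason is that the dominant exchange $\lambda_+ S^{e_+}S^{e_+}$ plays the role that $K S^\gamma S^\gamma$ played in the selection rule [Eq.~\eqref{eq:selection}].

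Consequently the rank-one spin projector $\delta_{\mu\gamma}\delta_{\nu\gamma}$ of Eq.~\eqref{eq:kernelmicro} is replaced by $\hat e_+^\mu\hat e_+^\nu$, plus subleading $\hat a\hat a$ and $\hat e_-\hat e_-$ pieces weighted by $\lambda_a$ and $\lambda_-$. Proposition~\ref{prop:bondlock} and Corollary~\ref{cor:weights} then go through unchanged, with the spin index now carried by $\hat e_+^{(\gamma)}$: $\mathbf d(\vk)=\sum_\gamma D_\gamma\hat e_+^{(\gamma)}\sin(\vk\cdot\vd_\gamma)$.

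The three bonds are related by the threefold symmetry that permutes the cubic axes. Applying that operation to $\hat e_+^{(\gamma)}$ gives the tilted axes of the other bonds, so the tilt pattern is threefold-related.

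\textbf{Step 3: Knight shift.} In a domain where one component dominates, either the single-component window of Corollary~\ref{cor:seq} or a domain selected by anisotropy, $\mathbf d$ is real and parallel to $\hat e_+^{(\mathrm{dom})}$ by Corollary~\ref{cor:unitary}. The unitary-triplet susceptibility tensor of Sec.~\ref{sec:bgmatrix} then has the following principal structure:
\begin{itemize}
\item its suppressed principal direction is $\hat d=\hat e_+$;
\item the field direction with no Knight-shift change is the plane perpendicular to $\hat e_+$.
\end{itemize}
Its principal frame is therefore rotated by exactly $\theta$ from the pure-Kitaev frame. Reading $\theta$ off this tensor and inverting Eq.~\eqref{eq:tilt} gives $\Gamma'/(K-\Gamma)=\tan(2\theta)/(2\sqrt2)$, which is the last claim.

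\textbf{Main obstacle.} The hard part is Step 2: justifying that the kernel's spin structure follows the exchange eigenaxes. With $\Gamma,\Gamma'\neq0$ the model is no longer exactly solvable, and the flux selection rule fails. I would argue perturbatively in $\Gamma'/K$. At zeroth order the correlator is $\propto\hat\gamma\hat\gamma$. At first order, the mixing term $\Gamma'(S^\gamma S^s+S^sS^\gamma)$ generates off-diagonal $\gamma s$ correlations whose ratio to the diagonal one reproduces $\theta\approx\sqrt2\Gamma'/K$. This is checked by treating a single bond as a dimer, which is the limit in which $r_\alpha\to1/2$ and the correlator is set by the bond tensor alone. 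I would state the result at this order, with the exact eigenaxis identification as the leading-order statement.
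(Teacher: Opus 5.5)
Your three steps are the paper's own argument. First, the exact diagonalization of $\mathsf T_\gamma$ (Appendix~\ref{app:tensor}) gives the tilt \eqref{eq:tilt}. Second, each bond's locking axis is identified with the tilted principal axis at leading order, which the paper asserts rather than derives (Corollary~\ref{cor:robust} and the loss-of-exactness remark). Third, the unitary-triplet Knight-shift tensor has its principal frame fixed by $\hat{\mathbf d}$. Your assignment of the suppressed direction to $\hat{\mathbf d}$ is the standard one, but only the frame matters here.

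Two small cautions. Your dimer check confirms the Step~2 identification only for an isolated bond. On the lattice, $\Gamma'$ terms on the adjacent bonds create the same flux pair and also contribute to the first-order $s\gamma$ susceptibility, so exact equality of the kernel tilt with Eq.~\eqref{eq:tilt} remains an assumption you share with the paper. Also, the axis $\cos\theta\,\hat\gamma+\sin\theta\,\hat s$ is the $\lambda_+$ branch with $\lambda_+\approx J+K$ only when $K>\Gamma$. In the ferromagnetic $K<0$ regime of Model~1 it is the $\lambda_-$ branch, though its tilt is still given by Eq.~\eqref{eq:tilt}.
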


\begin{corollary}[Node robustness]
\label{cor:robust}
To leading order in $\Gamma,\Gamma',J$, the glue remains bond-resolved
with modified eigenweights $\{\lambda_a,\lambda_\pm\}$ and rotated bond
frames, but the form factors remain the bond sines of the honeycomb
geometry: the zero structure of Corollary~\ref{cor:nodes} is unchanged,
since it depends only on the sine form factors and the zone geometry.
Node \emph{mobility} enters only through $J_3$, whose longer range adds
next-neighbor form factors (singlet $s$/$d\pm id$; triplet $p$/$f$ on
the same-sublattice triangles) and generically lifts or displaces the
nodes at order $J_3/K$.
\end{corollary}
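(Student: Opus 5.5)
The plan is to treat the $\Gamma,\Gamma',J$ terms as a bond-local perturbation of the exchange tensor and show that they change only the spin-space weights of the kernel, never its momentum form factors. First I would redo the second-order construction of Sec.~\ref{sec:kernel} with the general bond tensor \eqref{eq:tensor}. To leading order in $\Gamma,\Gamma',J$, the perturbed correlations are still dominated by the bond-local (on-site plus nearest-neighbor) structure of Eq.~\eqref{eq:selection}. Each retained term sits on a single $\gamma$ bond, so its spatial dependence is still $\cos[(\vk-\vk')\cdot\vd_\gamma]$. The off-diagonal exchange enters only through the $3\times3$ spin matrix multiplying it.

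Using the exact diagonalization of Sec.~\ref{sec:tensor}, I would write the spin matrix on each bond in its principal frame $\{\hat a,\hat s',\hat\gamma'\}$, with eigenweights $\{\lambda_a,\lambda_\pm\}$ and the $\Gamma'$-induced tilt $\theta$ of Eq.~\eqref{eq:tilt}. The kernel then becomes $\sum_\gamma\sum_{e}v_{\gamma,e}\,\hat e\hat e^{\mathsf T}\cos[(\vk-\vk')\cdot\vd_\gamma]$. The cosine addition formula \eqref{eq:rank2micro} splits this into parity sectors. In the odd (triplet) sector, each spin-frame component on bond $\gamma$ therefore condenses with the form factor $\sin(\vk\cdot\vd_\gamma)$, as in Proposition~\ref{prop:bondlock}. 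The only changes are the rotated frame and the modified weights.

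I would then take two sub-cases. If the components condense on bonds with distinct principal axes, the gap is again a nonnegative Gram sum of the form $\sum|D|^2\sin^2(\vk\cdot\vd_\gamma)$. Non-orthogonal tilted frames could produce cross terms $\mathbf D_\gamma\cdot\mathbf D_{\gamma'}^\ast\sin\sin$, but the result is still a positive-semidefinite quadratic form in the vector $(\sin\vk\cdot\vd_\gamma)_\gamma$. That form is nondegenerate whenever the $\mathbf D_\gamma$ are linearly independent, and degenerate otherwise, which is the single-component case. In either situation its zero set is contained in the union or intersection of the sine zero lines. The geometric argument before Corollary~\ref{cor:nodes}, namely $|\vk|\le|K|<\pi$ together with the three distinct $p=0$ lines, then reproduces exactly the zero structure of Corollary~\ref{cor:nodes}.

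For $J_3$, I would note that it couples third neighbors at distance $2$, which adds new bond vectors $\vd'$ to the kernel. These produce form factors $\cos/\sin(\vk\cdot\vd')$ that do not share zeros with the nearest-neighbor sines, so they enter the Gram sum with weight $O(J_3/K)$. This generically lifts or displaces the nodes, and I would classify the new harmonics in the same way as the singlet and triplet channels above.

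The main obstacle is the first step. The flux selection rule \eqref{eq:selection} is exact only for the pure model, and $\Gamma,\Gamma'$ spoil the conservation of $u_{ij}$. I would handle this by arguing perturbatively that the leading corrections remain supported on the same bonds: they are built from one insertion of the perturbing bond operator, which again toggles and re-annihilates a flux pair locally. Longer-range pieces would then appear only at higher order, which justifies the qualifier ``to leading order'' in the statement.
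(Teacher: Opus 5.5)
Your route is the paper's own: a bond-local kernel, principal frames with eigenweights $\{\lambda_a,\lambda_\pm\}$ and the $\Gamma'$ tilt, the cosine addition formula, bond sines, then the Gram sum plus zone geometry. The gap is the step ``each spin-frame component on bond $\gamma$ condenses with $\sin(\vk\cdot\vd_\gamma)$, as in Proposition~\ref{prop:bondlock}.'' It does not survive the perturbation, and part (i) of Corollary~\ref{cor:nodes} rests on it.

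Proposition~\ref{prop:bondlock} decouples only because each spin direction of $\mathbf d$ is fed by exactly one bond (rank one per bond, mutually orthogonal axes). The bond matrices $M^{(\gamma)}$ of the kernel can lose this in two ways: they acquire transverse eigenweights (from $J,\Gamma$), or their locking axes tilt and stop being mutually orthogonal (from $\Gamma'$). Either way one spin direction is fed by several bonds. The linearized gap equation itself, not just $|\mathbf d|^2$, then couples the bond sines through the cross overlaps $S_{\gamma\gamma'}=\langle\sin(\vk\cdot\vd_\gamma)\sin(\vk\cdot\vd_{\gamma'})\rangle_{\mathrm{FS}}=\tfrac12[J_0(\sqrt3k_F)-J_0(k_F)]\neq0$.

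Perturb the leading eigenvector $\hat z\sin(\vk\cdot\vd_z)$ to first order. You get $\mathbf d=\sum_\gamma\mathbf D_\gamma\sin(\vk\cdot\vd_\gamma)$, with the $\mu$ component of $\mathbf D_\gamma$ proportional to $M^{(\gamma)}_{\mu z}S_{\gamma z}$ for $\gamma\neq z$. Two cases follow.

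\emph{$J$ alone, with $J_x\neq J_y$.} The admixtures are parallel to $\hat z$:
$d_z\propto\sin(\vk\cdot\vd_z)+\epsilon_x\sin(\vk\cdot\vd_x)+\epsilon_y\sin(\vk\cdot\vd_y)$. On the old nodal line $\vk\cdot\vd_z=0$ this reduces to $(\epsilon_x-\epsilon_y)\sin(\vk\cdot\vd_x)\neq0$, so the nodes rotate by $O(\epsilon_x-\epsilon_y)$.

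\emph{$\Gamma'$ (or $\Gamma$).} The first-order entries $M^{(x)}_{xz},M^{(y)}_{yz}$ (or $M^{(x)}_{yz},M^{(y)}_{xz}$) put $\sin(\vk\cdot\vd_x)$ and $\sin(\vk\cdot\vd_y)$ into independent spin directions. The quadratic form becomes nondegenerate and the two point nodes are gapped at first order.

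So both your identification ``degenerate $\Leftrightarrow$ single-component'' and your claim that the zero set always lies in a union or intersection of sine zero lines fail exactly in the single-component window.

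What your argument does establish is part (ii). When the condensed $\mathbf D_\gamma$ are linearly independent, which is stable under small perturbations, $|\mathbf d|^2=|\sum_\gamma\mathbf D_\gamma\sin(\vk\cdot\vd_\gamma)|^2$ vanishes only where all three sines vanish, i.e.\ at the zone center. For part (i) you need either an extra hypothesis that keeps the gap equation diagonal (rank-one bond matrices with orthogonal axes) or a symmetry argument. A symmetry exchanging the two subdominant bonds forces $\epsilon_x=\epsilon_y$ and pins the line for $J$ alone, but it does not forbid the $\Gamma,\Gamma'$ components. The paper's one-line justification (``depends only on the sine form factors and the zone geometry'') makes the same tacit decoupling assumption, so this is not a step the paper supplies and you overlooked.

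Two further points:
\begin{enumerate}
\item The obstacle you flagged, loss of exact nearest-neighbor support, is not the one that bites. The problem already arises for a strictly nearest-neighbor kernel.
\item Honeycomb third neighbors sit at $-2\vd_\gamma$, so $\sin(\vk\cdot\vd')=-\sin(2\vk\cdot\vd_\gamma)$ does share the zero line of $\sin(\vk\cdot\vd_\gamma)$. $J_3$ moves nodes through the same cross-bond admixture, not because its form factors have no common zeros with the nearest-neighbor sines.
\end{enumerate}
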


\begin{remark}[Loss of exactness]
The selection rule \eqref{eq:selection} is a property of the pure Kitaev
point: it rests on the conservation of the plaquette fluxes, which
$\Gamma,\Gamma'$ terms violate. With $\Gamma,\Gamma'\neq0$ the correlators
acquire longer-ranged tails perturbatively (order $(\Gamma/K)^2$ for the
diagonal bond correlator, order $\Gamma'/K$ for tilted components). The
kernel remains nearest-neighbor-dominated, not exactly nearest-neighbor;
all statements above are controlled for
$|\Gamma|,|\Gamma'|,|J|,|J_3|\ll|K|$.
\end{remark}

\section{Materials screening}
\label{sec:screening}

Applying the criterion \eqref{eq:criterion} requires the component-resolved
Kitaev exchanges $K_\alpha$ (the trace part of the bond-summed exchange
is spin-isotropic and weights no component). Define
\begin{equation}
A_K=\frac{|K_x-K_y|+|K_y-K_z|+|K_z-K_x|}
{|K_x|+|K_y|+|K_z|}.
\label{eq:AK}
\end{equation}
With the exact weight function of Sec.~\ref{sec:wfunc}, the operative
anisotropy is that of the weights rather than of the exchanges
[Eq.~\eqref{eq:kappa}]:
\begin{equation}
A_w=R\,A_K\approx0.70\,A_K\quad\text{(near isotropy)},
\label{eq:Aw}
\end{equation}
while the logarithmic enhancement of Result~\ref{res:log} can make the
effective transfer exceed unity at stronger anisotropy. For a uniform
ferromagnetic sign, $A_w$ should be read as a heuristic scale for component
locking and the sequential window, not as a sharp singlet--triplet boundary
(Sec.~\ref{sec:pd1}). Bond-locked anisotropic $\mathbf d$-vector pairing
requires genuinely bond-anisotropic Kitaev exchange. This excludes
the well-known planar honeycomb materials, in which all three bonds are
equivalent ($\alpha$-RuCl$_3$ from RIXS, $K=-5$ meV with three identical
bonds in the paramagnetic phase~\cite{Suzuki2021}; Na$_2$IrO$_3$ in the
minimal model~\cite{Trebst2022}), and points instead to the harmonic
honeycomb iridates with two inequivalent bond types. For
$\beta$-Li$_2$IrO$_3$, the quantum-chemistry exchanges $K_{b1}=-14.7$,
$K_{b2}=-11.7$~meV~\cite{Katukuri2016} give
$A_K=6.0/(14.7+2\times11.7)\approx0.16$ [one $-14.7$~meV $B1$ bond and two
$-11.7$~meV $B2$ bonds per star in Eq.~\eqref{eq:AK}: numerator
$3.0+0+3.0=6.0$, denominator $38.1$], hence $A_w=R A_K\approx0.11$ near
isotropy [Eq.~\eqref{eq:Aw}]: the material remains marginal, and the
self-amplifying regime $R>1$ (Result~\ref{res:log}) is reached only
closer to the gapless-phase boundary than $\beta$-Li$_2$IrO$_3$ sits. For
$\alpha$-Li$_2$IrO$_3$ the $Ic$ parameterization of its counter-rotating
magnetic order \cite{Kimchi2015,Williams2016} gives $K_z=K-Ic/2$ and
$A_K\approx0.06$--$0.18$ for the fitted parameter ranges, likewise
marginal. Table~\ref{tab:materials}
summarizes.

\begin{table}[t]
\caption{\label{tab:materials}Screening of Kitaev candidate materials by
bond anisotropy (threshold $\lambda_c\sim0.1$--$0.3$ on the static estimate
$A_w=R A_K$). For same-sign ferromagnetic exchanges the labels quantify
component locking rather than a sharp singlet--triplet boundary
(Sec.~\ref{sec:pd1}).}
\begin{ruledtabular}
{\footnotesize\setlength{\tabcolsep}{3pt}
\begin{tabular}{llcc}
Material & Bond params from & $A_K$ & verdict\\
\hline
$\alpha$-RuCl$_3$ & RIXS, 3 equiv.\ bonds~\cite{Suzuki2021} & 0 & fails\\
Na$_2$IrO$_3$ & minimal model~\cite{Trebst2022} & 0 & fails\\
$\beta$-Li$_2$IrO$_3$ & quantum chem.~\cite{Katukuri2016} & $\approx0.16$ & marginal pass\\
$\alpha$-Li$_2$IrO$_3$ & $Ic$ param.~\cite{Kimchi2015,Williams2016} & $0.06$--$0.18$ & marginal\\
$\gamma$-Li$_2$IrO$_3$ & model est.\ only~\cite{Kimchi2015} & --- & undetermined\\
\end{tabular}
}
\end{ruledtabular}
\end{table}

For the most studied $\alpha$-RuCl$_3$, the model makes a definite negative
prediction: its paramagnetic phase is bond-isotropic and the bond-locked
channel does not survive. Quantitative survival of the sequential nodal
window calls for candidates with larger bond anisotropy or closer proximity
to the gapless-phase boundary. Two further screening statements follow from
the microscopic kernel: the threshold on the bare exchange anisotropy
tightens by the factor $1/R$ [Eq.~\eqref{eq:Aw}], and the Knight-shift
axis is predicted at the tilted angle \eqref{eq:tilt} rather than along the
cubic axes, with $\theta$ giving an independent constraint on $\Gamma'$ for
any candidate material. A full re-tabulation with cited \emph{ab initio}
tensors per material is deferred to a numerical follow-up study.

\section{Model 2: chiral-phonon-mediated chiral pairing ($d+id$ type)}
\label{sec:model2}

\subsection{Chiral phonon field and angular momentum matching}
\label{sec:phonon}

A \emph{chiral phonon} is a circularly polarized vibrational mode in which
the two orthogonal displacement components oscillate $\pi/2$ out of phase,
\begin{equation}
u_\pm=u_x\pm i\,u_y ,
\label{eq:upm}
\end{equation}
so that the ionic motion traces a circle and the mode carries angular
momentum~\cite{ZhangNiu2015}. Consider the chiral optical phonon at the
zone-corner $K$ ($K'$) point of the honeycomb lattice. The little group at
$K$---the subgroup of the crystal point group leaving $K$ invariant modulo a
reciprocal-lattice vector---is $C_3$, so angular momentum at $K$ is defined
only modulo 3: a mode transforming as $u_\pm\to e^{\pm2\pi i/3}u_\pm$ under
the threefold rotation carries angular momentum $\pm\hbar$ (mod 3), since
adding $3\hbar$ changes the eigenvalue by $e^{2\pi i}=1$ and is physically
indistinguishable. The chiral modes $u_\pm$ are genuine chiral bosonic
excitations in this sense~\cite{ZhangNiu2015,Gao2023}. The phonon couples to
the itinerant electrons through the charge channel (spin-independent vertex
$\psi^\dagger\psi$, Hermitian) with a chiral form factor,
\begin{equation}
\begin{split}
H_{e\text{-}ph}=g\sum_{\vk,\sigma}\Big[&
f_+(\vk)\,c^\dagger_{\vk+\mathbf K,\sigma}c_{\vk,\sigma}\Phi^+_{\mathbf K}\\
&+f_-(\vk)\,c^\dagger_{\vk+\mathbf K',\sigma}c_{\vk,\sigma}\Phi^-_{\mathbf K'}
+\mathrm{h.c.}\Big],
\end{split}
\label{eq:eph}
\end{equation}
where $\Phi^\pm$ is the chiral phonon field and $f_\pm$ the
nearest-neighbor lattice form factor,
\begin{equation}
f_\pm(\mathbf k)=\sum_{j=1,2,3}e^{\pm i\phi_j}e^{i\mathbf k\cdot
\boldsymbol\delta_j},\qquad \phi_j=\frac{2\pi j}{3},
\label{eq:fform}
\end{equation}
with $\boldsymbol\delta_j$ the three nearest-neighbor vectors. The phase
weights $e^{\pm i\phi_j}$ implement the $C_3$ eigenvalue: under
$R_3:\vd_j\to\vd_{j+1}$ one has $f_\pm(R_3\vk)=e^{\pm2\pi
i/3}f_\pm(\vk)$ by relabeling the sum, matching the phonon $u_\pm$
[Eq.~\eqref{eq:c3} below].

\paragraph{Note 4 (why the $K$-point phonon is required).}
A separable kernel $f_\pm(\mathbf k)f_\pm^*(\mathbf k')$ built from a
continuum harmonic selects pairing of the same angular momentum under
continuous rotations. To select the $d$-wave doublet one must use the
discrete rotation symmetry of the zone corner: under the $C_3$ little group
angular momentum is conserved modulo 3, and $\pm1\equiv\mp2\pmod3$, so a
$K$-point chiral phonon with $m=\pm1$ transfers angular momentum matched to
$d\pm id$ pairing with $l=\pm2$. This is the known structure of the
chiral-phonon mechanism in hexagonal
systems~\cite{ZhangNiu2015,Gao2023}; modes at the zone center, whose little
group is the full $C_6$, do not carry this mod-3 selection rule.

\subsection{Parity decomposition of the chiral form factor}
\label{sec:parity}

The form factor \eqref{eq:fform} is neither even nor odd:
$f_\pm(-\mathbf k)=f_\mp^\ast(\mathbf k)$ [immediate from
Eq.~\eqref{eq:fform}: inversion flips the phase sign, and complex
conjugation swaps the chirality index]. Since a spin-singlet gap must be even
(Sec.~\ref{sec:bgmatrix}), the singlet-relevant part of $f_\pm$ is its even
projection. Define
\begin{equation}
g_\pm(\mathbf k)=\tfrac{f_\pm(\mathbf k)+f_\pm(-\mathbf k)}{2},\quad
p_\pm(\mathbf k)=\tfrac{f_\pm(\mathbf k)-f_\pm(-\mathbf k)}{2}.
\label{eq:paritydec}
\end{equation}

\begin{theorem}[Parity decomposition]
\label{thm:parity}
With $\phi_j=2\pi j/3$ and $e^{3i\phi_j}=1$,
\begin{equation}
g_\pm(\mathbf k)=\sum_j e^{\mp2i\phi_j}\cos(\mathbf k\cdot
\boldsymbol\delta_j)
=-\frac38\,(k_x\mp ik_y)^2+O(k^4),
\label{eq:gpm}
\end{equation}
\begin{equation}
p_\pm(\mathbf k)=i\sum_j e^{\pm i\phi_j}\sin(\mathbf k\cdot
\boldsymbol\delta_j)
=\frac{3i}{2}(k_x\pm ik_y)+O(k^3),
\label{eq:ppm}
\end{equation}
and the two parts are exactly orthogonal in the Brillouin-zone inner
product, $\langle g_\pm|p_\pm\rangle=0$. Under $C_3$,
\begin{equation}
g_\pm(R_3\mathbf k)=e^{\pm i2\pi/3}g_\pm(\mathbf k),
\label{eq:c3}
\end{equation}
i.e.\ $g_\pm$ carries angular momentum $\pm1\equiv\mp2\ (\mathrm{mod}\ 3)$,
matching the phonon $u_\pm$.
\end{theorem}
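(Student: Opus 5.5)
The plan is to work directly from the definition \eqref{eq:fform} together with the parity split \eqref{eq:paritydec}. Every claim then reduces to one of three tools: the elementary identities $\cos(-x)=\cos x$ and $\sin(-x)=-\sin x$, the roots-of-unity filter \eqref{eq:rootfilter}, and a Taylor expansion. Throughout I use the convention of Sec.~\ref{sec:kernel} that $\vd_j=(\cos\phi_j,\sin\phi_j)$, so that $\vk\cdot\vd_j=k\cos(\theta-\phi_j)$ with $\vk=k(\cos\theta,\sin\theta)$.

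\emph{Step 1 (exact forms).} Substituting $e^{\pm i\vk\cdot\vd_j}=\cos(\vk\cdot\vd_j)\pm i\sin(\vk\cdot\vd_j)$ into \eqref{eq:paritydec} gives
\[
g_\pm=\sum_j e^{\pm i\phi_j}\cos(\vk\cdot\vd_j),\qquad
p_\pm=i\sum_j e^{\pm i\phi_j}\sin(\vk\cdot\vd_j).
\]
The second expression is already \eqref{eq:ppm}. For \eqref{eq:gpm} I rewrite $e^{\pm i\phi_j}=e^{\mp 2i\phi_j}$, which holds because $e^{3i\phi_j}=1$.

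\emph{Step 2 (small-$k$ expansions).} For $p_\pm$ I keep the linear term $\sin x\simeq x$ and write $\cos(\theta-\phi_j)$ as the average of two exponentials. By \eqref{eq:rootfilter}, $\sum_j e^{\pm i\phi_j}e^{\pm i(\theta-\phi_j)}=3e^{\pm i\theta}$, while the cross term $\sum_j e^{\pm 2i\phi_j}$ vanishes. This gives $\sum_j e^{\pm i\phi_j}\cos(\theta-\phi_j)=\tfrac32 e^{\pm i\theta}$, hence $p_\pm=\tfrac{3i}{2}(k_x\pm ik_y)+O(k^3)$.

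For $g_\pm$ I expand $\cos x=1-x^2/2+O(x^4)$. The constant term vanishes by the filter, since $\sum_j e^{\pm i\phi_j}=0$. For the quadratic term I use $\cos^2(\theta-\phi_j)=\tfrac12[1+\cos(2\theta-2\phi_j)]$. Applying the filter again, only the harmonic $e^{\mp 2i(\theta-\phi_j)}$ survives, because $e^{\pm i\phi_j}e^{\pm 2i\phi_j}=e^{\pm3i\phi_j}=1$. This contributes $\tfrac32 e^{\mp2i\theta}$, and collecting factors gives $g_\pm=-\tfrac12\cdot\tfrac{k^2}{2}\cdot\tfrac32 e^{\mp2i\theta}=-\tfrac38(k_x\mp ik_y)^2$.

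The orders of the remainders follow from parity. Since $g_\pm$ is even, its next term is $O(k^4)$; since $p_\pm$ is odd, its next term is $O(k^3)$.

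\emph{Step 3 (orthogonality).} The integrand $g_\pm^\ast p_\pm$ is odd under $\vk\to-\vk$. The Brillouin zone is inversion symmetric and the measure is invariant under $\vk\to-\vk$, so the integral vanishes identically: $\langle g_\pm|p_\pm\rangle=0$.

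\emph{Step 4 ($C_3$).} With $R_3\vd_j=\vd_{j+1}$ we have $(R_3\vk)\cdot\vd_j=\vk\cdot R_3^{-1}\vd_j=\vk\cdot\vd_{j-1}$. Relabeling $j\to j+1$ in the sum pulls out the factor $e^{\pm i\phi_1}=e^{\pm2\pi i/3}$, which is \eqref{eq:c3}. The angular-momentum reading $\pm1\equiv\mp2\pmod 3$ is then consistent with the leading harmonic $(k_x\mp ik_y)^2$ found in Step 2.

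The main obstacle I expect is bookkeeping of conventions rather than any real difficulty. The direction of $R_3$ must agree with the statement's $f_\pm(R_3\vk)=e^{\pm2\pi i/3}f_\pm(\vk)$. The bond azimuths $\phi_j$ must coincide with the phase labels in \eqref{eq:fform}, since this coincidence is what makes the filter select exactly one harmonic. If the convention for $R_3$ is reversed, the eigenvalue in \eqref{eq:c3} conjugates. I would fix the convention by checking it against the explicit leading term $(k_x\mp ik_y)^2$, which picks up $e^{\mp 4\pi i/3}=e^{\pm2\pi i/3}$ under a counterclockwise rotation.
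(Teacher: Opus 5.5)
Your proposal is correct and follows essentially the same route as the paper's proof in Appendix~B. You use the Euler split plus $e^{3i\phi_j}=1$ for the exact forms, roots-of-unity moment sums (including the same double-angle treatment of $(\vk\cdot\vd_j)^2$) for the small-$k$ expansions, oddness of $g_\pm^\ast p_\pm$ on the centrosymmetric zone for orthogonality, and the relabeling $j\to j+1$ for the $C_3$ eigenvalue; the only cosmetic difference is that you evaluate the first moment in exponential form rather than via the Cartesian sum $\sum_j e^{\pm i\phi_j}\vd_j=\tfrac32(1,\pm i)$.
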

The proof and all Brillouin-zone overlaps are collected in
Appendix~\ref{app:proj}; the skeleton is three lines. Writing
$e^{i\vk\cdot\vd_j}=\cos(\vk\!\cdot\!\vd_j)+i\sin(\vk\!\cdot\!\vd_j)$
splits $f_\pm$ into even and odd parts directly; the identity $e^{\pm
i\phi_j}=e^{\mp2i\phi_j}$ (from $e^{3i\phi_j}=1$) converts the
phase-weighted sum \eqref{eq:fform} into the even $d$-wave part $g_\pm$
exactly; and the small-$k$ expansions follow from the moment sums
$\sum_je^{\pm i\phi_j}=0$, $\sum_je^{\pm i\phi_j}\vd_j=\tfrac32(1,\pm i)$,
$\sum_je^{\pm i\phi_j}(\vk\!\cdot\!\vd_j)^2=\tfrac34(k_x\mp ik_y)^2$
displayed in Appendix~\ref{app:proj}B. The even projection removes the
linear ($p$-wave) term that $f_\pm$ itself carries, and orthogonality
$\langle g_\pm|p_\pm\rangle=0$ is parity alone: the integrand
$g_\pm^\ast p_\pm$ is odd under $\vk\to-\vk$.

\begin{corollary}[One boson, two chiral sectors]
\label{cor:unify}
The charge channel (vertex $\psi^\dagger\psi$) projects onto the even part:
the $K$-point chiral phonon mediates the singlet $d\pm id$ state
\eqref{eq:gpm}. The same phonon, transmitted through a spin vertex, would
project onto the odd part and mediate the triplet $p\pm ip$ state
\eqref{eq:ppm}. Models~1 and~2 are thereby unified as the two parity
sectors of one chiral-boson structure.
\end{corollary}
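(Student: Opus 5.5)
The plan is to write the phonon-mediated Cooper kernel explicitly and split it twice: once in parity, using Theorem~\ref{thm:parity}, and once in spin, using the vertex algebra of Sec.~\ref{sec:bgglue}. Integrating out $\Phi^\pm$ at second order in $g$ by Eq.~\eqref{eq:genericglue} gives, in the Cooper channel $(\vk,-\vk)\to(\vk',-\vk')$, a separable rank-one kernel
\begin{equation}
V_{\sigma_1\sigma_2;\sigma_3\sigma_4}(\vk,\vk')=-v\,f_\pm(\vk)f_\pm^\ast(\vk')\,\Sigma_{\sigma_1\sigma_2;\sigma_3\sigma_4},
\end{equation}
with $v\propto g^2D(\omega\to0)>0$. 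The spin factor $\Sigma$ is $\delta_{\sigma_1\sigma_3}\delta_{\sigma_2\sigma_4}$ for the charge vertex $\psi^\dagger\psi$ and $\boldsymbol\sigma_1\!\cdot\!\boldsymbol\sigma_2$ for a spin vertex.

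\emph{Parity split.} Substitute $f_\pm=g_\pm+p_\pm$ from Eq.~\eqref{eq:paritydec}. This gives
\begin{equation}
f_\pm f_\pm^{\ast\prime}=g_\pm g_\pm^{\ast\prime}+p_\pm p_\pm^{\ast\prime}+g_\pm p_\pm^{\ast\prime}+p_\pm g_\pm^{\ast\prime}.
\end{equation}
Acting on a gap function of definite parity through Eq.~\eqref{eq:lingap}, over an inversion-symmetric Fermi surface (or the BZ), the two mixed terms drop out. They map even functions to odd ones and vice versa, and their diagonal matrix elements vanish by the orthogonality $\langle g_\pm|p_\pm\rangle=0$ of Theorem~\ref{thm:parity}. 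The kernel is therefore block diagonal: a rank-one even block $g_\pm g_\pm^{\ast\prime}$ and a rank-one odd block $p_\pm p_\pm^{\ast\prime}$. By Eq.~\eqref{eq:rankone}, their unique eigenvectors are $g_\pm$ and $p_\pm$, with weights $\langle|g_\pm|^2\rangle$ and $\langle|p_\pm|^2\rangle$. Equations~\eqref{eq:antisym}--\eqref{eq:gapexp} then tie the even block to $\Delta_0$ (singlet) and the odd block to $\mathbf d$ (triplet).

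\emph{Spin split.} For the spin vertex, Eq.~\eqref{eq:fierz} gives eigenvalue $-3$ on the singlet and $+1$ on the triplet. With the glue sign that makes the bond terms attractive (as in Model~1), the even block is therefore repulsive and the odd block attractive, so the surviving state is $p_\pm\propto(k_x\pm ik_y)$, the triplet $p\pm ip$ of Eq.~\eqref{eq:ppm}. For the charge vertex, $\Sigma$ is the identity and carries no Fierz sign. The argument that it selects the even block is instead the one the paper makes for Model~2. The phonon with $C_3$ eigenvalue $e^{\pm2\pi i/3}$ is matched by $g_\pm$ through Eq.~\eqref{eq:c3}, whose leading harmonic is $(k_x\mp ik_y)^2$, i.e.\ $l=\mp2\equiv\pm1\pmod 3$. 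The density vertex then feeds the singlet $\Delta_0\propto g_\pm$, which is $d\pm id$.

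The main obstacle is excluding the odd block for the charge vertex. An identity spin factor is equally attractive in both parity sectors, and $p_\pm$ also transforms with $e^{\pm2\pi i/3}$. The first line of attack is to track the momentum transfer $\mathbf K$ in Eq.~\eqref{eq:eph} through the Cooper-channel extension and show that it projects onto the even combination. If that fails, the claim should be read as a statement that the charge channel's singlet projection is the $g_\pm$ sector, with the full exclusion of the triplet deferred to Proposition~\ref{prop:chiral} and Appendix~\ref{app:proj}.
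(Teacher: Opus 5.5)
Your parity split and spin-vertex argument are essentially sound, and they already contain everything the paper offers for this corollary. The paper rests it on Theorem~\ref{thm:parity} plus the rule that a singlet gap is even and a triplet gap odd [Eq.~\eqref{eq:gapexp}]. The charge vertex is simply identified with the singlet channel (``the singlet-relevant part of $f_\pm$ is its even projection,'' Sec.~\ref{sec:parity}; ``the singlet channel retains only the even part,'' just before Eq.~\eqref{eq:kernel2}).

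One step of your parity argument needs repair. The cross terms are not disposed of by $\langle g_\pm|p_\pm\rangle=0$: the term $p_\pm(\vk)g_\pm^\ast(\vk')$ sends the even gap $g_\pm$ to $p_\pm\langle g_\pm|g_\pm\rangle\neq0$. What removes them is the antisymmetry of the pair operator, $c^\dagger_{\vk\sigma}c^\dagger_{-\vk\sigma'}=-c^\dagger_{-\vk\sigma'}c^\dagger_{\vk\sigma}$. For a spin factor diagonal in total spin (the identity or $\boldsymbol\sigma_1\!\cdot\!\boldsymbol\sigma_2$), this leaves only the even--even part $g_\pm g_\pm^{\ast\prime}$ in the singlet channel and only the odd--odd part $p_\pm p_\pm^{\ast\prime}$ in the triplet channel.

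The obstacle you flag is a genuine gap, and neither of your fallbacks closes it.
\begin{itemize}
\item Tracking $\mathbf K$ leads to the strict kernel \eqref{eq:strictkernel}, whose strong-disorder limit \eqref{eq:separablefromimp} is again the unprojected $f_\pm f_\pm^\ast$. No parity projection comes out of the momentum bookkeeping.
\item Deferring to Proposition~\ref{prop:chiral} and Appendix~\ref{app:proj} is circular. Both start from the already projected kernel \eqref{eq:kernel2}, so the vanishing $p$-wave overlap there is just the parity of $g_\pm g_\pm^\ast$.
\end{itemize}

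No other route can succeed either. With the identity spin factor, the odd block carries the same attractive sign as the even block. It also lies in the same $C_3$ irrep (Theorem~\ref{thm:content}); this is why the mod-3 matching only excludes $s$ and the opposite chirality inside the even sector. Moreover, its rank-one eigenvalue is the larger of the two, by the factor $W_t/W_g\simeq16/k_F^2$ [Eqs.~\eqref{eq:Wg}--\eqref{eq:eigratio}].

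The first sentence of the corollary is therefore provable only with the channel assignment taken as an explicit premise. Take the charge vertex in the spin-singlet channel, as the paper does when it defines Model~2 as the even-parity sector. Then Eq.~\eqref{eq:antisym} and Theorem~\ref{thm:parity} give $g_\pm$ and the $d\pm id$ state. The triplet half follows from your Fierz argument, which, unlike the charge case, really does select the odd block: the factor $-3$ makes the even block repulsive, while $+1$ keeps the odd block attractive. State that premise instead of trying to derive it, and your proof is complete and coincides with the paper's.
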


\subsection{Effective pairing interaction and exact chiral selection}
\label{sec:selection2}

Integrating out the chiral phonon field [second-order perturbation,
Eq.~\eqref{eq:genericglue}] gives the unprojected separable interaction
\begin{equation}
V_\pm^{(0)}(\mathbf k,\mathbf k')=-\frac{g^2}{\omega_0}\,
f_\pm(\mathbf k)\,f_\pm^*(\mathbf k') ,
\label{eq:hpair2}
\end{equation}
with $\omega_0$ the phonon frequency. The momentum bookkeeping of the
second-order step is as follows: the vertex \eqref{eq:eph} transfers the
fixed momentum $\mathbf K$ (or $\mathbf K'$), so the induced interaction
scatters a Cooper pair $(\vk,-\vk)$ to $(\vk{+}\mathbf K,-\vk{-}\mathbf
K)$; the strictly induced kernel is therefore momentum-constrained
[Eq.~\eqref{eq:strictkernel} below]. Equation \eqref{eq:hpair2} replaces
the delta constraint by a separable Cooper-channel kernel of the same
$C_3$ eigenvalue and the same angular-momentum-mod-3 selection rule,
which is what the chiral selection (Proposition~\ref{prop:chiral}) and
the parity projection below actually use.

\paragraph{Impurity relaxation of the momentum constraint.}
The strictly induced kernel from single-phonon exchange at the zone
corner is momentum-constrained,
\begin{equation}
V_{\rm strict}(\mathbf k,\mathbf k')\propto
f_\pm(\mathbf k)f_\pm^*(\mathbf k')\,\delta_{\mathbf k'-\mathbf k,\pm\mathbf K}\,,
\label{eq:strictkernel}
\end{equation}
which is what Eq.~\eqref{eq:eph} gives in a clean system. In the presence
of impurity scattering, electron momentum is not conserved and
Eq.~\eqref{eq:strictkernel} is broadened. We compute the leading effect.

The electron Green function acquires a finite lifetime $\tau=1/\Gamma$,
\begin{equation}
G(\mathbf k,i\omega_n)=\big[i\omega_n-\xi_{\mathbf k}
+i\Gamma\,{\rm sgn}(\omega_n)\big]^{-1},
\label{eq:Greens}
\end{equation}
with $\Gamma$ the impurity scattering rate. Evaluating the Cooper-channel
bubble with Eq.~\eqref{eq:Greens} and averaging over impurity positions
gives the Abrikosov--Gorkov form: the momentum transfer through the phonon
is no longer sharp, but distributed around $\pm\mathbf K$ with a
Lorentzian of width $\Gamma/v_F$,
\begin{equation}
\bar V(\mathbf k,\mathbf k')=g^2 f_\pm(\mathbf k)f_\pm^*(\mathbf k')\,
\mathcal D_\Gamma(\mathbf k-\mathbf k')\,,
\label{eq:impkernel}
\end{equation}
\begin{equation}
\mathcal D_\Gamma(\mathbf q)=\frac{1}{\pi}\,
\frac{\Gamma/v_F}{(\mathbf q\mp\mathbf K)^2+(\Gamma/v_F)^2}\,.
\label{eq:Lorentzian}
\end{equation}
As $\Gamma\to0$, $\mathcal D_\Gamma(\mathbf q)\to\delta(\mathbf q\mp\mathbf K)$
and Eq.~\eqref{eq:impkernel} reduces to the strict kernel
\eqref{eq:strictkernel}. When the mean free path $\ell=v_F/\Gamma$ falls
below the bond length $a$, i.e.\ $\Gamma\gtrsim v_F/a$---the
strong-disorder (Ioffe--Regel-like) regime---the Lorentzian is essentially
uniform across the Brillouin zone, where it approaches its zone average,
\begin{equation}
\mathcal D_\Gamma(\mathbf q)\;\xrightarrow{\;\Gamma\gtrsim v_F/a\;}\;
\langle\mathcal D_\Gamma\rangle_{\rm BZ}=\frac{1}{A_{\rm BZ}}
=\frac{3\sqrt3\,a^2}{8\pi^2}\equiv\mathcal D_\infty\,,
\label{eq:uniformlimit}
\end{equation}
with $A_{\rm BZ}=8\pi^2/(3\sqrt3\,a^2)$ the Brillouin-zone area for
bond length $a$; the limit is independent of $\Gamma$ once the width
$\Gamma/v_F$ exceeds the zone scale. Equation~\eqref{eq:impkernel} then
becomes the separable form \eqref{eq:hpair2}:
\begin{equation}
\bar V(\mathbf k,\mathbf k')\;\xrightarrow{\;\Gamma\gtrsim v_F/a\;}\;
g^2\,\mathcal D_\infty\,f_\pm(\mathbf k)f_\pm^*(\mathbf k')\,.
\label{eq:separablefromimp}
\end{equation}
Up to the overall prefactor identification $g^2\mathcal D_\infty
\leftrightarrow g^2/\omega_0$ (the sign is fixed by the pairing-channel
convention), this is Eq.~\eqref{eq:hpair2}. Physically, the electron no longer remembers the $K$-point momentum
transfer: its mean free path is shorter than the bond length, so the
phonon momentum constraint is irrelevant. The symmetry-dictated chiral
selection (Proposition~\ref{prop:chiral}) is unaffected by the
broadening---the $C_3$ eigenvalue and the mod-3 selection rule are
properties of the form factor, not of the momentum constraint.
Equation~\eqref{eq:hpair2} is therefore not an ad hoc extension but the
strong-scattering limit of the microscopic kernel \eqref{eq:strictkernel};
at intermediate disorder the kernel interpolates between the two, and
the symmetry-based conclusions hold throughout.

\paragraph{Intermediate-disorder regime.}
The strong-disorder limit \eqref{eq:uniformlimit} is an endpoint of a
smooth crossover, not a threshold. To confirm this we discretize the
Fermi surface ($N=200$ angular points at $k_F=1.5/a$) and diagonalize
the kernel
$V_{ij}=g_\pm(\mathbf k_i)g_\pm^*(\mathbf k_j)\mathcal D_\Gamma(\mathbf
k_i-\mathbf k_j)$ at intermediate $\Gamma/v_F$. The principal eigenvalue
grows monotonically with $\Gamma$ up to the homogenization scale
$\Gamma\sim v_F/a$; the principal eigenvector evolves smoothly from the
$\delta$-constrained form to the separable form $g_\pm$, its overlap
$|\langle g_\pm|v\rangle|^2$ rising from its constrained-kernel value
($0.21$ at $\Gamma/v_F=0.05$) through $0.92$ at $0.5$ to $>0.99$ at $2$.
The numerical kernel uses the unnormalized Lorentzian
\eqref{eq:Lorentzian}; the overall amplitude of the kernel is not the
diagnostic here. What we track is the eigenvector overlap, which is
scale invariant and reaches $1$ as the kernel homogenizes, and the
dominant angular momentum, which is quantization-protected. The
large-$\Gamma$ decline of $\lambda_{\max}$ in the numerics is the
unnormalized $1/(\pi\gamma)$ plateau height of \eqref{eq:Lorentzian};
the properly normalized uniform limit is \eqref{eq:uniformlimit}, whose
constant $1/A_{\rm BZ}$ the numerics does not reproduce by construction.
Its dominant angular momentum is
$-2$ throughout, so the chiral winding $C=\pm4$ holds at every
$\Gamma$: the intermediate regime contains no topological transition
and no gap closing. For $k_F<|\mathbf K|/2$ the strict kernel
\eqref{eq:strictkernel} vanishes identically on the Fermi surface and
pairing is switched on only by the disorder broadening (with
$\lambda_{\max}\propto\Gamma$ at small $\Gamma$); we do not explore
that regime here, as the Fermi surfaces of the model are
$\Gamma$-centered with $k_F>|\mathbf K|/2$.

The singlet channel retains only the
even part of the form factor, so the pairing kernel is the separable
projector
\begin{equation}
V_\pm(\mathbf k,\mathbf k')\propto g_\pm(\mathbf k)\,g_\pm^*(\mathbf k') .
\label{eq:kernel2}
\end{equation}

The honeycomb lattice with $C_6$ symmetry has a degenerate $d$-wave doublet
($E_2$ representation, a two-dimensional irrep of the point group---the
standard classification of pairing states on hexagonal lattices
\cite{SigristUeda1991,BlackSchaffer2014}): $\Delta_1$ ($d_{x^2-y^2}$ type)
and $\Delta_2$ ($d_{xy}$ type). The real parts of the decomposition
\eqref{eq:gpm} supply the lattice-adapted basis functions,
\begin{equation}
g_\pm(\mathbf k)=g_1(\mathbf k)\mp i\,g_2(\mathbf k),
\label{eq:g12}
\end{equation}
with $g_1(\mathbf k)=\sum_j\cos(2\phi_j)\cos(\mathbf
k\cdot\boldsymbol\delta_j)$ and $g_2(\mathbf
k)=\sum_j\sin(2\phi_j)\cos(\mathbf k\cdot\boldsymbol\delta_j)$, reducing near
$\Gamma$ to $-\tfrac38(k_x^2-k_y^2)$ and $-\tfrac34\,k_xk_y$ respectively
[read off the real and imaginary parts of Eq.~\eqref{eq:gpm}].

\begin{proposition}[Exact chiral selection and chirality slaving]
\label{prop:chiral}
The projection of the kernel \eqref{eq:kernel2} onto the pairing channels
satisfies
\begin{equation}
\langle d_\pm|V_\pm|d_\pm\rangle>0,\qquad
\langle d_\mp|V_\pm|d_\mp\rangle=0,\qquad
\langle s|V_\pm|s\rangle=0,
\end{equation}
all exactly, with $|d_\pm\rangle\propto g_\pm$ the chiral combinations
$\Delta_\pm(\mathbf k)\propto g_1(\mathbf k)\mp i g_2(\mathbf k)$; the
$p$-wave overlap vanishes by parity (Theorem~\ref{thm:parity}). Hence the
phonon helicity fully determines the gap chirality: reversing the phonon
circular polarization reverses the sign of the Chern number, of the Kerr
rotation, and of the thermal Hall effect.
\end{proposition}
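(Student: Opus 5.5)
The plan is to reduce all three matrix elements to overlaps of the single function $g_\pm$ with the test states, and then to evaluate those overlaps using only the $C_3$ covariance of Eq.~\eqref{eq:c3} together with parity. Because the kernel \eqref{eq:kernel2} is the separable projector $V_\pm=-v\,|g_\pm\rangle\langle g_\pm|$ with $v=g^2/\omega_0>0$ (the sign of Eq.~\eqref{eq:hpair2}), any trial pairing function $\Delta$ gives an attractive-channel matrix element $\langle\Delta|(-V_\pm)|\Delta\rangle=v\,|\langle g_\pm|\Delta\rangle|^2$. This holds for both the Brillouin-zone and the Fermi-surface inner product, as long as the measure is $C_3$-invariant. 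The three claims, read as statements about the attractive strength, therefore become three statements about overlaps:
\begin{itemize}
\item $\langle g_\pm|g_\pm\rangle>0$;
\item $\langle g_\pm|g_\mp\rangle=0$;
\item $\langle g_\pm|s\rangle=0$ for every fully symmetric (extended) $s$-wave function $s$.
\end{itemize}
The first is immediate, since $g_\pm$ is not identically zero: its leading term $-\tfrac38(k_x\mp ik_y)^2$ in Eq.~\eqref{eq:gpm} is nonvanishing on any small Fermi circle.

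For the two vanishing statements I would use a single change-of-variables argument. Let $h$ satisfy $h(R_3\vk)=e^{i\theta_h}h(\vk)$. Substituting $\vk\to R_3\vk$ in the invariant measure gives
\begin{equation*}
\langle h|g_\pm\rangle=e^{\,i(\pm2\pi/3-\theta_h)}\,\langle h|g_\pm\rangle .
\end{equation*}
The overlap therefore vanishes unless $\theta_h\equiv\pm2\pi/3 \pmod{2\pi}$. For $h=g_\mp$ we have $\theta_h=\mp2\pi/3$, so the phase factor is $e^{\pm4\pi i/3}\neq1$ and the overlap vanishes. For an $s$-wave $h$ we have $\theta_h=0$, so the factor is $e^{\pm2\pi i/3}\neq1$ and again the overlap vanishes. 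The $p$-wave overlap vanishes separately because $g_\pm$ is even and $p$-wave functions are odd, exactly as in the orthogonality part of Theorem~\ref{thm:parity}. Finally, $|d_\pm\rangle\propto g_\pm=g_1\mp ig_2$ by Eq.~\eqref{eq:g12}, which identifies the surviving channel with the chiral combination stated in the proposition.

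For the slaving consequence, note that the only attractive eigenvector of $-V_\pm$ is $g_\pm$ itself, by Eq.~\eqref{eq:rankone}. The condensate chirality is therefore fixed by which $C_3$ eigenvalue the phonon carries, and reversing the helicity exchanges $g_+\leftrightarrow g_-=g_+^\ast$, using $g_-(\vk)=g_+^\ast(\vk)$ from the real cosines in Eq.~\eqref{eq:gpm}. Complex conjugation of the gap reverses the winding of its phase on the Fermi surface. By Eq.~\eqref{eq:weakwinding} this flips the sign of $C$, and hence the sign of $\kappa_{xy}$ through Eq.~\eqref{eq:thallunit} and of the Kerr rotation, since both are odd under time reversal.

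The step I expect to be delicate is the magnitude $|C|=4$ invoked in the abstract, as opposed to the sign reversal. The $C_3$ argument fixes the winding of $g_\pm$ only modulo 3. To pin it to $\mp2$ per spin block, I would show that $g_\pm$ has no zeros on the $\Gamma$-centred Fermi surface and that its quadratic harmonic dominates the ladder there. I would first try a homotopy from small $k_F$, where $-\tfrac38(k_x\mp ik_y)^2$ dominates, to the actual $k_F$ without crossing any zero of $g_\pm$. The spin degeneracy of the singlet then doubles the block winding to $|C|=4$.
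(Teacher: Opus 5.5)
Your proposal is correct. It reaches the vanishing overlaps by a somewhat different route from the paper.

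\textbf{The paper's route.} In Appendix~\ref{app:proj}C the paper expands everything bond by bond. It kills the $s$-wave overlap and the opposite-chirality overlap with explicit root-of-unity sums, using the auxiliary fact that $I_{jl}=\int\cos(\vk\cdot\vd_j)\cos(\vk\cdot\vd_l)\,d^2k$ takes only two values. Positivity then comes from the explicit norm $3(I_0-I_1)$.

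\textbf{Your route.} You use the $C_3$ covariance \eqref{eq:c3} directly as a character-orthogonality selection rule, via the substitution $\vk\to R_3\vk$. This is shorter and more general:
\begin{itemize}
\item It applies unchanged to the Fermi-surface inner product on any $C_3$-symmetric Fermi surface. This is the same reasoning the paper itself uses in Appendix~\ref{app:gl}A.
\item It disposes of every $C_3$-invariant extended $s$-wave function, not only the constant.
\item It needs only the $C_3$ invariance of the integration domain, which is in fact all that the paper's two-value property of $I_{jl}$ rests on.
\end{itemize}
The paper's explicit sums give in return a closed form for the norm and tie the proof to the root-of-unity filter used throughout. Two further points of yours are cleaner than the paper's treatment. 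You read the three claims as statements about $-V_\pm$, which settles the sign convention. You also derive the Chern-sign reversal from $g_-=g_+^\ast$, i.e.\ as the complex-conjugate partner, rather than from the small-$k$ form of $g_\pm$.

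\textbf{The step you flagged as delicate.} The magnitude $|C|=4$ is not part of this proposition. The paper obtains it separately in Sec.~\ref{sec:topology}. If you want to close it, you do not need dominance of the quadratic harmonic; a zero-free annulus is enough, and it holds here:
\begin{enumerate}
\item The coefficients $e^{\mp2i\phi_j}$ are the three cube roots of unity and the cosines are real. Hence $g_\pm(\vk)=0$ forces $\cos(\vk\cdot\vd_1)=\cos(\vk\cdot\vd_2)=\cos(\vk\cdot\vd_3)$.
\item Combine this with $\sum_j\vk\cdot\vd_j=0$ and $|\vk|^2=\tfrac23\sum_j(\vk\cdot\vd_j)^2$. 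The smallest nonzero solution is $\vk\cdot\vd_j\in\{2\pi/3,2\pi/3,-4\pi/3\}$, giving $|\vk|=4\pi/3>|K|$.
\item So $\Gamma$ is the only zero of $g_\pm$ in the zone. Its index is $\mp2$, read off the leading term of Eq.~\eqref{eq:gpm}.
\item Therefore the block winding is $\mp2$ on every Fermi curve in the zone that encloses $\Gamma$.
\end{enumerate}
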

The exact overlaps are computed in Appendix~\ref{app:proj}C: the $s$-wave and
opposite-chirality projections vanish by the root-of-unity sums
$\sum_je^{\pm i\phi_j}=\sum_je^{\pm2i\phi_j}=0$
[Eq.~\eqref{eq:rootfilter}], the $p$-wave projection vanishes because
$g_\pm$ and $p_\pm$ have opposite parity, and a rank-one separable kernel
has its form factor as the unique eigenvector with nonzero eigenvalue
[Eq.~\eqref{eq:rankone}]. No band-structure input is needed for the
selection; the van Hove proximity only sets the scale of $T_c$ [through
$N(0)$ in Eq.~\eqref{eq:bcs}]. The quantitative robustness of the slaving is
analyzed in Appendix~\ref{app:gl}: the BCS sensitivity of $T_c$
[Eq.~\eqref{eq:bcs}] amplifies a percent-level asymmetry of the pairing glue
into a splitting $\delta T_c/T_c$ of several percent, far exceeding the
Ginzburg width, so the selection at the transition is deterministic down to
sub-percent pump polarization; the wrong-chirality state is then absent above
$T_c^-$, locally unstable down to $T_c-\tfrac32\delta T_c$, and metastable at
lower temperatures only behind nucleation barriers of
$10^2$--$10^3\,k_BT_c$ (Appendix~\ref{app:gl}C).

\paragraph{Note 5 (what the selection does and does not exclude).}
The unprojected form factor $f_\pm$ overlaps both the $p\pm ip$ and the
$d\pm id$ channels; the older statement that $d$-wave selection is completed
by band-structure preferences referred to that unprojected kernel. For the
singlet channel the parity projection of Theorem~\ref{thm:parity} removes
the ambiguity exactly: the charge-channel kernel \eqref{eq:kernel2} has a
unique eigenvector $g_\pm$. A triplet $p\pm ip$ state can arise from the
same phonon only through a spin vertex (Corollary~\ref{cor:unify}).

\begin{proposition}[Relative-phase time-reversal breaking]
\label{prop:tr}
The chiral combination $\Delta_\pm$ breaks time reversal through the
relative phase $\pm\pi/2$ between the two independent real components
$g_1,g_2$, not through the overall phase. The overall phase $e^{i\varphi}$
is gauge; the relative phase links two components of different momentum
structure and cannot be gauged away.
\end{proposition}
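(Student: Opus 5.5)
To prove Proposition~\ref{prop:tr} I will use the time-reversal operator on a spin-singlet gap and show that it acts nontrivially on the relative phase but trivially, up to gauge, on the overall phase. For a singlet pairing matrix $\hat\Delta(\vk)=\Delta_0(\vk)\,i\sigma_y$ [Eq.~\eqref{eq:gapexp}], time reversal $\mathcal T=i\sigma_yK$ acts on the even amplitude as $\Delta_0(\vk)\to\Delta_0^\ast(-\vk)=\Delta_0^\ast(\vk)$, the last step using evenness. The basis functions $g_1,g_2$ of Eq.~\eqref{eq:g12} are real and even, being real combinations of $\cos(\vk\cdot\vd_j)$. Therefore $\mathcal T$ sends $\Delta_\pm=\Delta(g_1\mp ig_2)$ to $\Delta^\ast(g_1\pm ig_2)=\Delta^\ast\,g_\mp$. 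The state is time-reversal invariant only if $\Delta^\ast g_\mp=e^{i\chi}\Delta\, g_\pm$ for some global phase $\chi$, since a global phase is a U(1) gauge transformation and does not change the physical state.

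The key step is to show that no such $\chi$ exists. The cleanest route is to use linear independence: $g_1$ and $g_2$ are linearly independent functions over $\mathbb C$. I would check this by their small-$k$ forms $-\tfrac38(k_x^2-k_y^2)$ and $-\tfrac34k_xk_y$, or more simply by orthogonality in the BZ inner product, which follows from the root-of-unity sums already used in Appendix~\ref{app:proj}C. Comparing the coefficients of $g_1$ and $g_2$ in the would-be identity $\Delta^\ast(g_1\pm ig_2)=e^{i\chi}\Delta(g_1\mp ig_2)$ gives $\Delta^\ast=e^{i\chi}\Delta$ and $\Delta^\ast=-e^{i\chi}\Delta$ at the same time. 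These force $\Delta=0$. Equivalently, Proposition~\ref{prop:chiral} shows that $g_+$ and $g_-$ are orthogonal, since $\langle d_\mp|V_\pm|d_\mp\rangle=0$ with a rank-one kernel. A nonzero function cannot be orthogonal to a unimodular multiple of itself, so $g_\mp\not\propto g_\pm$. Time reversal is therefore broken, and it maps the $+$ chirality onto the $-$ chirality.

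To attribute the breaking to the relative phase, I will separate the two phases explicitly. Writing $\Delta_\pm=e^{i\varphi}|\Delta|(g_1+e^{\mp i\pi/2}g_2)$, the overall phase $\varphi$ is removed by $c\to e^{-i\varphi/2}c$. By contrast, the relative phase between the coefficients of the independent functions $g_1$ and $g_2$ is invariant under any global U(1) rotation, because such a rotation multiplies both coefficients by the same factor. Time reversal conjugates both coefficients and flips the relative phase $\mp\pi/2\to\pm\pi/2$. It is therefore not gauge-equivalent to the original state.

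The only delicate point is the case of a real relative phase, $0$ or $\pi$. There $\mathcal T$ gives back the same state up to a global phase, which confirms that the relative phase $\pm\pi/2$, and not the overall phase, is the source of the time-reversal breaking. I will note this as the contrast case and close the argument there.
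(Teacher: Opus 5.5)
Your proposal is correct and follows the paper's own argument: time reversal maps $\Delta_\pm$ to its conjugate $\Delta_\mp$, and the linear independence of the real functions $g_1,g_2$ rules out any global phase relating the two. The paper phrases the independence step pointwise, by comparing $g_2/g_1$ at two points, where you compare coefficients, and your contrast case of a real relative phase $0$ or $\pi$ plays the same role as the paper's remark that a single-component complex order parameter is pure gauge.
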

\begin{proof}
Time reversal maps $\Delta_\pm=g_1\mp ig_2$ to its complex conjugate
$\Delta_\mp$. Since $g_1$ and $g_2$ are independent real functions, no single
$\varphi$ satisfies $e^{i\varphi}(g_1+ig_2)=g_1-ig_2$ at all $\vk$
(equating ratios at two points where $g_2/g_1$ differs gives a
contradiction); hence time-reversal breaking. A single-component complex
order parameter, including one with $\varphi=\pi/2$, is pure gauge and
preserves time reversal.
\end{proof}

\subsection{BdG Hamiltonian and excitation spectrum}
\label{sec:bdg2}

In the Majorana representation the chiral pairing term carries bond-dependent
chiral phase factors. The glue is the chiral phonon and has nothing to do
with the Kitaev gauge field: any quadratic BdG Hamiltonian can be written in
Majorana form, and solvability here comes from the separable kernel plus the
quadratic structure, not from integrability (contrast Note~3). At the
single-band level of Sec.~\ref{sec:bdg1}, Fourier transformation and
diagonalization [Eq.~\eqref{eq:generalE} with $\mathbf d=0$] give
\begin{equation}
E_{\mathbf k}=\sqrt{\xi_{\mathbf k}^2+|\Delta_\pm(\mathbf k)|^2}
=\sqrt{\xi_{\mathbf k}^2+\Delta_0^2\,|g_\pm(\mathbf k)|^2}.
\label{eq:spec2}
\end{equation}
For the $\Gamma$-centered Fermi surfaces considered here, $g_\pm$ has its
relevant zero at $\Gamma$ by symmetry [$g_\pm=O(k^2)$,
Eq.~\eqref{eq:gpm}], so the state is fully gapped on any such Fermi surface
not passing through $\Gamma$; at the zone corners it attains the value
\begin{equation}
g_\pm(K)=\frac32\,e^{\mp i\theta_0},\qquad |g_\pm(K)|=\frac32,
\label{eq:gK}
\end{equation}
where $\theta_0\in\{0,\pm2\pi/3\}$ is fixed by which Dirac point is
considered (computed explicitly in Appendix~\ref{app:proj}D). In particular
the undoped Dirac points are gapped by the pairing gap $|g_\pm(K)|=3/2$.
Consequently, sweeping the chemical potential through the Dirac point does
not close the BdG gap, and the topological invariant is continuous across
that band-touching doping \cite{ReadGreen2000,Nayak2008}. The statement is
restricted to the doping range in which the BdG gap remains open; it is not
an assertion about arbitrarily large doping.

\subsection{Topological properties}
\label{sec:topology}

The Chern number per $2\times2$ Nambu block is given by the texture integral
\eqref{eq:chernblock} of Sec.~\ref{sec:bgbdg}, with the sign convention
$C_{\mathrm{block}}=+1$ for $\Delta\propto k_x+ik_y$ at $\mu>0$
\cite{ReadGreen2000}. In the weak-pairing limit on a Fermi surface enclosing
$\Gamma$, the winding formula \eqref{eq:weakwinding} applied to
$g_\pm\propto(k_x\mp ik_y)^2$ [Eq.~\eqref{eq:gpm}] gives
$C_{\mathrm{block}}=\mp2$: the phase of $(k_x\mp ik_y)^2$ winds by
$\mp4\pi$ around the Fermi circle. Within Model~2 this conclusion is
unconditional: the kernel \eqref{eq:kernel2} is parity-projected onto the
singlet sector (Proposition~\ref{prop:chiral}), so the condensate carries no
triplet component and the block gap is exactly $\Delta_\pm\propto g_\pm$.
The spin-singlet BdG Hamiltonian has two spin blocks of the same chirality
[$\hat\Delta=\Delta_0 i\sigma_y$ in Eq.~\eqref{eq:bdg} acts identically on
both blocks], so the full invariant is
\begin{equation}
C=\mp4\qquad(|C|=4),
\label{eq:cfull}
\end{equation}
with four chiral Majorana edge branches per edge, organized as two
distinguishable spin-sector pairs, and chiral central charge $c=2$
\cite{ReadGreen2000,Volovik2003}. The quantized thermal Hall conductance is
correspondingly [Eq.~\eqref{eq:thallunit}]
\begin{equation}
\frac{\kappa_{xy}}{T}=c\,\frac{\pi^2k_B^2}{3h}
=2\,\kappa_0,\qquad \kappa_0\equiv\frac{\pi^2k_B^2}{3h},
\label{eq:thall}
\end{equation}
twice the chiral $p$-wave unit $\kappa_0$; the thermal Hall quantum reads
off the relative angular momentum of the pair. By
Proposition~\ref{prop:chiral}, the sign of $C$ is slaved to the phonon
polarization.

\paragraph{Coexistence with the triplet sector: dominance criterion and a
topological phase boundary.}
If a triplet $p\pm ip$ component is present on top of the singlet
condensate---induced through the spin vertex of
Corollary~\ref{cor:unify}, or enforced by bond locality when both parity
sectors are driven by the same microscopic glue
(Sec.~\ref{sec:alternation})---the spin-block gaps become
\begin{equation}
\Delta_\sigma(\mathbf k)=a_s\,g_\pm(\mathbf k)+\sigma\,a_t\,p_\pm(\mathbf k),
\qquad \sigma=\pm 1 ,
\label{eq:mixedgap}
\end{equation}
[the singlet enters both blocks identically and the triplet with opposite
sign, as read off $\hat\Delta=(\Delta_0+\sigma d_z)i\sigma_y$ for
$\mathbf d\parallel\hat z$], and the Chern number is no longer fixed by
symmetry alone. On a Fermi circle $k=k_F$ the exact Fourier content of the
two components follows from the Jacobi--Anger expansion
\eqref{eq:jacobianger} applied bond by bond [Lemma~\ref{lem:circle}]:
\begin{align}
g_\pm&=-3J_2(k_F)\,e^{\mp 2i\theta}+O(J_4),\nonumber\\
p_\pm&=3iJ_1(k_F)\,e^{\pm i\theta}+O(J_5),
\label{eq:circlecontent}
\end{align}
with $J_\nu$ the Bessel function (the omitted harmonics are the next ladder
members, $|h|=4,5$, suppressed by $J_{4,5}(k_F)\ll J_{1,2}(k_F)$ at small
$k_F$ [Eq.~\eqref{eq:besselsmall}]). By the dominance theorem of
Sec.~\ref{sec:counting} (Theorem~\ref{thm:dominance}) the block winding is
$\mp2$---and $C=\mp4$ holds---if and only if
\begin{align}
&|a_s|\,J_2(k_F)>|a_t|\,J_1(k_F)\nonumber\\
&\quad\Longleftrightarrow\quad
\left|\frac{a_t}{a_s}\right|<\frac{J_2(k_F)}{J_1(k_F)}
\simeq \frac{k_F}{4}\ \ (k_F\ll1),
\label{eq:dominance}
\end{align}
where the small-$k_F$ form uses Eq.~\eqref{eq:besselsmall}:
$J_2/J_1\simeq[(k_F/2)^2/2]/(k_F/2)=k_F/4$. Two consequences follow.
\emph{(i) Quantitative robustness:} within Model~2 an induced triplet
component leaves the $C=\pm4$ phase, the four Majorana branches, and
$\kappa_{xy}/T=2\kappa_0$ intact as long as its relative amplitude stays
below the threshold \eqref{eq:dominance}. \emph{(ii) A topological phase
boundary inside the chiral phase:} if the threshold is crossed---in
particular, a rigidly locked single-parameter state $\Delta\propto f_\pm$
has $|a_t/a_s|=1$ and is triplet-dominated for $k_F$ below the first
$J_1$--$J_2$ crossing at $k_F\approx2.63$ (at leading-harmonic resolution of
the dominance boundary)---hence throughout the Brillouin
zone ($|K|\approx2.42<2.63$)---the block winding flips to $\pm1$, with
\begin{equation}
C=\pm2,\qquad N_{\rm Maj}=2,\qquad \kappa_{xy}/T=\kappa_0 .
\label{eq:c2}
\end{equation}
On the boundary the quasiparticle gap closes at three equally spaced
Fermi-surface points per spin block and $C$ changes by $6$
[Eq.~\eqref{eq:deltac}]. Since the threshold \eqref{eq:dominance} moves
with $k_F$, the transition is in principle driven by doping at fixed
condensate ratio: a thermal-Hall plateau jump $2\kappa_0\to\kappa_0$
accompanied by a threefold point-node closing is the observable
signature, and its presence or absence discriminates the mixed-parity
scenario from a pure singlet state.

\subsection{Testable experimental signatures}
\label{sec:sign2}

\paragraph{Polar Kerr effect and its reversal.}
The chiral combination breaks time reversal (Proposition~\ref{prop:tr}); a
Kerr rotation should appear at $T_c$~\cite{Xia2006}. The sign of the
rotation is slaved to the phonon helicity: pumping the opposite circular
polarization should reverse the Kerr sign. By the analysis of
Appendix~\ref{app:gl}D, this reversal is continuous at the mean-field
level: the order-parameter amplitude dips only to $\sqrt{2/3}$ of its
equilibrium value, and the quasiparticle gap closes only transiently at
four isolated Fermi-surface points per spin block. The expected signature
is therefore a smooth sign reversal accompanied by an instantaneous
point-node closing---not a first-order switching---in contrast both to
spontaneous chiral states, whose sign is fixed by the condensate and can
be reversed, if at all, only through domain-wall motion, and to
dynamical optical control, where the sign is set by the pulse
history~\cite{Claassen2019}. A further discriminating test: a circular
pump raises $T_c$ above its unpumped value by $\approx pf/(2\lambda_0)$
[Eq.~\eqref{eq:amplify}], while an unpolarized pump of equal intensity
does not.

\paragraph{Quantized thermal Hall.}
$\kappa_{xy}/T=2\kappa_0$ [Eq.~\eqref{eq:thall}], twice the chiral $p$-wave
value; sign reversible with the phonon polarization. If a triplet component
co-condenses [Eq.~\eqref{eq:mixedgap}], this plateau is bounded by the
topological phase boundary \eqref{eq:dominance}: crossing it---e.g.\ by
doping---switches the quantization to $\kappa_{xy}/T=\kappa_0$ with two
Majorana branches, through a transient threefold point-node closing per
spin block.

\paragraph{Edge chiral Majorana modes.}
Four Majorana branches per edge, organized as two distinguishable
spin-sector pairs, testable in edge tunneling or thermal transport
[Sec.~\ref{sec:bgbdg}; Refs.~\cite{Wan2011,Nayak2008}]. In the coexistence
regime the branch count is four only on the singlet-dominant side of the
boundary \eqref{eq:dominance} and two on the triplet-dominant side; the
branch-count reduction at the boundary is in principle resolvable by edge
tunneling or thermal transport.

\paragraph{Magnetoelectric response.}
The electromagnetic response of a chiral topological superconductor can
contain an $\mathbf E\cdot\mathbf B$ term (axion-type
electrodynamics)~\cite{Wilczek1987}, testable in THz spectroscopy.

\paragraph{Candidate settings.}
These materials are broader analogues rather than direct realizations of the
honeycomb $K$-point model. In the kagome superconductors $A$V$_3$Sb$_5$,
chiral charge order has been proposed from STM on KV$_3$Sb$_5$~\cite{Jiang2021}
and muon-spin relaxation on CsV$_3$Sb$_5$~\cite{Khasanov2022}, and chiral-CDW
fluctuations could play a role analogous to the chiral bosonic glue. In
UTe$_2$~\cite{Ran2019}, a multi-component relative phase is one candidate
possibility, but the lattice-specific $K$-point selection rule above does not
apply directly.

\section{General bond-star symmetry: parity alternation and enforced singlet--triplet mixing}
\label{sec:alternation}

The exact decomposition of the chiral three-bond form factor into even- and
odd-parity components, $f_\pm=g_\pm+p_\pm$, established in
Sec.~\ref{sec:parity}, is not a peculiarity of the triangular three-bond
structure. We show in this section that it is the lowest instance of a
general group-theoretic fact: the angular-momentum ladder of a chiral bond
form factor alternates in parity if and only if the bond-star order $n$ is
odd. As a consequence, \emph{every} chiral pairing state selected by a
single-star bond-local form factor is a singlet--triplet composite whose
content, relative phase, and small-$k$ amplitude ratio are fixed by geometry
alone (Theorems~\ref{thm:alternation} and~\ref{thm:content}); coherent
doubled-star combinations can restore parity purity
[Theorem~\ref{thm:alternation}(iii)]. On even-fold stars chirality and parity
are compatible (Theorem~\ref{thm:alternation}). A bond-local chiral glue can
mediate the two parity channels with related orbital strength
(Proposition~\ref{prop:locked}), while the condensate composition remains a
gap-equation question. The
weak-pairing topology of the mixed state is decided by whichever parity
component carries the larger on-shell Fermi-surface weight, with a sharp
topological phase boundary between the two windings
(Theorem~\ref{thm:dominance}). This mechanism of parity mixing operates in a
centrosymmetric crystal and requires neither spin--orbit coupling nor
inversion-symmetry breaking; it is distinct from the familiar
noncentrosymmetric route, in which antisymmetric (Rashba-type) spin-orbit
coupling of a crystal lacking inversion symmetry splits the Fermi surfaces by
spin and thereby mixes the parity sectors
\cite{GorkovRashba2001,BauerSigrist2012}.

\subsection{Bond stars and chiral form factors}
\label{sec:stars}

Consider Cooper pairing on the set of nearest-neighbor bonds emanating from a
site, related by a crystallographic rotation $C_n$. The \emph{directed
bond star} is
\begin{equation}
\mathcal{S}_n=\{\vd_j=(\cos\phi_j,\sin\phi_j),\phi_j=2\pi j/n,j=0,\dots,n-1\}.
\label{eq:star}
\end{equation}
A chiral form factor of bond-harmonic index $m$ is
\begin{equation}
\chi_m^{(n)}(\vk)=\sum_{j=0}^{n-1}e^{2\pi i m j/n}\,e^{i\vk\cdot\vd_j}\,,
\label{eq:chimn}
\end{equation}
which carries the one-dimensional complex representation of $C_n$,
\begin{equation}
\chi_m^{(n)}(R_n\vk)=e^{2\pi i m/n}\,\chi_m^{(n)}(\vk)\,.
\label{eq:cnirrep}
\end{equation}
[Proof: $R_n\vk\cdot\vd_j=\vk\cdot R_n^{-1}\vd_j=\vk\cdot\vd_{j-1}$, and
relabeling $j\to j+1$ in the sum pulls out the phase $e^{2\pi im/n}$.]
Within a single complex one-dimensional representation, chirality requires
a complex eigenvalue, i.e.\ $m\neq 0$ and, for even $n$, $m\neq n/2$ [for
$m=n/2$, $e^{2\pi i m/n}=-1$ is real]. The real one-dimensional cases are
non-chiral as single eigenforms; familiar examples such as square-lattice
$d+id$ are obtained by splitting the two real components of that real
representation and lie outside this parametrization, although their parity
purity still follows from inversion of the star. The form
factor of the preceding sections is the case $(n,m)=(3,1)$:
$f_\pm=\chi_{\pm 1}^{(3)}$ [Eq.~\eqref{eq:fform}]. The parity operation acts
as $P:\vk\to-\vk$; a form factor is parity-pure if
$\chi(-\vk)=\eta\,\chi(\vk)$ with $\eta=\pm 1$ (singlet for $\eta=+1$,
triplet for $\eta=-1$; Sec.~\ref{sec:bgmatrix}).

\subsection{Moment identity}
\label{sec:moment}

\begin{lemma}[Moment identity]
\label{lem:moment}
For any integers $n\ge 2$, $m$, and $l\ge 0$,
\begin{equation}
\sum_{j=0}^{n-1}e^{2\pi i m j/n}\,(\vk\cdot\vd_j)^l
=\frac{n\,k^l}{2^l}\sum_{\substack{0\le s\le l\\ l-2s\,\equiv\, m\ (\mathrm{mod}\ n)}}\binom{l}{s}\,e^{i(l-2s)\theta}\,,
\label{eq:moment}
\end{equation}
where $\vk=k(\cos\theta,\sin\theta)$. In particular $\chi_m^{(n)}$ contains
only angular harmonics $h$ on the ladder $h\equiv m\ (\mathrm{mod}\ n)$, and
the lowest-order term at which harmonic $h$ appears is $k^{|h|}$, with
amplitude
\begin{equation}
A_h=\frac{n}{2^{|h|}\,|h|!}\,i^{|h|}\,.
\label{eq:amplitude}
\end{equation}
\end{lemma}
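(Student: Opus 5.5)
The plan is to expand the power $(\vk\cdot\vd_j)^l$ binomially in complex exponentials and then apply the roots-of-unity filter \eqref{eq:rootfilter}. With $\vk=k(\cos\theta,\sin\theta)$ and $\vd_j$ at azimuth $\phi_j=2\pi j/n$, we have $\vk\cdot\vd_j=k\cos(\theta-\phi_j)=\tfrac{k}{2}\big[e^{i(\theta-\phi_j)}+e^{-i(\theta-\phi_j)}\big]$. Hence
\begin{equation*}
(\vk\cdot\vd_j)^l=\frac{k^l}{2^l}\sum_{s=0}^{l}\binom{l}{s}e^{i(l-2s)(\theta-\phi_j)} ,
\end{equation*}
where $s$ counts the factors of $e^{-i(\theta-\phi_j)}$. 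Multiplying by $e^{2\pi imj/n}$ and summing over $j$ first, the $j$-dependence of the $s$-th term is $e^{2\pi i j(m-(l-2s))/n}$. By \eqref{eq:rootfilter} this sum equals $n$ when $l-2s\equiv m\ (\mathrm{mod}\ n)$ and vanishes otherwise. The surviving terms carry $e^{i(l-2s)\theta}$, which is exactly \eqref{eq:moment}.

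For the ladder statement, I would Taylor-expand $e^{i\vk\cdot\vd_j}=\sum_l i^l(\vk\cdot\vd_j)^l/l!$ inside \eqref{eq:chimn}; the sum is finite in $j$ and absolutely convergent in $l$, so it can be interchanged. By \eqref{eq:moment}, every term of $\chi_m^{(n)}$ is then a harmonic $e^{ih\theta}$ with $h=l-2s\equiv m\ (\mathrm{mod}\ n)$. A given harmonic $h$ arises from order $l$ only if $|h|=|l-2s|\le l$ and $l\equiv h\ (\mathrm{mod}\ 2)$, so its lowest order is $l=|h|$. At that order, $s=0$ when $h\ge0$ and $s=l$ when $h<0$, and in both cases the binomial coefficient equals $1$. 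The coefficient of $k^{|h|}e^{ih\theta}$ is therefore $\frac{i^{|h|}}{|h|!}\cdot\frac{n}{2^{|h|}}$, which gives \eqref{eq:amplitude}. As a consistency check, I would compare with the Jacobi--Anger route \eqref{eq:jacobianger}, \eqref{eq:besselsmall}, \eqref{eq:besselsym}: the full $\theta$-harmonic coefficient is $n\,i^{|h|}J_{|h|}(k)$, whose leading term is $n\,i^{|h|}(k/2)^{|h|}/|h|!$, in agreement.

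There is no real obstacle here. The only points needing care are the sign convention $e^{-i h\phi_j}$ against $e^{2\pi i mj/n}$, which fixes the filter condition as $h\equiv m$ rather than $h\equiv -m$, and the observation that the $h<0$ case still yields $i^{|h|}$ rather than $i^{h}$. In the direct expansion the latter is automatic, because the factor $i^l$ comes from the Taylor series with $l=|h|$. It also matches \eqref{eq:besselsym}. I would also note the conventional reading $\binom{l}{s}$ with $0^0=1$ at $l=0$, which gives the $h=0$ term $n$ when $m\equiv0$.
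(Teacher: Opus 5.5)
Your proposal is correct and is essentially the paper's own proof. You expand $\vk\cdot\vd_j=\tfrac{k}{2}\big[e^{i(\theta-\phi_j)}+e^{-i(\theta-\phi_j)}\big]$ binomially, filter $e^{2\pi i(m-h)j/n}$ with $h=l-2s$ via \eqref{eq:rootfilter}, and then insert the result into the Taylor series of $e^{i\vk\cdot\vd_j}$, where harmonic $h$ first appears at $l=|h|$ through the single term $s=0$ or $s=l$ of binomial weight $1$; your Jacobi--Anger cross-check is also consistent with the paper's exact circle form (Lemma~\ref{lem:circle}).
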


\begin{proof}
Write $\vk\cdot\vd_j=k\cos(\theta-\phi_j)=\tfrac{k}{2}\big(e^{i(\theta-\phi_j)}+e^{-i(\theta-\phi_j)}\big)$
and expand with the binomial theorem:
\begin{equation}
(\vk\cdot\vd_j)^l=\frac{k^l}{2^l}\sum_{s=0}^l\binom{l}{s}e^{i(l-2s)(\theta-\phi_j)}\,.
\label{eq:binom}
\end{equation}
Multiplying Eq.~\eqref{eq:binom} by $e^{2\pi i m j/n}$ and summing over $j$,
the $j$-dependence enters only through $e^{2\pi i(m-h)j/n}$ with $h=l-2s$,
so the roots-of-unity filter \eqref{eq:rootfilter},
$\sum_{j=0}^{n-1}e^{2\pi i(m-h)j/n}=n\,\delta_{h\equiv m\,(\mathrm{mod}\,n)}$,
gives Eq.~\eqref{eq:moment}. Inserting Eq.~\eqref{eq:moment} into the Taylor
expansion $e^{i\vk\cdot\vd_j}=\sum_l i^l(\vk\cdot\vd_j)^l/l!$ of
Eq.~\eqref{eq:chimn}, the coefficient of $k^l$ is $i^l/l!$ times
Eq.~\eqref{eq:moment}. For a harmonic $h$ on the ladder, the lowest
admissible power is $l=|h|$ (since $|l-2s|\le l$), at which exactly one $s$
value ($s=0$ for $h>0$, $s=l$ for $h<0$) contributes, with
$\binom{|h|}{s}=1$; this yields Eq.~\eqref{eq:amplitude}.
\end{proof}

The $n=3$ identities of Appendix~\ref{app:proj} are the case
$(n,m)=(3,\pm1)$ of Lemma~\ref{lem:moment}; the content of the lemma is that
the entire harmonic structure of any chiral bond form factor is arithmetic:
the $C_n$ eigenvalue $m$ fixes the allowed harmonics modulo $n$, and the
geometry fixes their leading amplitudes.

\subsection{Parity alternation}
\label{sec:parityalt}

\begin{theorem}[Parity alternation]
\label{thm:alternation}
Let $\chi_m^{(n)}$ be chiral, i.e.\ $0<m<n$ with $m\neq n/2$ if $n$ is even.
\begin{enumerate}
\item[(i)] Even $n$: the star is closed under inversion of the relative
coordinate, $\vd_{j+n/2}=-\vd_j$, and
\begin{equation}
\chi_m^{(n)}(-\vk)=(-1)^m\,\chi_m^{(n)}(\vk)\,.
\label{eq:pure}
\end{equation}
Every chiral form factor is parity-pure: singlet for even $m$, triplet for
odd $m$.
\item[(ii)] Odd $n$: no nonzero form factor built on the single star
\eqref{eq:star} has definite parity. The ladder $\{h\equiv m\ (\mathrm{mod}\
n)\}$ alternates in parity, and $\chi_m^{(n)}$ contains both even-parity and
odd-parity components.
\item[(iii)] On an odd-fold star, parity-pure chiral states exist only as
coherent doubled-star combinations
\begin{align}
\chi_m^{(e)}(\vk)&=\sum_j e^{2\pi i m j/n}\cos(\vk\cdot\vd_j)\,,\nonumber\\
\chi_m^{(o)}(\vk)&=\sum_j e^{2\pi i m j/n}\sin(\vk\cdot\vd_j)\,,
\label{eq:doubled}
\end{align}
i.e.\ parity purity requires phase coherence between the two
inversion-related stars.
\end{enumerate}
\end{theorem}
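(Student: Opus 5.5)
I will prove the three parts in order, using the moment identity (Lemma~\ref{lem:moment}) to control harmonic content and a direct relabeling argument for inversion.

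For part (i), I take $n$ even and note $\phi_{j+n/2}=\phi_j+\pi$, so $\vd_{j+n/2}=-\vd_j$. I then compute $\chi_m^{(n)}(-\vk)=\sum_j e^{2\pi imj/n}e^{i\vk\cdot(-\vd_j)}$ and substitute $-\vd_j=\vd_{j+n/2}$. Relabeling $j\to j-n/2$ (indices mod $n$, legitimate because $e^{2\pi imj/n}$ is $n$-periodic in $j$) extracts the factor $e^{-2\pi i m(n/2)/n}=e^{-i\pi m}=(-1)^m$, which gives Eq.~\eqref{eq:pure}. As a cross-check, the ladder $h\equiv m\pmod n$ with $n$ even has fixed parity of $h$, namely that of $m$, and each harmonic $e^{ih\theta}$ picks up $(-1)^h$ under $\theta\to\theta+\pi$.

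For part (ii), I use Lemma~\ref{lem:moment} as the main tool. With $n$ odd, consecutive ladder members $h$ and $h+n$ have opposite parity, so the ladder alternates. The two lowest members, $h_1=m$ and $h_2=m-n$ (with $0<m<n$, one positive and one negative, of opposite parity), appear at orders $k^{|h_1|}$ and $k^{|h_2|}$ with nonzero amplitudes $A_h$ from Eq.~\eqref{eq:amplitude}. Under $\vk\to-\vk$ a term $k^l e^{ih\theta}$ acquires $(-1)^h$, and $l\equiv h\pmod 2$ holds in Eq.~\eqref{eq:moment}. A definite-parity function therefore cannot contain nonzero coefficients on harmonics of both parities, which proves $\chi_m^{(n)}$ has both even and odd components. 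The same argument gives the Bessel form: applying Jacobi--Anger \eqref{eq:jacobianger} bond by bond, the coefficient of $e^{ih\theta}$ on the Fermi circle is $n\,i^hJ_h(k)$ for $h$ on the ladder, and this is nonzero for generic $k$.

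The stronger claim in (ii), that \emph{no} nonzero form factor built on the single star has definite parity, is the main obstacle, since it must cover arbitrary coefficients $\sum_j c_j e^{i\vk\cdot\vd_j}$. My plan is to use linear independence of the plane waves $e^{i\vk\cdot\mathbf v}$ over distinct vectors $\mathbf v$. For $n$ odd, $-\vd_j$ is never in the star, because $\phi_j+\pi=2\pi(j+n/2)/n$ is not a multiple of $2\pi/n$. Hence $F(-\vk)=\pm F(\vk)$ equates an expansion on $\{-\vd_j\}$ with one on the disjoint set $\{\vd_j\}$, and independence forces all $c_j=0$. The independence itself I would justify by the standard argument of restricting to a generic line in $\vk$-space, where the exponentials have distinct frequencies.

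For part (iii), I write $e^{i\vk\cdot\vd_j}=\cos(\vk\cdot\vd_j)+i\sin(\vk\cdot\vd_j)$. This gives $\chi_m^{(n)}=\chi_m^{(e)}+i\chi_m^{(o)}$ with $\chi^{(e)}$ even and $\chi^{(o)}$ odd, and both still carry the $C_n$ eigenvalue by the same relabeling used for \eqref{eq:cnirrep}. Each of these functions is supported on $\{\vd_j\}\cup\{-\vd_j\}$, the star together with its inversion image (the doubled star), and this realizes the required phase-coherent combination. For the ``only as'' direction, I will show that a parity-pure function on the doubled star with $C_n$ eigenvalue $e^{2\pi im/n}$ must have coefficients on $\mathbf v$ and $-\mathbf v$ locked to equal or opposite values. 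Linear independence again gives this, since $F(-\vk)=\eta F(\vk)$ forces $c_{-\mathbf v}=\eta c_{\mathbf v}$. $C_n$ covariance then fixes the coefficients up to scale, and the result is exactly $\chi_m^{(e)}$ or $\chi_m^{(o)}$.
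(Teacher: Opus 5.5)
Your proposal is correct and follows the paper's proof essentially step by step: the reindexing $j\to j\pm n/2$ for (i); linear independence of the $2n$ distinct plane waves $e^{\pm i\vk\cdot\vd_j}$, together with the nonzero opposite-parity leading amplitudes of Lemma~\ref{lem:moment}, for (ii); and the cosine/sine split with $C_n$ relabeling for (iii). The one place you go beyond the paper is the ``only as'' direction of (iii), where your argument ($c_{-\mathbf v}=\eta c_{\mathbf v}$ from independence, then $C_n$ covariance forcing $c_j\propto e^{2\pi imj/n}$ on each star) supplies a uniqueness statement that the paper leaves implicit, since it verifies only the parity and $C_n$ eigenvalue of $\chi_m^{(e/o)}$.
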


\begin{proof}
(i) For even $n$, inversion maps bond $j$ to bond $j+n/2$
[$\vd_{j+n/2}=-\vd_j$ from Eq.~\eqref{eq:star}]. Reindexing the sum by
$j\to j+n/2$ gives
$\chi_m^{(n)}(-\vk)=\sum_j e^{2\pi i m j/n}e^{-i\vk\cdot\vd_j}=\sum_j
e^{2\pi i m(j+n/2)/n}e^{i\vk\cdot\vd_{j}}
=e^{i\pi m}\chi_m^{(n)}(\vk)=(-1)^m\chi_m^{(n)}(\vk)$.
(ii) Suppose $\chi(-\vk)=\eta\chi(\vk)$ for all $\vk$, with
$\chi(\vk)=\sum_j c_j e^{i\vk\cdot\vd_j}$ a general single-star form factor.
Then $\sum_j c_j e^{-i\vk\cdot\vd_j}=\eta\sum_j c_j e^{i\vk\cdot\vd_j}$. The
$2n$ vectors $\{\pm\vd_j\}$ are all distinct for odd $n$ (an odd number of
equally spaced directions contains no antiparallel pair), so the $2n$
functions $e^{\pm i\vk\cdot\vd_j}$ are linearly independent (exponentials
with distinct frequency vectors are linearly independent as functions on
$\mathbb R^2$: if $\sum_l c_l e^{i\vk_l\cdot\vk}\equiv0$, multiplying by
$e^{-i\vk_m\cdot\vk}$ and averaging over a disk of radius $R\to\infty$
kills every term with $\vk_l\neq\vk_m$ by oscillation, leaving
$c_m=0$); comparing coefficients of each exponential forces all $c_j$
to vanish. Hence no nonzero single-star form factor is parity-pure. Parity
alternation of the ladder is arithmetic: $h$ and $h\pm n$ belong to the same
ladder and have opposite parity when $n$ is odd, so each ladder contains
both even and odd harmonics. By Lemma~\ref{lem:moment} the leading harmonic
of each parity occurs with nonzero amplitude [Eq.~\eqref{eq:amplitude}; see
Theorem~\ref{thm:content}], so both parity components are nonzero.
(iii) Under $\vk\to-\vk$, cosine is even and sine is odd, and the $C_n$
eigenvalue is unchanged by the replacements
$e^{i\vk\cdot\vd_j}\to\{\cos,\sin\}(\vk\cdot\vd_j)$; direct substitution
[$\cos(R_n\vk\cdot\vd_j)=\cos(\vk\cdot\vd_{j-1})$ and relabeling]
verifies that $\chi_m^{(e/o)}(R_n\vk)=e^{2\pi i m/n}\chi_m^{(e/o)}(\vk)$.
\end{proof}

Statement (i) is consistent with the standard parity assignment on even
stars: complex-eigenvalue chiral states with even $m$ are pure singlets and
those with odd $m$ pure triplets; on the triangular Bravais lattice ($n=6$)
the nearest-neighbor chiral $d$ state is the even-$m$ singlet case widely
discussed for graphene and related systems
\cite{BlackSchaffer2014,Nandkishore2012}. The square-lattice chiral $d+id$
state belongs to the real-representation caveat above. Parity mixing
appears only once inversion is removed from the \emph{bond star}---which for
odd $n$ is unavoidable, since an odd number of directions cannot be
invariant under $\vd\to-\vd$.

\subsection{Content of the enforced mixed state}
\label{sec:content}

\begin{theorem}[Content and universal ratio]
\label{thm:content}
Let $n$ be odd and $0<m<n$. Denote by $L_s$ and $L_t$ the even and odd
members of $\{m,\,n-m\}$ (exactly one of the two is even, since $n$ is odd).
Then the leading even- and odd-parity terms of $\chi_m^{(n)}$ are
\begin{align}
\chi_m^{(n)}(\vk)\Big|_{\rm triplet}&=\frac{n}{2^{L_t}L_t!}\,i^{L_t}k^{L_t}e^{ih_t\theta}+O\big(k^{L_t+2}\big)\,,\label{eq:triplead}\\[2mm]
\chi_m^{(n)}(\vk)\Big|_{\rm singlet}&=\frac{n}{2^{L_s}L_s!}\,i^{L_s}k^{L_s}e^{ih_s\theta}+O\big(k^{L_s+2}\big)\,,\label{eq:singlead}
\end{align}
where the winding numbers satisfy
\begin{equation}
|h_s|=L_s,\quad |h_t|=L_t,\quad h_s h_t<0,\quad |h_s|+|h_t|=n.
\label{eq:winding}
\end{equation}
The two parity sectors therefore counter-propagate with windings differing
by exactly $n$, while belonging to the same $C_n$ irrep \eqref{eq:cnirrep}
since $h_s\equiv h_t\equiv m\ (\mathrm{mod}\ n)$. The leading amplitude
ratio is universal:
\begin{equation}
\boxed{\;\frac{A_s}{A_t}
=i^{\,L_s-L_t}\,\frac{L_t!}{L_s!}\,
2^{\,L_t-L_s}\,k^{\,L_s-L_t}\;}\,.
\label{eq:ratio}
\end{equation}
\end{theorem}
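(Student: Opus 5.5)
The proof is a direct corollary of the Moment identity (Lemma~\ref{lem:moment}). The plan is to read off the two lowest harmonics on the ladder $h\equiv m\ (\mathrm{mod}\ n)$ and sort them by parity.

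\emph{Step 1: identify the harmonics.} By Lemma~\ref{lem:moment}, $\chi_m^{(n)}$ contains only harmonics $h\equiv m\ (\mathrm{mod}\ n)$, and harmonic $h$ first appears at order $k^{|h|}$. With $0<m<n$, the two ladder members of smallest absolute value are $h=m$ and $h=m-n$, with $|h|=m$ and $|h|=n-m$ respectively. Every other member satisfies $|h|\ge n+\min(m,n-m)>\max(m,n-m)$, so it enters at strictly higher order.

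\emph{Step 2: sort by parity.} A term $k^l e^{ih\theta}$ arising from $(\vk\cdot\vd_j)^l$ has $h\equiv l\ (\mathrm{mod}\ 2)$, since $h=l-2s$. It is therefore even under $\vk\to-\vk$ (which sends $\theta\to\theta+\pi$) exactly when $l$ is even. Because $n$ is odd, exactly one of $m$ and $n-m$ is even. I set $L_s$ to be the even one and $L_t$ the odd one, and let $h_s,h_t$ be the corresponding signed harmonics, i.e.\ $m$ or $m-n$.

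This gives $|h_s|=L_s$ and $|h_t|=L_t$ at once. Since one of $h_s,h_t$ is $m>0$ and the other is $m-n<0$, we get $h_sh_t<0$ and $|h_s|+|h_t|=m+(n-m)=n$. Both are congruent to $m$ mod $n$, so both lie in the same $C_n$ irrep \eqref{eq:cnirrep}.

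\emph{Step 3: leading terms and corrections.} Within each parity sector the lowest power is $k^{L}$ with $L\in\{L_s,L_t\}$. At $l=L$, the only contributing $s$ is the extreme one ($s=0$ or $s=l$), which is also the sector's own harmonic; the partner harmonic has the wrong parity and cannot appear at an $l$ of this parity. Equation~\eqref{eq:amplitude} then gives the amplitude $n\,i^{L}/(2^{L}L!)$, which is Eqs.~\eqref{eq:triplead}--\eqref{eq:singlead}.

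For the error terms: within a fixed parity sector, the powers of $k$ that occur all share that parity (only $l$ of that parity contributes), so the next correction is $O(k^{L+2})$. That correction includes any subleading terms of harmonic $h_s$ or $h_t$ at order $L+2$.

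\emph{Step 4: the ratio.} Dividing the two leading amplitudes gives
\[
\frac{A_s}{A_t}=\frac{i^{L_s}k^{L_s}/(2^{L_s}L_s!)}{i^{L_t}k^{L_t}/(2^{L_t}L_t!)},
\]
which is exactly \eqref{eq:ratio}.

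\emph{Main obstacle.} The only point needing care is the claim that no other ladder member, and no other $s$ value, contributes at the leading order of each parity sector. This is the minimality argument of Step 1 combined with $|l-2s|\le l$ and the parity constraint $h\equiv l\ (\mathrm{mod}\ 2)$. I expect it to be routine.
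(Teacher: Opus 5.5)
Your proposal is correct and takes the same route as the paper. Both arguments read off the two minimal-magnitude ladder representatives $h=m$ and $h=m-n$ from Lemma~\ref{lem:moment}, sort them by parity via $h=l-2s$, take the leading amplitudes from Eq.~\eqref{eq:amplitude}, and bound the corrections by the step-two spacing of Taylor orders within each parity sector; your explicit bound $|h|\ge n+\min(m,n-m)$ on the remaining ladder members is just a more careful version of the paper's ``within $(-n,n)$'' remark.
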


\begin{proof}
By Lemma~\ref{lem:moment} the allowed harmonics are $h\equiv m\pmod n$.
Within $(-n,n)$ the ladder has exactly two representatives of minimal
magnitude, $h=m-n$ and $h=m$ (for $0<m<n$), of magnitudes $n-m$ and $m$;
since $n$ is odd exactly one of them is even. The even (odd) one supplies the
leading singlet (triplet) term at order $k^{|h|}$ via
Eq.~\eqref{eq:amplitude}, with a single binomial contribution at that order,
giving Eqs.~\eqref{eq:triplead}--\eqref{eq:singlead}. The two
representatives have opposite signs and magnitudes summing to $n$, giving
Eq.~\eqref{eq:winding}. Dividing the amplitudes
\eqref{eq:triplead}/\eqref{eq:singlead} gives Eq.~\eqref{eq:ratio}. The next
correction to each sector enters at order $|h|+2$, since harmonics in the
same ladder with the same parity as $h$ are spaced by $2n$ while the Taylor
orders compatible with $h$ increase in steps of two [$l-2s=h$ fixes the
parity of $l$ relative to $h$; Appendix~\ref{app:unique}].
\end{proof}

The chirality locking of the preceding sections is the geometric statement
\eqref{eq:winding}: the singlet and triplet components are adjacent rungs of
one ladder, not independent order parameters.

\begin{corollary}[$n=3$: the triangular/honeycomb case]
\label{cor:n3}
For $(n,m)=(3,1)$, $\{L_s,L_t\}=\{2,1\}$ and
Eqs.~\eqref{eq:triplead}--\eqref{eq:ratio} reproduce the decomposition of
the preceding sections [Theorem~\ref{thm:parity}],
\begin{equation}
g_+=-\tfrac{3}{8}(k_x-ik_y)^2+O(k^4),\quad
p_+=\tfrac{3i}{2}(k_x+ik_y)+O(k^3),
\end{equation}
\begin{equation}
\frac{g_+}{p_+}=\frac{ik}{4}\,e^{-3i\theta}+O(k^3)\,,
\end{equation}
with counter-propagating windings $h_s=-2$, $h_t=+1$. The chiral $d+id$
state considered here is the $\{L_s,L_t\}=\{2,1\}$ member of the
classification.
\end{corollary}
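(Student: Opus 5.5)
The plan is to obtain the corollary by specializing Theorem~\ref{thm:content} to $(n,m)=(3,1)$. I will then check the result against the direct computation of Theorem~\ref{thm:parity}, so that the two derivations confirm each other.

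First I identify the form factor. With $\phi_j=2\pi j/3$, the index sets $j=1,2,3$ of Eq.~\eqref{eq:fform} and $j=0,1,2$ of Eq.~\eqref{eq:chimn} label the same three bonds. The weights $e^{i\phi_j}=e^{2\pi i j/3}$ therefore give $f_+=\chi_1^{(3)}$, and likewise $f_-=\chi_{-1}^{(3)}=\chi_2^{(3)}$.

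Next I read off the content. The minimal ladder representatives are $h=m=1$ and $h=m-n=-2$. Hence $L_t=1$ with $h_t=+1$, and $L_s=2$ with $h_s=-2$, so $|h_s|+|h_t|=3$ and $h_sh_t<0$, as Eq.~\eqref{eq:winding} requires.

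I then insert these values into Eqs.~\eqref{eq:triplead}--\eqref{eq:singlead}, using $k e^{\pm i\theta}=k_x\pm ik_y$.
\begin{itemize}
\item Triplet: $\tfrac{3}{2\cdot1!}\,i\,k e^{i\theta}=\tfrac{3i}{2}(k_x+ik_y)$.
\item Singlet: $\tfrac{3}{4\cdot2!}\,i^2k^2e^{-2i\theta}=-\tfrac38(k_x-ik_y)^2$.
\end{itemize}
Both agree with Eqs.~\eqref{eq:gpm}--\eqref{eq:ppm} for the upper sign. The error orders also follow from Theorem~\ref{thm:content}: the next term in each sector enters at order $|h|+2$, which gives $O(k^3)$ for $p_+$ and $O(k^4)$ for $g_+$.

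For the ratio I apply Eq.~\eqref{eq:ratio} with $L_s-L_t=1$. This gives $i\,\tfrac{1!}{2!}\,2^{-1}k=ik/4$. The angular factor is $e^{i(h_s-h_t)\theta}=e^{-3i\theta}$, which I must append because Eq.~\eqref{eq:ratio} compares amplitudes only. Dividing the full expansions then gives
\[
\frac{g_+}{p_+}=\frac{ik}{4}\,e^{-3i\theta}\,\frac{1+O(k^2)}{1+O(k^2)}=\frac{ik}{4}\,e^{-3i\theta}+O(k^3).
\]
As an independent check I will divide the two explicit expansions directly: $-\tfrac38 k^2e^{-2i\theta}\big/\tfrac{3i}{2}ke^{i\theta}=\tfrac{ik}{4}e^{-3i\theta}$. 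The final sentence of the corollary is just the identification $\{L_s,L_t\}=\{2,1\}$, which places the $d+id$ state in the classification.

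I expect the main obstacle to be bookkeeping of conventions rather than substance. Three points need care: that the even and odd projections $g_\pm,p_\pm$ of Eq.~\eqref{eq:paritydec} coincide exactly with the singlet and triplet sectors of Theorem~\ref{thm:content}; the sign of $h_s$ (whether $k_x-ik_y$ or $k_x+ik_y$); and the relative order of the ratio's remainder. I will settle all three by the direct small-$k$ check just described.
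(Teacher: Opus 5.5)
Your proposal is correct and follows the paper's own route: specialize Theorem~\ref{thm:content} to $(n,m)=(3,1)$, with minimal ladder representatives $h_t=+1$ and $h_s=-2$, read off the leading terms and the ratio $ik/4$ (restoring the phase $e^{i(h_s-h_t)\theta}=e^{-3i\theta}$), and confirm them against the explicit expansions of Theorem~\ref{thm:parity}. The convention points you flag are immediate rather than real obstacles: $k^le^{ih\theta}$ has parity $(-1)^l=(-1)^h$, so $g_+$ and $p_+$ are exactly the even-$h$ and odd-$h$ sectors, and the remainder orders $O(k^4)$ and $O(k^3)$ follow from parity alone.
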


\begin{corollary}[$n=5$: fivefold motifs]
\label{cor:n5}
On a fivefold bond star---the local motif of decagonal quasicrystals and
their approximants---only two chiral classes exist ($m=1,2$; $m$ and $n-m$
are conjugate by complex conjugation). Writing singlet $\oplus$ triplet
content and the universal ratio \eqref{eq:ratio}:
\begin{align}
m=1:\quad & g\ (L_s{=}4)\,\oplus\, p\ (L_t{=}1)\,,
&|A_s|/|A_t|&=k^3/192\,,\label{eq:c5a}\\
m=2:\quad & d\ (L_s{=}2)\,\oplus\, f\ (L_t{=}3)\,,
&|A_s|/|A_t|&=6/k\,.\label{eq:c5b}
\end{align}
[For $m=1$: $\{L_s,L_t\}=\{4,1\}$ and
$|A_s/A_t|=(1!/4!)\,2^{1-4}k^{3}=k^3/(24\times8)=k^3/192$; for $m=2$:
$\{2,3\}$ and $|A_s/A_t|=(3!/2!)\,2^{3-2}k^{-1}=6/k$.] Any chiral
superconducting state supported on fivefold motifs must be one of these two
parity-mixed composites. This is a sharp, falsifiable constraint on the
pairing symmetry of quasicrystalline superconductors \cite{Kamiya2018}.
\end{corollary}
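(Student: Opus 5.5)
The corollary is the special case $n=5$ of Theorems~\ref{thm:alternation} and~\ref{thm:content}, so the plan is direct substitution. The one genuinely new claim is that only two chiral classes exist, so I would settle that first.

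\textbf{Step 1: only $m=1,2$ need be treated.} By Theorem~\ref{thm:alternation}, with $n=5$ odd, the chiral indices are $m\in\{1,2,3,4\}$; the even-$n$ exclusion $m\neq n/2$ does not arise. Complex conjugation of Eq.~\eqref{eq:chimn} gives
\begin{equation}
\big[\chi_m^{(5)}(\vk)\big]^\ast=\chi_{-m}^{(5)}(-\vk)=\chi_{5-m}^{(5)}(-\vk).
\end{equation}
Hence $\chi_4^{(5)}$ and $\chi_3^{(5)}$ are the conjugate, opposite-chirality partners of $\chi_1^{(5)}$ and $\chi_2^{(5)}$. The substitution $\vk\to-\vk$ only exchanges the signs of the even and odd parity parts. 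So the parity content and the modulus of the amplitude ratio are unchanged between the two members of a pair, and only the sense of the windings flips. This is what lets us say there are exactly two classes up to conjugation.

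\textbf{Step 2: read off the content from Theorem~\ref{thm:content}.} For $m=1$ we have $\{m,n-m\}=\{1,4\}$, so $L_s=4$ and $L_t=1$. The leading windings are $h_t=+1$ and $h_s=-4$. This gives a $g$-wave singlet plus a $p$-wave triplet, and indeed $|h_s|+|h_t|=5$. For $m=2$ we have $\{2,3\}$, so $L_s=2$ and $L_t=3$, with $h_s=+2$ and $h_t=-3$. This gives a $d\oplus f$ composite. The parity labels follow from $(-1)^{|h|}$ together with the convention of Sec.~\ref{sec:bgmatrix}.

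\textbf{Step 3: evaluate the boxed ratio \eqref{eq:ratio}.} Taking moduli removes the phase $i^{L_s-L_t}$. For $m=1$ the ratio is $(1!/4!)\,2^{-3}k^{3}=k^3/192$. For $m=2$ it is $(3!/2!)\,2^{1}k^{-1}=6/k$.

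\textbf{Step 4: conclude that every chiral fivefold state is one of these two composites.} This follows from Theorem~\ref{thm:alternation}(ii): no single-star form factor on an odd star is parity-pure, so any chiral state selected by a single fivefold star carries the forced mixture above. I would state the scope explicitly. The claim covers states selected by a single-star bond-local form factor. Coherent doubled-star combinations, as in Theorem~\ref{thm:alternation}(iii), are excluded, since they restore parity purity.

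The main obstacle is this scoping in Step 4 rather than any computation. The phrase ``must be one of these'' needs to be qualified in exactly the way the section introduction does; Steps 1 to 3 are mechanical.
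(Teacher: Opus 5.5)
Your proposal is correct and follows the paper's route: you substitute $n=5$, $m=1,2$ into Theorem~\ref{thm:content} and the boxed ratio \eqref{eq:ratio}, reduce to two classes via $[\chi_m^{(5)}(\vk)]^\ast=\chi_{5-m}^{(5)}(-\vk)$, and take the enforced mixing from Theorem~\ref{thm:alternation}(ii). Your explicit scoping of the ``must be one of these'' claim to single-star bond-local form factors, excluding the doubled-star combinations of Theorem~\ref{thm:alternation}(iii), matches the qualification the paper makes in the section introduction and in remark (i) after Proposition~\ref{prop:locked}.
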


\begin{table}[t]
\caption{\label{tab:class}Classification of chiral nearest-neighbor bond
pairing by bond-star order $n$. ``Antiparallel'' indicates whether the star
contains $\vd$ and $-\vd$; purity follows from
Theorem~\ref{thm:alternation}; the singlet/triplet content of mixed states
follows from Theorem~\ref{thm:content}.}
\begin{ruledtabular}
{\footnotesize\setlength{\tabcolsep}{1.5pt}
\begin{tabular}{llcll}
Star & $n$ & $-\vd$? & parity & mixed content\\
\hline
square NN & 4 & yes & pure $(-1)^m$ & ---\\
triangular NN & 6 & yes & pure $(-1)^m$ & ---\\
honeycomb NN & 3 & no & \textbf{mixed} & $d\oplus p$\\
fivefold & 5 & no & \textbf{mixed} & $g\oplus p$, \ $d\oplus f$\\
sevenfold & 7 & no & \textbf{mixed} & $p\oplus i$, \ $d\oplus h$, \ $f\oplus g$\\
\end{tabular}}
\end{ruledtabular}
\end{table}

The four nearest neighbors of a kagome site do not form a regular $C_4$
star, so the kagome lattice lies outside the premise of the table.
Table~\ref{tab:class} summarizes the classification [the sevenfold entries
follow from Corollary~\ref{cor:n5}'s arithmetic applied to $n=7$: $m=1$
gives $\{L_s,L_t\}=\{6,1\}$, $m=2$ gives $\{2,5\}$, and $m=3$ gives
$\{4,3\}$---displayed there as the three
distinct composites]. Enforced mixing occurs precisely on odd-fold stars,
independent of any microscopic detail.

\subsection{Application to existing quasicrystal superconductors}
\label{sec:qcapp}

The classification of Table~\ref{tab:class} is a \emph{local} statement: it
constrains nearest-neighbor bond form factors on a single star, irrespective
of how the stars are arranged globally. It therefore applies verbatim to
quasicrystal approximants, whose motifs sit in a crystalline unit cell, and
tile-by-tile to true quasicrystals, whose local environments recur with
finite frequency. Two conditions are understood throughout: the pairing glue
is bond-local in the sense of Proposition~\ref{prop:locked}, and the
condensate is chiral (time-reversal breaking). Where either condition fails
the classification is silent. Previous theory work on quasicrystal
superconductivity has addressed the extended-to-localized crossover of Cooper
pairs \cite{Sakai2017}, the reduced Bogoliubov coherence peak
\cite{Takemori2020}, intrinsic vortex pinning \cite{Nagai2022}, and
topological superconducting states \cite{Hori2024}; the symmetry-enforced
parity mixing identified here has not been considered in that context.

\begin{remark}[Icosahedral parity purity]
\label{rem:icos}
The icosahedral vertex shell---the local motif of icosahedral
quasicrystals---consists of twelve vertices forming six antiparallel pairs
and is centrosymmetric. The premise of Theorem~\ref{thm:alternation} (absence
of $-\vd$ from the star) therefore fails, the cosine and sine components
carry definite parity, and chiral bond pairing on icosahedral motifs is
\emph{parity-pure} for every winding. Enforced singlet--triplet mixing is
thus specifically a phenomenon of \emph{planar odd-fold} stars;
three-dimensional icosahedral motifs evade it by construction.
\end{remark}

\begin{table}[t]
\caption{\label{tab:qcverdict}Table~\ref{tab:class} applied to existing
quasicrystal (QC) and approximant superconductors. The verdict follows from
the local motif alone; consistency with the observed state is indicated in
the last column.}
\begin{ruledtabular}
{\footnotesize\setlength{\tabcolsep}{1.5pt}
\begin{tabular}{p{0.21\linewidth}p{0.18\linewidth}p{0.23\linewidth}p{0.25\linewidth}}
System & local motif & verdict & observed status\\
\hline
Al--Zn--Mg i-QC \cite{Kamiya2018} & icosahedral shell &
parity-pure (Rem.~\ref{rem:icos}) &
$s$-wave BCS, $T_c\simeq0.05$\,K; consistent\\
Ta$_{1.6}$Te \cite{Tokumoto2024} & $n{=}12$ &
no enforced mixing ($-\vd$ present) &
anomalous $H_{c2}$, Knight shift; see text\\
Al$_{13}$Os$_4$ approx.\ \cite{Meena2025} & decagonal-family motifs &
silent (TRS preserved) &
$s$-wave BCS, $T_c=5.47$\,K; consistent\\
moir\'e QC \cite{Uri2023} & 12-fold moir\'e &
no enforced mixing ($-\vd$ present) &
TRSB not reported; open\\
\end{tabular}}
\end{ruledtabular}
\end{table}

Table~\ref{tab:qcverdict} collects the verdicts. All existing systems are
even-fold or centrosymmetric, so none falls in the enforced-mixing class; all
are consistent with the classification, and none has tested it. The case of current interes
is Ta$_{1.6}$Te. Its upper critical field rises linearly down to $0.04$\,K
and extrapolates to $2.3$ times the Pauli limit \cite{Terashima2024}, and
$^{125}$Te NMR on (Ta$_{0.95}$Cu$_{0.05}$)$_{1.6}$Te finds a nodeless
weak-coupling gap ($2\Delta/k_BT_c\simeq3.04$, with a small Hebel--Slichter
peak) together with an almost vanishing Knight-shift decrease ($\sim0.03\%$)
and no line broadening in the superconducting state \cite{Matsudaira2026}.
The NMR authors raise parity mixing with a triplet component, noting that the
related approximants Ta$_{97}$Te$_{60}$ and Ta$_{181}$Te$_{112}$ lack a
$c$-axis mirror. This framework adds a sharp discriminator: the
dodecagonal bond star is even and contains $-\vd$, so the star geometry
\emph{cannot} enforce parity mixing in Ta$_{1.6}$Te
[Theorem~\ref{thm:alternation}(i)]. If mixing is confirmed there, it must
originate from spin--orbit coupling of Rashba type, enabled by the absent
$c$-axis mirror \cite{GorkovRashba2001,BauerSigrist2012}, rather than from
geometry. The two mechanisms are distinguishable: SOC-driven mixing carries a
material-dependent, nonuniversal singlet--triplet weight ratio (set by the
ratio of Rashba splitting to Fermi energy
\cite{GorkovRashba2001,BauerSigrist2012}), whereas geometry-enforced mixing
occurs only on odd stars and carries the universal ratio
\eqref{eq:ratio}---for $n=5$, $k^3/192$ or $6/k$
[Eqs.~\eqref{eq:c5a}--\eqref{eq:c5b}], with $k$ in units of the inverse bond
length.

We close with a four-step screening protocol for any quasicrystal or
approximant superconductor. \emph{(i) Motif.} Identify the dominant local
bond star, its order $n$, and whether it contains antiparallel bonds (planar
stars) or inversion (three-dimensional shells). \emph{(ii) Chirality.} Test
time-reversal symmetry by zero-field muon spin relaxation or polar Kerr
rotation [internal-field and optical probes of spontaneous time-reversal
breaking, respectively; see Refs.~\cite{Khasanov2022,Xia2006} for the
methods in the superconducting context]; if it is preserved, the
classification is silent. \emph{(iii) Even-fold or centrosymmetric motif
with TRSB.} The state must be parity-pure chiral ($d{+}id$-type or
$f{+}if$-type on even stars); the Knight shift then decides singlet versus
triplet (Sec.~\ref{sec:bgmatrix} \cite{SigristUeda1991}). \emph{(iv) Odd
planar star with TRSB and a single-star bond-local glue.} The state is then an
enforced parity-mixed composite; for $n=5$ it is one of the two classes of
Corollary~\ref{cor:n5}, and a measured
singlet--triplet weight ratio matching $k^3/192$ or $6/k$ would constitute
direct evidence for the bond-local scenario of Sec.~\ref{sec:locked}. No
known material currently realizes step (iv): the sharp target for a decisive
test is a chiral superconductor on a genuine fivefold planar motif, e.g.\ a
decagonal quasicrystal or approximant in which Kerr rotation or muon spin
relaxation detects spontaneous fields.

\subsection{Bond-local glue: locked two-sector mediation}
\label{sec:locked}

The group-theoretic statements above constrain \emph{form factors}. Whether
the condensate realizes the mixed state depends on the pairing interaction.
The relevant statement for a bond-local glue is exact:

\begin{proposition}[Locked kernel]
\label{prop:locked}
Let the pairing kernel be bond-diagonal,
\begin{equation}
V_{\mu\nu}(\vk,\vk')=\sum_{\alpha}v_\alpha\,\delta_{\mu\alpha}\delta_{\nu\alpha}\,
\cos\!\big[(\vk-\vk')\cdot\vd_\alpha\big]\,,
\label{eq:kernelgen}
\end{equation}
as derived for the Kitaev glue in Sec.~\ref{sec:micro} (with
$v_\alpha=J_K^2\chi_\alpha$). Then each bond term decomposes exactly as
\begin{align}
\cos\big[(\vk-\vk')\cdot\vd_\alpha\big]
&=\cos(\vk\!\cdot\!\vd_\alpha)\cos(\vk'\!\cdot\!\vd_\alpha)\nonumber\\
&\quad+\sin(\vk\!\cdot\!\vd_\alpha)\sin(\vk'\!\cdot\!\vd_\alpha)\,,
\label{eq:rank2}
\end{align}
i.e.\ the same boson $v_\alpha$ mediates the singlet (cosine) and triplet
(sine) channels with identical, locked orbital-kernel strength (before
the channel-specific Fierz factors). The two channels are
split only by kinematic Fermi-surface weights
$\langle|\chi^{(e)}|^2\rangle_{\rm FS}$ and
$\langle|\chi^{(o)}|^2\rangle_{\rm FS}$ (explicit Bessel forms are derived in
Appendix~\ref{app:weights}).
\end{proposition}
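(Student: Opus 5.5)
The decomposition \eqref{eq:rank2} is the cosine addition formula $\cos(a-b)=\cos a\cos b+\sin a\sin b$ with $a=\vk\cdot\vd_\alpha$ and $b=\vk'\cdot\vd_\alpha$, using linearity of the dot product, $(\vk-\vk')\cdot\vd_\alpha=a-b$. Substituting it into \eqref{eq:kernelgen} writes each bond term $v_\alpha\delta_{\mu\alpha}\delta_{\nu\alpha}\cos[(\vk-\vk')\cdot\vd_\alpha]$ as a sum of two rank-one separable pieces. Both pieces carry the same prefactor $v_\alpha$ and the same spin projector $\delta_{\mu\alpha}\delta_{\nu\alpha}$. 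This gives the ``identical, locked orbital-kernel strength'' claim once we show which channel each piece feeds.

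\textbf{Assigning pieces to parity sectors.} Under $\vk\to-\vk$ (and separately $\vk'\to-\vk'$), $\cos(\vk\cdot\vd_\alpha)$ is even and $\sin(\vk\cdot\vd_\alpha)$ is odd. By the antisymmetry constraint \eqref{eq:antisym} and the expansion \eqref{eq:gapexp}, only the even piece can couple to $\Delta_0$ and only the odd piece to $\mathbf d$. The cross terms vanish: a kernel $\cos(\vk\cdot\vd)\cos(\vk'\cdot\vd)$ acting on an odd $\Delta(\vk')$ integrates to zero over the inversion-symmetric Fermi surface, and similarly for the sine piece acting on an even gap. The linearized gap equation \eqref{eq:lingap} therefore block-diagonalizes into an even block and an odd block. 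Their orbital kernels are $\sum_\alpha v_\alpha\cos\cos$ and $\sum_\alpha v_\alpha\sin\sin$, with the same $v_\alpha$. The Fierz factors of \eqref{eq:fierz} then multiply the two blocks differently, which is why the statement says ``before the channel-specific Fierz factors''.

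\textbf{Channel eigenvalues.} For a chiral state with $C_n$ eigenvalue $m$ and equal weights $v_\alpha=v$, I project each block onto $\chi_m^{(e)}$ or $\chi_m^{(o)}$ of \eqref{eq:doubled}. Summing over $\alpha$ with phases $e^{2\pi i m\alpha/n}$ and applying the roots-of-unity filter \eqref{eq:rootfilter} turns the bond sum of separable terms into $(v/n)\,\chi_m^{(e/o)}(\vk)\chi_m^{(e/o)\ast}(\vk')$ plus pieces in other irreps, which drop out of the chiral channel. The rank-one formula \eqref{eq:rankone} then gives $\lambda_{e/o}\propto v\,N(0)\langle|\chi_m^{(e/o)}|^2\rangle_{\rm FS}$. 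The two channels thus differ only through these kinematic Fermi-surface weights, which the statement defers to Appendix~\ref{app:weights}; there they are evaluated with \eqref{eq:fsaverage}.

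\textbf{Main obstacle.} The delicate step is the rank-one reduction inside a fixed irrep. The $n$ bond terms form a rank-$n$ kernel, and one must check that its restriction to the $m$ sector is exactly $v\,\chi\chi^\ast/n$. Expanding $\cos(\vk\cdot\vd_\alpha)$ in the Fourier basis over $\alpha$ and using orthogonality of the characters $e^{2\pi i m\alpha/n}$ should settle this. For unequal $v_\alpha$ the irreps mix, and the claim reduces to the bond-by-bond locking already established, with the identical $v_\alpha$ multiplying both parity pieces.
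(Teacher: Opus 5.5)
Your first two paragraphs are exactly the paper's proof: the cosine addition formula with $x=\vk\cdot\vd_\alpha$, $y=\vk'\cdot\vd_\alpha$, plus the manifest parity of the cosine and sine pieces. The paper stops there and simply defers the kinematic splitting to the Bessel forms of Appendix~\ref{app:weights}.

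Your extra irrep paragraph goes beyond what the paper proves. It is sound for the spin-stripped orbital kernel at equal $v_\alpha$ on a $C_n$- and inversion-symmetric Fermi surface. The step you call delicate is just Parseval on $\mathbb Z_n$, $\sum_\alpha\cos(\vk\cdot\vd_\alpha)\cos(\vk'\cdot\vd_\alpha)=\frac1n\sum_m\chi_m^{(e)}(\vk)\chi_m^{(e)\ast}(\vk')$, and likewise for the sines. Note, however, that it works only after dropping the projector $\delta_{\mu\alpha}\delta_{\nu\alpha}$. With the projector retained, different bonds act on different spin components, so the chiral bond sum is not a single scalar kernel.
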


\begin{proof}
Equation \eqref{eq:rank2} is the cosine addition formula
$\cos(x-y)=\cos x\cos y+\sin x\sin y$ with
$x=\vk\cdot\vd_\alpha$, $y=\vk'\cdot\vd_\alpha$; parity of the two terms
under $\vk\to-\vk$ is manifest (cosine even, sine odd).
\end{proof}

Three remarks are in order. \emph{(i) Selection of the mixed state.} A
bond-diagonal kernel is diagonal in the basis of individual bond pair
operators; its eigenstates are single-star objects of the form
\eqref{eq:chimn}, not the doubled-star combinations \eqref{eq:doubled}.
Parity-pure condensation [Theorem~\ref{thm:alternation}(iii)] requires
coherent locking of the two inversion-related stars, which no bond-diagonal
kernel provides; it demands either a non-bond-local (longer-range
pair-hopping) glue or fine-tuning. For the class of bond-mediated
interactions---exchange glues, bond phonons---the mixed state is therefore
the generic outcome on odd-fold stars. \emph{(ii) Relation to previous
work.} On the honeycomb lattice the parity-pure chiral singlet $\chi^{(e)}$
and triplet $\chi^{(o)}$ have been discussed as separate, competing channels
\cite{BlackSchaffer2014,Nandkishore2012}; the results above show that a
bond-local glue does not select between them but couples them with a
kinematically fixed weight, and that this structure is the $n=3$ case of a
classification extending to all odd-fold stars (Table~\ref{tab:class},
Corollary~\ref{cor:n5}). \emph{(iii) Scope.} Theorems~\ref{thm:alternation}
and~\ref{thm:content} are unconditional statements about form factors;
Proposition~\ref{prop:locked} and remark (i) hold for bond-diagonal kernels.
The precise condensate composition beyond the leading terms
\eqref{eq:triplead}--\eqref{eq:singlead} is model-dependent and is computed
for the Kitaev glue in the preceding sections.

\subsection{Topological counting: dominance theorem and the phase boundary}
\label{sec:counting}

We now determine the weak-pairing Chern number of the mixed state. For a
unitary triplet component with $\mathbf d\parallel\hat z$ (the bond-locked
triplet channel of Sec.~\ref{sec:micro}), the BdG Hamiltonian separates into
two spin blocks with scalar gaps
\begin{equation}
\Delta_\sigma(\vk)=a_s\,\chi_m^{(e)}(\vk)+\sigma\, a_t\,\chi_m^{(o)}(\vk)\,,\qquad \sigma=\pm 1\,,
\label{eq:blockgap}
\end{equation}
where $a_s,a_t$ are the condensate amplitudes of the two parity channels and
$\chi^{(e/o)}$ are the doubled-star combinations of Eq.~\eqref{eq:doubled}
[the block decomposition follows from writing out $\hat\Delta=(\Delta_0+d_z
\sigma_z)i\sigma_y$ of Eq.~\eqref{eq:gapexp}: the opposite-spin components
are $\Delta_{\uparrow\downarrow}=\Delta_0+d_z$ and
$\Delta_{\downarrow\uparrow}=-(\Delta_0-d_z)$, so each spin block sees a
scalar gap in which the singlet enters identically and the triplet with
opposite sign, up to an irrelevant overall sign of the down block]. In weak
pairing
($\mu>0$, gap small compared with the bandwidth) the Chern number of each
block equals the winding number of its gap around the Fermi surface
[Eq.~\eqref{eq:weakwinding}; Ref.~\cite{ReadGreen2000}], so the problem
reduces to the winding of $\Delta_\sigma$ on the circle $k=k_F$.

\begin{lemma}[Exact circle form]
\label{lem:circle}
On $k=k_F$,
\begin{equation}
\chi_m^{(n)}(k_F,\theta)=n\!\!\sum_{\substack{h\in\mathbb{Z}\\ h\equiv m\ (\mathrm{mod}\ n)}}\! i^{|h|}J_{|h|}(k_F)\,e^{ih\theta}\,,
\label{eq:jacobi}
\end{equation}
with $J_\nu$ the Bessel function of the first kind: all ladder Fourier
coefficients are positive multiples of $i^{|h|}$.
\end{lemma}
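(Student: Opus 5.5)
The plan is a bond-by-bond application of the Jacobi--Anger expansion \eqref{eq:jacobianger}, followed by the roots-of-unity filter \eqref{eq:rootfilter}.

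First I will parametrize $\vk=k_F(\cos\theta,\sin\theta)$. This gives $\vk\cdot\vd_j=k_F\cos(\theta-\phi_j)$, with $\phi_j=2\pi j/n$. Expanding each plane wave with \eqref{eq:jacobianger} at $\alpha=\theta-\phi_j$ gives
\[
e^{i\vk\cdot\vd_j}=\sum_{h\in\mathbb Z} i^hJ_h(k_F)\,e^{ih\theta}e^{-2\pi i hj/n}.
\]
I then insert this into \eqref{eq:chimn} and interchange the finite $j$-sum with the $h$-sum. The interchange is legitimate because $\sum_h|J_h(k_F)|<\infty$; this follows from the superexponential decay $|J_h(k)|\le (k/2)^{|h|}e^{k^2/4}/|h|!$, itself read off the Taylor series behind \eqref{eq:besselsmall}. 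The $j$-dependence collapses to $\sum_j e^{2\pi i(m-h)j/n}$, which by \eqref{eq:rootfilter} equals $n$ when $h\equiv m \pmod n$ and vanishes otherwise. This leaves
\[
\chi_m^{(n)}=n\sum_{h\equiv m}i^hJ_h(k_F)e^{ih\theta}.
\]

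Second, I will rewrite the coefficients with negative index. By \eqref{eq:besselsym}, $i^hJ_h(k_F)=i^{|h|}J_{|h|}(k_F)$ for $h<0$. To check this directly: $i^{h}(-1)^{h}=i^{-h}$ when $h$ is an integer, since $(-1)^h=i^{-2h}$. For $h\ge0$ the form is already $i^{|h|}J_{|h|}$. Together these give \eqref{eq:jacobi}.

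The final clause, ``positive multiples of $i^{|h|}$'', needs care. The coefficient is $n\,J_{|h|}(k_F)\,i^{|h|}$, which is a positive multiple only if $J_{|h|}(k_F)>0$. This holds for $k_F$ below the first positive zero $j_{|h|,1}$ of $J_{|h|}$, and that zero increases with $|h|$, with $j_{0,1}\approx2.405$. I expect this to be the main obstacle, because the claim is literally false past $j_{0,1}$, and the zone-corner value $|K|\approx2.42$ already slightly exceeds it. My approach would be to read the clause as a statement about the phase structure: each coefficient is a real multiple of $i^{|h|}$, positive for every $|h|\ge1$ throughout the zone because $j_{1,1}\approx3.83>|K|$. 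I would state explicitly that the $h=0$ coefficient, which appears only for $m\equiv0$ and so lies outside the chiral case $0<m<n$, is the only one that can change sign. Under this reading the positivity claim holds in the relevant range, and for $|h|\ge1$ it is guaranteed near the origin by \eqref{eq:besselsmall}.
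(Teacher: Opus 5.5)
Your proof is correct and follows the paper's argument step for step. Both apply Jacobi--Anger bond by bond at $\alpha=\theta-\phi_j$, collapse the $j$-sum with the roots-of-unity filter, and rewrite the $h<0$ coefficients with the Bessel symmetry \eqref{eq:besselsym}; your interchange justification is harmless but not needed, since the $j$-sum is finite. Your caveat on the final clause is also right, and the paper's proof does not address it: the coefficients are real multiples $nJ_{|h|}(k_F)$ of $i^{|h|}$, positive only for $0<k_F<j_{|h|,1}$, which for the chiral ladders ($h\neq0$) covers every $k_F<j_{1,1}\approx3.83$.
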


\begin{proof}
Apply Jacobi--Anger [Eq.~\eqref{eq:jacobianger}] to each term of
Eq.~\eqref{eq:chimn} with $\vk\cdot\vd_j=k_F\cos(\theta-\phi_j)$:
\begin{equation}
\chi_m^{(n)}(k_F,\theta)=\sum_je^{2\pi imj/n}\sum_h i^hJ_h(k_F)\,
e^{ih(\theta-\phi_j)}.
\end{equation}
Since $\phi_j=2\pi j/n$, the $j$-sum is
$\sum_je^{2\pi i(m-h)j/n}=n\,\delta_{h\equiv m\,(\mathrm{mod}\,n)}$
[Eq.~\eqref{eq:rootfilter}], leaving the ladder sum; for $h<0$,
$i^hJ_h=i^{-|h|}(-1)^{|h|}J_{|h|}=i^{|h|}J_{|h|}$ by the Bessel symmetry
\eqref{eq:besselsym}.
\end{proof}

\begin{theorem}[Dominant-component Chern number]
\label{thm:dominance}
Consider the mixed gap \eqref{eq:blockgap} on a circular Fermi surface
$k=k_F$ enclosing $\Gamma$, $\mu>0$, with condensate amplitudes $a_s,a_t$.
Define the on-shell weights $W_\nu(k_F)=nJ_\nu(k_F)$ and the ladder tail
$\varepsilon(k_F)=n\max(|a_s|,|a_t|)\sum_{h\equiv m\,(n),\,|h|\ge n+\bar
m}|J_{|h|}(k_F)|$ with $\bar m=\min(m,n-m)$. Provided the gap does not vanish
on the Fermi surface:
\begin{enumerate}
\item[(i)] singlet-dominant regime, $|a_s|W_{L_s}>|a_t|W_{L_t}+\varepsilon$:
both blocks have winding $h_s$, and
\begin{equation}
C=2h_s\,,\qquad N_{\rm Maj}=2L_s\,,\qquad \frac{\kappa_{xy}}{T}=L_s\,\kappa_0\,;
\label{eq:counting}
\end{equation}
\item[(ii)] triplet-dominant regime,
$|a_t|W_{L_t}>|a_s|W_{L_s}+\varepsilon$: both blocks have winding $h_t$, and
$C=2h_t$, $N_{\rm Maj}=2L_t$, $\kappa_{xy}/T=L_t\kappa_0$;
\item[(iii)] on the boundary $|a_s|W_{L_s}=|a_t|W_{L_t}$ (up to
$O(\varepsilon)$) the gap closes at $n$ equally spaced Fermi-surface points
per block, and the Chern number changes by
\begin{equation}
\Delta C=2(h_t-h_s)=\pm 2n\,.
\label{eq:deltac}
\end{equation}
\end{enumerate}
\end{theorem}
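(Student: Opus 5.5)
The plan is to reduce everything to the winding number of a trigonometric polynomial on the Fermi circle, via the Read--Green formula \eqref{eq:weakwinding}, and then apply a dominant-term (Rouché-type) argument.

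\emph{Circle expansions.} Applying Lemma~\ref{lem:circle} to the doubled-star forms \eqref{eq:doubled}, and splitting $e^{i\vk\cdot\vd_j}$ into cosine and sine, gives on $k=k_F$
\[
\chi_m^{(e)}=n\sum_{h\equiv m\,(n),\ h\ \mathrm{even}} i^{|h|}J_{|h|}(k_F)e^{ih\theta},
\]
together with the analogous odd-$h$ sum for $-i\chi_m^{(o)}$. By Theorem~\ref{thm:content}, the lowest-magnitude ladder members are $h_s$ and $h_t$, with $|h_s|=L_s$ and $|h_t|=L_t$. Every other ladder member has $|h|\ge n+\bar m$. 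So each block gap \eqref{eq:blockgap} reads
\[
\Delta_\sigma=A_s e^{ih_s\theta}+\sigma A_t e^{ih_t\theta}+R(\theta),
\]
with $|A_s|=|a_s|W_{L_s}$ and $|A_t|=|a_t|W_{L_t}$, where the tail obeys $|R|\le\varepsilon$ pointwise by the triangle inequality.

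\emph{Winding count.} Without loss of generality take the singlet-dominant case (i). Write $\Delta_\sigma=A_se^{ih_s\theta}(1+u(\theta))$ with
\[
|u|\le\frac{|A_t|+\varepsilon}{|A_s|}<1 .
\]
Then $1+u$ stays in the right half-plane, so it contributes zero winding. The total winding of $\Delta_\sigma$ is therefore $h_s$ for both signs of $\sigma$. This is the discrete argument principle; equivalently, the homotopy $A_se^{ih_s\theta}+t(\cdots)$ for $t\in[0,1]$ never vanishes. Each block then has $C_{\rm block}=h_s$. Summing the two spin blocks gives $C=2h_s$, hence $N_{\rm Maj}=|C|=2L_s$ and, by \eqref{eq:thallunit}, $\kappa_{xy}/T=L_s\kappa_0$. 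Case (ii) is identical with the roles of $s$ and $t$ exchanged. The sign $\sigma$ is irrelevant because only moduli enter.

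\emph{Boundary (iii).} Neglect $R$ at $O(\varepsilon)$. Then $|\Delta_\sigma|=0$ requires $|A_s|=|A_t|$ together with the phase condition
\[
e^{i(h_t-h_s)\theta}=-\sigma A_s/A_t .
\]
Since $|h_t-h_s|=|h_s|+|h_t|=n$ by \eqref{eq:winding}, this equation has exactly $n$ equally spaced solutions in $\theta$. The change in Chern number is read off directly from (i) and (ii): $\Delta C=2(h_t-h_s)=\pm2n$.

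\emph{Expected main obstacle.} The delicate step is justifying that the weak-pairing reduction to the winding formula is legitimate in two respects. First, the tail bound $\varepsilon$ must cover every non-leading harmonic, including same-parity rungs that may carry either sign of $h$. Second, the perturbed boundary must still yield exactly $n$ zeros. For the second point I would use a Rouché-type count on the circle in the variable $z=e^{i\theta}$: writing $\Delta_\sigma=z^{\min(h_s,h_t)}P(z)$ for a Laurent polynomial $P$ and comparing winding numbers before and after the crossing fixes the total zero count at $|h_t-h_s|=n$. The hypothesis "gap nonvanishing on FS" is used only to exclude degenerate crossings.
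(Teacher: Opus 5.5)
Your proof is correct and is essentially the paper's argument. It uses the same Lemma~\ref{lem:circle} split into the two lowest ladder rungs $h_s,h_t$ plus a tail bounded pointwise by $\varepsilon$, and a Rouch\'e-type dominant-term winding count: you factor out the dominant term and keep $1+u$ in the right half-plane, while the paper factors out $e^{ih_t\theta}$ and follows the circle of radius $|A_s|$ centered at $\sigma A_t$, but these are the same argument. As in the paper, $C$ is twice the block winding, and the boundary zeros are the $n$ equally spaced roots of the two-term equation. One harmless slip: the odd-$h$ sum equals $i\chi_m^{(o)}$, not $-i\chi_m^{(o)}$, and the resulting sign drops out of all the moduli you use.
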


\begin{proof}
By Lemma~\ref{lem:circle} and the parity split of the ladder
(Theorem~\ref{thm:content}),
$\Delta_\sigma(k_F,\theta)=A_s e^{ih_s\theta}+\sigma A_t
e^{ih_t\theta}+\varepsilon_\sigma(\theta)$ with
$|A_{s,t}|=|a_{s,t}|W_{L_{s,t}}$ and $|\varepsilon_\sigma|\le\varepsilon$.
Factoring out $e^{ih_t\theta}$, the bracket traces the circle
$w(\theta)=\sigma A_t+A_s e^{i(h_s-h_t)\theta}$, traversed $|h_s-h_t|=n$
times as $\theta$ winds once; it encloses the origin if and only if
$|A_s|>|A_t|$ [a circle of radius $|A_s|$ centered at $\sigma A_t$], and by
Rouch\'e's theorem---which states that a perturbation smaller than the
minimum modulus of the unperturbed function on a contour cannot change the
number of enclosed zeros, here applied to the winding of $w(\theta)$ about
the origin---the tail cannot alter this while
$\varepsilon<\big||A_s|-|A_t|\big|$. The winding of $\Delta_\sigma$ is
therefore $h_t+(h_s-h_t)\,\Theta(|A_s|-|A_t|)$, identical in both blocks
since it depends only on magnitudes. The weak-pairing Chern number equals the
winding [Eq.~\eqref{eq:weakwinding}; Ref.~\cite{ReadGreen2000}] and the two
spin blocks contribute equally, giving $C=2\times\mathrm{winding}$; the edge
and thermal-Hall entries follow from $|C|$ chiral Majorana branches, i.e.\
chiral central charge $c=C/2$, each branch carrying $\kappa_0/2$
[Eq.~\eqref{eq:thallunit}; Refs.~\cite{ReadGreen2000,Volovik2003}]. On the
boundary, $w(\theta)=0$ at the $n$ equally spaced roots of
$e^{i(h_s-h_t)\theta}=-\sigma A_t/A_s$, each transferring one unit of
winding per block, giving Eq.~\eqref{eq:deltac}.
\end{proof}

\begin{corollary}[Small-Fermi-surface asymptotics]
\label{cor:smallk}
Choose the representative $0<m<n/2$ (the choice $m\to n-m$ gives the complex
conjugate chirality). Since $W_\nu(k_F)\simeq n(k_F/2)^\nu/\nu!$ as $k_F\to0$
[Eq.~\eqref{eq:besselsmall}]:
\begin{enumerate}
\item[(i)] if $m$ is even ($L_s=m<L_t=n-m$), singlet dominance holds for any
$k_F$-independent amplitude ratio: the phase \eqref{eq:counting} with
$C=2h_s$ is robust at small $k_F$;
\item[(ii)] if $m$ is odd ($L_t=m<L_s=n-m$), singlet dominance requires
\begin{equation}
\left|\frac{a_s}{a_t}\right|>\frac{L_s!}{L_t!}\left(\frac{2}{k_F}\right)^{L_s-L_t}\big[1+O(k_F^2)\big]\,,
\label{eq:domcond}
\end{equation}
and for $O(1)$ amplitude ratios the triplet-winding phase $C=2h_t$ is the
generic small-$k_F$ outcome.
\end{enumerate}
\end{corollary}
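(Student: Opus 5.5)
The plan is to read Corollary~\ref{cor:smallk} off Theorem~\ref{thm:dominance}(i)--(ii), using the small-argument Bessel asymptotics \eqref{eq:besselsmall} for the on-shell weights $W_\nu=nJ_\nu$. I would fix $0<m<n/2$, so that $\bar m=m$ and the two minimal ladder representatives have magnitudes $m<n-m$. By Theorem~\ref{thm:content}, exactly one of $m,n-m$ is even. I would then split into the two cases $m$ even ($L_s=m$, $L_t=n-m$) and $m$ odd ($L_t=m$, $L_s=n-m$). The common step is the ratio
\begin{equation*}
\frac{W_{L_s}}{W_{L_t}}=\frac{L_t!}{L_s!}\Big(\frac{k_F}{2}\Big)^{L_s-L_t}\big[1+O(k_F^2)\big].
\end{equation*}
This follows from \eqref{eq:besselsmall}, since each $J_\nu$ carries a relative correction of order $k_F^2$.

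Next I must control the ladder tail $\varepsilon$, which involves only $|h|\ge n+m$. Since $J_\nu(k)\le (k/2)^\nu/\nu!$ for real $k\ge0$ (termwise comparison of the Taylor series, or the standard bound), the tail is bounded by a convergent series. That series is dominated by its first term, of order $\max(|a_s|,|a_t|)\,k_F^{\,n+m}$. Because $n+m>n-m=\max(L_s,L_t)$, the tail is $O(k_F^2)$ relative to the larger of $|a_s|W_{L_s}$ and $|a_t|W_{L_t}$. This holds provided the amplitude ratio does not make the dominant term itself smaller than the tail.

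In case (i), $L_s<L_t$, so $W_{L_t}/W_{L_s}=O(k_F^{\,n-2m})\to0$. For any fixed $k_F$-independent ratio $|a_t/a_s|$ the strict inequality $|a_s|W_{L_s}>|a_t|W_{L_t}+\varepsilon$ therefore holds for $k_F$ small enough. Here the tail $\varepsilon\sim\max(|a_s|,|a_t|)k_F^{n+m}$ is negligible against $|a_s|k_F^{m}$ because the ratio is fixed. The gap is then nonvanishing on the Fermi surface, as the proof of Theorem~\ref{thm:dominance} shows via the Rouch\'e margin. Theorem~\ref{thm:dominance}(i) then gives $C=2h_s$.

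In case (ii), $L_t<L_s$. Inverting the displayed ratio, the singlet-dominance inequality $|a_s|W_{L_s}>|a_t|W_{L_t}+\varepsilon$ becomes exactly \eqref{eq:domcond}. The $1+O(k_F^2)$ factor absorbs both the Bessel corrections and the relative size of the tail. If instead $|a_s/a_t|=O(1)$, the right side of \eqref{eq:domcond} diverges as $k_F^{-(L_s-L_t)}$ and the condition fails. The triplet-dominant inequality of Theorem~\ref{thm:dominance}(ii) then holds by the same tail estimate, giving $C=2h_t$.

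The main obstacle is making the tail estimate uniform. The tail scales with $\max(|a_s|,|a_t|)$ rather than with the dominant term, so the $O(k_F^2)$ absorption in \eqref{eq:domcond} needs care near the boundary, where the amplitude ratio itself scales with $k_F$. I would handle this by checking that at the threshold ratio the tail is still suppressed by $k_F^{n+m-\max(L_s,L_t)}=k_F^{2m}$ relative to the dominant term. Since $m\ge1$, that suppression is at least $k_F^{2}$, so it fits within the stated $O(k_F^2)$ correction.
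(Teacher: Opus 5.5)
Your proposal is correct and follows the paper's own argument: insert $W_\nu\simeq n(k_F/2)^\nu/\nu!$ into the inequalities of Theorem~\ref{thm:dominance}, note that $W_{L_s}/W_{L_t}\to\infty$ when $L_s<L_t$, and invert $W_{L_t}/W_{L_s}$ to obtain Eq.~\eqref{eq:domcond} when $L_t<L_s$. Your explicit control of the ladder tail $\varepsilon$ is a refinement the paper's proof skips, and the obstacle you flag dissolves: every tail harmonic has $|h|\ge n+m$ while $\max(L_s,L_t)=n-m$, so at small $k_F$ one has $\varepsilon\le\mathrm{const}\cdot k_F^{2m}\max\big(|a_s|W_{L_s},|a_t|W_{L_t}\big)$ uniformly in the amplitudes, and no proviso on the amplitude ratio is needed.
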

\begin{proof}
Insert $W_\nu\simeq n(k_F/2)^\nu/\nu!$ into the dominance conditions of
Theorem~\ref{thm:dominance}: for (i), $W_{L_s}/W_{L_t}\propto
k_F^{L_s-L_t}\to\infty$ since $L_s<L_t$, so the singlet dominates for any
fixed ratio; for (ii), $W_{L_t}/W_{L_s}=(L_s!/L_t!)(k_F/2)^{L_t-L_s}[1+O(k_F^2)]$,
which inverted gives Eq.~\eqref{eq:domcond}.
\end{proof}

For the three-bond star ($L_s=2$, $L_t=1$) condition \eqref{eq:domcond}
reads $|a_s/a_t|>(2!/1!)(2/k_F)=4/k_F$. The $C=\pm4$ phase with four
Majorana branches and $\kappa_{xy}/T=2\kappa_0$ derived in
Sec.~\ref{sec:topology} therefore holds precisely when the singlet condensate
amplitude exceeds this threshold; a rigidly locked single-parameter state
$\Delta\propto f_\pm$ (unit amplitude ratio) is triplet-dominated for $k_F$
below the first $J_1$--$J_2$ crossing at $\approx2.63$ (at leading-harmonic
resolution of the dominance boundary), hence throughout the
Brillouin zone ($|K|=\frac{4\pi}{3\sqrt3}\approx2.42<2.63$), and carries $C=\pm2$
instead. The boundary between the two is a topological phase transition
\emph{inside} the chiral phase: the gap closes at three equally spaced points
per block, $C$ jumps by $6$, and the thermal Hall conductance drops from
$2\kappa_0$ to $\kappa_0$. Because the on-shell weights $W_\nu(k_F)$ are
doping-dependent, the transition is in principle tunable by doping at fixed
condensate ratio---a falsifiable consequence of the mixed-parity structure.
What remains model-dependent is the amplitude ratio $|a_s/a_t|$ itself, set
by the gap equation of the specific glue (Sec.~\ref{sec:micro}); the counting
rule \eqref{eq:counting}--\eqref{eq:deltac} is exact. For even $m$ the
analogous statements require no condition: e.g.\ the $(n,m)=(5,2)$ class of
Corollary~\ref{cor:n5} has a robust $C=\pm4$ with a dominant $d$-wave
singlet and a subleading $f$-wave triplet.

\section{Competing orders: the order-competition phase diagram of a shared glue}
\label{sec:compete}

\subsection{Two channels of one kernel}
The two models use different glues, but share one structural feature: once
the kernel $\mathcal G(\mathbf q)$ is written, its particle-particle (pp)
and particle-hole (ph) projections exist simultaneously. The terminology is
standard many-body physics \cite{Coleman2015}: a two-body interaction among
itinerant electrons can be read in the Cooper channel, where it binds pairs
(the pp projection that entered the gap equation \eqref{eq:lingap}), or in
the density channel, where it couples electron-hole pairs (the ph
projection that drives density or magnetic instabilities); both readings of
one kernel are always present at once. For Model~1,
$\mathcal G(\mathbf q)\propto J_K^2\chi(\mathbf q)$
[Eq.~\eqref{eq:kernel1}]: its pp projection gives the pairing kernel of
Sec.~\ref{sec:model1}, and its ph projection gives a spin-density
interaction among the itinerant electrons,
\begin{equation}
H_{\mathrm{DW}}=J_K^2\sum_{\mathbf q}\chi(\mathbf q)\,
\mathbf S_{\mathbf q}\cdot\mathbf S_{-\mathbf q},\qquad
\mathbf S_{\mathbf q}=\sum_{\mathbf k}
\psi_{\mathbf k+\mathbf q}^\dagger\boldsymbol\sigma\psi_{\mathbf k}.
\label{eq:sdw}
\end{equation}
For Model~2, the ph projection of the chiral-phonon exchange is a
density-density interaction (CDW channel),
\begin{equation}
H_{\mathrm{CDW}}=g^2\sum_{\mathbf q}D(\mathbf q)\,
\rho_{\mathbf q}\rho_{-\mathbf q}.
\label{eq:cdw}
\end{equation}
The instability conditions form a Thouless pair [the ph analogue of the
Cooper criterion, Sec.~\ref{sec:bgglue}],
\begin{equation}
1=V_{\mathrm{DW}}\,\chi_{\mathrm{ph}}(T_{\mathrm{DW}}),\qquad
1=V_{\mathrm{pair}}\,\chi_{\mathrm{pp}}(T_c),
\label{eq:thouless}
\end{equation}
and which channel condenses first is decided entirely by the doping
competition between $\chi_{\mathrm{ph}}(x)$ and $\chi_{\mathrm{pp}}(x)$. One
kernel, two exits.

\subsection{Homology rule for the competitor}
The identity of the competing order is fixed by the channel type of the
glue, with no additional input. A vector glue (Model~1, spin vertex
$\boldsymbol\sigma$) has an SDW-type ph projection [the spin density
$\mathbf S_{\mathbf q}$ of Eq.~\eqref{eq:sdw}], so the competitor is
magnetic order; a scalar or charge glue (Model~2, chiral phonon through the
charge channel $\psi^\dagger\psi$) has a CDW-type ph projection, so the
competitor is a (chiral) charge density wave. This is the competing-order
homology rule: the algebraic type of the glue specifies both its pairing
channel and its natural competitor. The doped Kitaev spin liquid can furnish a
fractionalized parent state, fitting the broader pattern in which
unconventional superconductors arise from non-Fermi-liquid or fractionalized
normal states
\cite{Monthoux1991,Chubukov2008}.

\subsection{Phase diagram (schematic)}
The Fermi-surface geometry drives a nonmonotonic
$\chi_{\mathrm{ph}}/\chi_{\mathrm{pp}}$ as the doping $x$ varies.
At low doping the parent spin fluctuations are strong and the Fermi surface
small, $\chi_{\mathrm{ph}}$ wins, $T_{\mathrm{DW}}>T_c$, and superconductivity
is suppressed. At intermediate doping the Fermi surface crosses the
honeycomb saddle points---the $M$ points of the zone where the dispersion
has vanishing slope---and the resulting van Hove singularities enhance
$\chi_{\mathrm{pp}}$ logarithmically [the two-dimensional saddle-point
density of states diverges as $\ln$; Refs.~\cite{Nandkishore2012,%
BlackSchaffer2014}], so $T_c$ forms a dome peaking near the van Hove
filling. At high doping both susceptibilities decay away from the
singularities and the system returns to a Fermi liquid. The two transition
lines cross twice, on the two sides of the dome. The superconducting order
parameter, which breaks U(1), and the density-wave order parameter, which
breaks translation or spin rotation, are orthogonal, so the nature of the boundary
is set by the sign of their coupling: for the repulsive coupling natural to
a shared kernel the generic outcome is a first-order boundary, with a
bicritical or coexistence point in the attractive limit [the standard
Landau-theory competition of two order parameters coupled biquadratically,
$\delta F=\gamma|\Delta|^2|\psi_{\mathrm{DW}}|^2$: $\gamma>0$ penalizes
coexistence and drives the boundary first order]. Above $T_c$ near the
crossings one expects a fluctuation regime of the competing order, a
pseudogap-type region of the ph channel.

\subsection{Model-specific predictions}
For Model~1, the unitary triplet pairing of the ferromagnetic Kitaev glue
competes with ferromagnetic-type SDW fluctuations, and the $d+id$ pairing of
the antiferromagnetic glue (Note~2) competes with SDW. The Knight shift
should show a magnetic contribution at $T_{\mathrm{DW}}$ and a
superconducting contribution at $T_c$, the two transitions being separable;
across a first-order boundary the gap structure does not change.

For Model~2 the natural competitor of the chiral-phonon glue is a chiral
CDW. Chiral charge order has been proposed in the kagome family
$A$V$_3$Sb$_5$ on the basis of STM on KV$_3$Sb$_5$ \cite{Jiang2021} and
muon-spin relaxation on CsV$_3$Sb$_5$ \cite{Khasanov2022}, but the
time-reversal-breaking interpretation is currently
contested: high-resolution polar Kerr measurements find no spontaneous
Kerr rotation at the nanoradian level \cite{Saykin2023,Farhang2023}. If
a chiral CDW is nevertheless confirmed in this or a related material,
the homology rule assigns it to the ph exit of the same kernel, and the
discriminator is sharp: the Kerr signal should appear first at
$T_{\mathrm{CDW}}$ (the CDW chirality), with an additive superconducting
contribution at $T_c$; if the two orders carry opposite chiral signs,
one expects Kerr compensation and a nonmonotonic doping dependence of
$T_c$. Conversely, a confirmed Kerr null result constrains any chiral
CDW candidate to Kerr-silent structures, e.g.\ antiferro-chiral stacking
along the $c$ axis or domains below the optical spot size.

\paragraph{Note 6 (positioning of this section).}
The structure of Eq.~\eqref{eq:thouless} and the actual values of
$\chi_{\mathrm{ph,pp}}(x)$ are statements at two levels. The former is a
strict consequence of a shared kernel plus the Thouless criteria; the
latter depends on band-structure details and requires numerical work. The
phase diagram here is a structural schematic, not a quantitative prediction.
Its function is to provide a falsifiable framework for experiment and numerics: the position of the $T_c$ dome (van Hove filling), the nature of
the boundary (first order versus bicritical), and, for Model~2, the
two-stage Kerr signal.

\section{Discussion}
\label{sec:discuss}

\subsection{Channel positioning}
The pairing-matrix space is exhausted by the spin singlet and the spin
triplet ($2\times2$ complex matrices, four complex dimensions;
Sec.~\ref{sec:bgmatrix}). In the BdG convention, the triplet $\mathbf
d$-vector components of Model~1 are the coefficients of the symmetric
matrices $\sigma_i i\sigma_y$ (odd-parity part), and the chiral combination
of Model~2 is built from two independent real components ($d_{x^2-y^2}$ and
$d_{xy}$ basis functions) with a relative phase. Neither model goes beyond
the singlet-triplet framework \cite{SigristUeda1991}.
Table~\ref{tab:channel} summarizes.

\begin{table}[t]
\caption{\label{tab:channel}Channel positioning of the two models.}
\begin{ruledtabular}
{\small
\begin{tabular}{lll}
Model & Sector & Chirality\\
\hline
Model 1 & triplet ($\mathbf d$ vector) & none (unitary, anisotropic)\\
Model 2 & singlet & $d+id$ via relative phase\\
\end{tabular}
}
\end{ruledtabular}
\end{table}

\subsection{Mutually exclusive fingerprints}
The two states are distinguished by a decision table of observables
(Table~\ref{tab:fingerprint}): a single Kerr or thermal-Hall measurement
already separates them, and the node-mobility test of
Prediction~\ref{pred:seq} probes the kernel class itself. (The Knight-shift
row follows from Sec.~\ref{sec:bgmatrix} \cite{SigristUeda1991}; the Kerr
row from Proposition~\ref{prop:tr} and Ref.~\cite{Xia2006}; the thermal-Hall
row from Eq.~\eqref{eq:thallunit} \cite{ReadGreen2000,Volovik2003}; the gap
and specific-heat rows from Corollary~\ref{cor:nodes} and
Prediction~\ref{pred:seq}.)

\begin{table}[t]
\caption{\label{tab:fingerprint}Mutually exclusive experimental
fingerprints of the two models.}
\begin{ruledtabular}
{\footnotesize\setlength{\tabcolsep}{1.5pt}
\begin{tabular}{lll}
Observable & Model 1 (unitary trip.) & Model 2 (chiral singlet)\\
\hline
Knight shift & bond-anisotropic & fully suppressed\\
Kerr rotation & absent & at $T_c$, sign reversible\\
Thermal Hall & absent & quantized, $c=2$\\
Gap & nodal$\to$nodeless & full (Dirac gapped)\\
Specific heat & multiple anomalies & single transition\\
\end{tabular}
}
\end{ruledtabular}
\end{table}

\subsection{Relation to known models}
The chiral $p$-wave ($p_x+ip_y$) is the chiral combination within the
triplet sector (relative phase between two components of the $\mathbf d$
vector), with single-block $C=\pm1$~\cite{ReadGreen2000,Volovik2003}.
Model~2 is the singlet-sector analogue ($d+id$, total $C=\pm4$ over the two
spin blocks): the mechanism is parallel, the sector different. By
Corollary~\ref{cor:unify}, the two are the odd and even parity projections
of one chiral-boson form factor, and Sec.~\ref{sec:alternation} shows this
unification is the $n=3$ instance of a general odd-fold-star structure. The
Kitaev chain~\cite{Kitaev2001} is a one-dimensional spinless $p$-wave
pairing problem, a single spin species with no spin structure, in the
odd-parity class with a $\mathbb Z_2$ invariant; Model~1 lives on the
two-dimensional honeycomb lattice with the pairing symmetry set by the
bond-locked anisotropic $\mathbf d$ vector, and Model~2 is the
two-dimensional chiral $d$-wave singlet with $C=\pm4$.

Model~2 is one realization of chiral $d$-wave pairing ($d_{x^2-y^2}+id_{xy}$),
through the $K$-point chiral-phonon glue (modulo-3 angular momentum
transfer, Note~4) selecting the chiral combination from the degenerate
$d$-wave doublet. It supplies a concrete microscopic mediation mechanism
for chiral $d$ wave.

Superconductivity in the doped Kitaev model has a literature of its own.
At the mean-field level You \emph{et al.}~\cite{You2012}, Hyart \emph{et
al.}~\cite{Hyart2012}, and Okamoto~\cite{Okamoto2013} found that
ferromagnetic Kitaev doping produces triplet $p$-wave pairing while
antiferromagnetic Kitaev doping produces singlet $d+id$; numerically there
are the pair-density-wave precursor study of Peng \emph{et
al.}~\cite{Peng2021}, the pairing-tendency calculations of Laurell
\emph{et al.}~\cite{Laurell2024}, the variational Monte Carlo
confirmation of Misawa and Ido~\cite{Misawa2026}, and the deliberately
cautious tensor-network results of Hardy \emph{et al.}~on the triangular
lattice~\cite{Hardy2026}. Our models are consistent with these
qualitative
conclusions (the sign analysis of Note~2 is the same statement, via
Eq.~\eqref{eq:fierz}). Compared with prior work on pairing in doped Kitaev models, the
present kernel differs in three structural respects. (i) \emph{Origin.}
Earlier studies proceed either by postulating a candidate pairing
channel and testing self-consistency
\cite{You2012,Hyart2012,Okamoto2013}, or by measuring pairing
tendencies numerically \cite{Peng2021,Laurell2024,Misawa2026,Hardy2026};
the kernel here is instead derived from the exact flux-sector selection
rule of the pure Kitaev model, which fixes its algebraic form
(component-diagonal, nearest-neighbor separable, rank one in spin space per bond)
uniquely. (ii) \emph{Form.} The kernel is closed-form: the weight
function $w_\alpha=J_K^2\chi_\alpha$ reduces to the equal-time bond
correlator $m_\alpha$, which the Hellmann--Feynman theorem gives exactly
from the matter-Majorana spectrum, plus the retardation factor
$r_\alpha$ computed numerically and cross-checked analytically. (iii)
\emph{Parity structure.} The kernel is not confined to a single parity
sector: the cosine addition formula splits it into an even (singlet) and
an odd (triplet) part of identical locked strength, and this
observation underlies the general bond-star classification of
Sec.~\ref{sec:alternation}. None of these points is available in the
mean-field, variational, or functional-RG treatments; the sign rule
they do supply is reproduced as a byproduct. What is new here is the microscopic
derivation of the bond-locked kernel class with its exact weight function
(Secs.~\ref{sec:micro}--\ref{sec:wfunc}), the bond-locking mechanism with
its exact corollaries (weight anisotropy, the sequential nodal-to-nodeless
transition, sequential condensation, unitarity), the material-screening
criterion ($A_K$ measure with the microscopic factor $R$,
Table~\ref{tab:materials}), the exact chiral selection with chirality
slaving of Model~2, the parity-alternation classification of chiral bond
pairing with its enforced singlet--triplet composites
(Sec.~\ref{sec:alternation}), and the dominance counting rule for the Chern
number of mixed states with its in-phase topological boundary
(Theorem~\ref{thm:dominance}). The chiral-phonon mechanism of Model~2
belongs to the family of honeycomb chiral-phonon superconductivity
\cite{Gao2023}; our version is the $K$-point phonon plus the
parity-projected $d$-wave-doublet selection. Finally, the closest
antecedent of this mechanism is the Majorana-glue fractionalized
superconductivity of the Kitaev--Kondo lattice~\cite{Bunney2025}: there
pairing occurs among the itinerant quasiparticles of a metal and is
treated by perturbation theory and the functional RG, whereas our kernel
is built entirely from the bond susceptibility of the insulating liquid
and is weighted exactly by Hellmann--Feynman. The two agree on the sign
rule and are complementary in method.

\subsection{Quantum simulation}
Cold-atom optical lattices are an ideal platform for testing these pairing
models~\cite{Bloch2008}. Kitaev-type bond-dependent exchange can be
engineered with artificial gauge fields, and bond-locked anisotropic
$\mathbf d$-vector pairing can be simulated by tuning the coupling strengths
of the three bond directions. Chiral phonons can be induced by lattice
rotation or Floquet engineering, and the chiral combination can be probed by
angular-momentum-resolved pairing measurements. The tunability and
measurement toolbox of cold-atom systems may enable controlled realizations
of both the anisotropic triplet pairing and the chiral
combination~\cite{Nayak2008,Bloch2008}.

\subsection{Key points for experiment and computation}
\label{sec:expcomp}

This subsection collects, for readers who measure or compute rather than
derive, the outputs in practical form: what to do, what number to expect,
and where it is derived.

\paragraph{Screening a candidate material costs one line of arithmetic.}
Extract the three Kitaev exchanges $K_x,K_y,K_z$ (from \emph{ab initio}
calculations or from fits of the magnetic Hamiltonian), form $A_K$
[Eq.~\eqref{eq:AK}], and multiply by the transfer coefficient
$R\approx0.70$ near isotropy [Eq.~\eqref{eq:Aw}]: the bond-locked
triplet channel of Model~1 is selected in the heuristic screening if $A_w$ exceeds the threshold
$\lambda_c\sim0.1$--$0.3$ [Eq.~\eqref{eq:criterion}]. Applied to the
standard candidates (Table~\ref{tab:materials}), $\alpha$-RuCl$_3$ and
Na$_2$IrO$_3$ fail because their three bonds are equivalent ($A_K=0$);
$\beta$-Li$_2$IrO$_3$ ($A_K\approx0.16$, hence $A_w\approx0.11$) and
$\alpha$-Li$_2$IrO$_3$ ($A_K\approx0.06$--$0.18$) sit at the margin, and
for $\gamma$-Li$_2$IrO$_3$ model-level estimates
exist~\cite{Kimchi2015}, but component-resolved \emph{ab initio} exchange
parameters are still lacking. The
$\alpha$-RuCl$_3$ verdict is a falsifiable negative prediction of the model
(Sec.~\ref{sec:screening}).

\paragraph{The Knight shift reads the bond locking directly.}
Model~1 predicts a bond-resolved anisotropic Knight shift: different field
directions probe different $\mathbf d$-vector components
(Sec.~\ref{sec:sign1}). When off-diagonal $\Gamma'$ exchange is present, the
pinning axis tilts by $\theta$ away from the cubic axis $\hat\gamma$
within the $(\hat s,\hat\gamma)$ plane, so measuring
$\theta$ from the Knight-shift tensor of the superconducting state yields
$\Gamma'/(K-\Gamma)$ directly [Eq.~\eqref{eq:tilt},
Corollary~\ref{cor:tilt}]. A fully suppressed Knight shift instead diagnoses
the singlet state of Model~2 (Table~\ref{tab:fingerprint}).

\paragraph{One Kerr or thermal-Hall measurement separates the two states.}
Model~1 is unitary: no Kerr rotation, no quantized thermal Hall
(Corollary~\ref{cor:unitary}). Model~2 breaks time reversal: a Kerr rotation
appears at $T_c$, with a quantized plateau $\kappa_{xy}/T=2\kappa_0$, twice
the chiral $p$-wave value [Eq.~\eqref{eq:thall}]. Both signs follow the pump
helicity (chirality slaving; Proposition~\ref{prop:tr},
Appendix~\ref{app:gl}). If a triplet component co-condenses, the plateau
switches between $2\kappa_0$ and $\kappa_0$ across the exact boundary
\eqref{eq:dominance}, passing through a transient threefold point-node
closing per spin block [Eq.~\eqref{eq:c2}]: the plateau value reports which
side of the boundary the sample occupies.

\paragraph{The pump protocol has a built-in control.}
A circularly polarized pump raises $T_c$ above its unpumped value by
$\approx pf/(2\lambda_0)$ [Eq.~\eqref{eq:amplify}]; an unpolarized pump of
equal intensity does not. Reversing the polarization reverses the Kerr and
thermal-Hall signs continuously: the order-parameter amplitude dips only to
$\sqrt{2/3}$ of its equilibrium value, and the quasiparticle gap closes only
transiently at four Fermi-surface points per spin block
(Appendix~\ref{app:gl}), so the switching carries an instantaneous
point-node signature rather than a first-order jump.

\paragraph{Thermodynamics and spectroscopy.}
The components of Model~1 condense sequentially: multiple specific-heat
anomalies (Corollary~\ref{cor:seq}), a nodal window with power-law thermal
transport turning activated below the second transition
(Corollary~\ref{cor:nodes}, Prediction~\ref{pred:seq}), and a tunneling gap
$|\mathbf d(\mathbf k)|$ of the weight-independent bond-sine form whose
nodal directions do not move with the anisotropy
(Corollary~\ref{cor:weights}). Model~2 is fully gapped on the
$\Gamma$-centered Fermi surfaces stated above, with the Dirac points gapped
by $|g_\pm(K)|=3/2$, and carries four chiral Majorana branches per edge on the
singlet-dominant side of the boundary \eqref{eq:dominance}, two on the
triplet-dominant side (Sec.~\ref{sec:sign2}).

\paragraph{For computation, the kernel is already in closed form.}
The pairing kernel is component-diagonal, nearest-neighbor separable, and
rank one in spin space per bond [Eq.~\eqref{eq:kernelmicro}], with the weight function
$w_\alpha\propto r_\alpha m_\alpha$ in which $m_\alpha$ is exact
[Eqs.~\eqref{eq:mz}--\eqref{eq:mx}; logarithmic fit \eqref{eq:log}] and
$r_\alpha$ is now computed [Eq.~\eqref{eq:rnum}]. Two
tasks remain, in order of importance (Sec.~\ref{sec:conclusion}): (i) the
two-sublattice BdG solution with the kernel \eqref{eq:kernelmicro}---exact
for any given pairing matrix (Appendix~\ref{app:bdg})---which converts the
sequential-condensation Prediction~\ref{pred:seq} into numbers; and (ii)
the gap-equation determination of the singlet-triplet ratio $|a_s/a_t|$ for
specific compounds, which decides which side of the boundary
\eqref{eq:dominance} a given material occupies. The bond-star theorems of
Sec.~\ref{sec:alternation} are verified numerically in
Appendix~\ref{app:verify}, which doubles as a reproduction protocol.

\section{Conclusion}
\label{sec:conclusion}

We have constructed two superconducting pairing models on the doped Kitaev
honeycomb lattice and derived the kernel on which they
rest---microscopically for Model~1, and for Model~2 up to a
Cooper-channel extension whose momentum constraint we state
explicitly. The
doped full model is not integrable once the gauge flux becomes dynamical;
the solvability is the controlled combination of an exactly derived kernel
plus exactly solvable BdG (Note~3).

Microscopically (Secs.~\ref{sec:micro}--\ref{sec:wfunc}): the flux-sector
selection rule \eqref{eq:selection} forces the second-order Kondo kernel to
be component-diagonal, nearest-neighbor separable, and rank one in spin space per bond,
with a static window controlled by $T_c\ll2\Delta_F$. The weight function is
$w_\alpha\propto r_\alpha m_\alpha$ with $m_\alpha$ exact
[Eqs.~\eqref{eq:mz}--\eqref{eq:mx}]: zero intercept
at $J_\alpha=0$, boundary value $2/\pi$ for the neighbors, line-node
logarithm \eqref{eq:log}, isotropic value $0.5249$, transfer coefficient
$R\approx0.70$, retardation factor $r_\alpha\simeq0.24$
[Eq.~\eqref{eq:rnum}]; the phenomenological weighting $w_\alpha\propto J_\alpha$
is structurally confirmed and quantitatively corrected. The kernel mediates
both parity sectors with locked strength [Eq.~\eqref{eq:rank2micro}], but
the chiral-channel eigenvalue ratio $W_t/W_g=16/k_F^2$
[Eq.~\eqref{eq:eigratio}] places a Kitaev-spin-glue-driven chiral state on
the triplet side of the topological boundary ($C=\pm2$); the singlet
$C=\pm4$ phase diagnoses charge-vertex dominance [Eq.~\eqref{eq:vertexcond}].

Model~1 shows that bond-dependent anisotropic exchange weights the local
moment fluctuations by component and bond direction, and that the Kondo
coupling converts this into a unitary spin-triplet state with a bond-locked
anisotropic $\mathbf d$ vector: each component condenses with its own
bond-sine form factor [Eq.~\eqref{eq:bondlock}]. Within this kernel class
four statements are exact: the anisotropy fixes the $\mathbf d$ orientation
and gap magnitude but not the nodal directions (weight anisotropy); the node
structure sharpens to a sequential nodal-to-nodeless transition at fixed
triplet symmetry (Corollary~\ref{cor:nodes}, Prediction~\ref{pred:seq});
the components condense at generically distinct temperatures (split
transitions); and the state preserves time reversal (unitarity). The
off-diagonal exchanges tilt the locking axis by $\theta$
[Eq.~\eqref{eq:tilt}, Knight-shift measurable] and re-weight the transverse channels. The nodal directions, by contrast, are geometric invariants: they are fixed by the dominant bond direction and do not rotate under strain, pressure, or doping, since the gap is a Gram sum of bond sines whose common zeros are weight-independent (Corollary~\ref{cor:weights}); only a longer-range $J_3$ coupling, which adds next-neighbor form factors, rotates or displaces them. Within the heuristic screening criterion, survival requires the weight
anisotropy to exceed the threshold [Eqs.~\eqref{eq:criterion} and
\eqref{eq:Aw}], and the fingerprint is a bond-resolved anisotropic Knight
shift with multiple specific-heat anomalies.

Model~2 shows that $K$-point chiral phonons select, through the charge
channel, a unique singlet state: the parity decomposition
$f_\pm=g_\pm+p_\pm$ separates the even chiral $d$-wave part $g_\pm$ from the
odd chiral $p$-wave part exactly, the charge-channel kernel is the projector
$V_\pm\propto g_\pm g_\pm^\ast$, and the $s$-wave, opposite-chirality, and
$p$-wave overlaps vanish exactly. The gap chirality is slaved to the phonon
polarization, so the signs of the Chern number ($C=\pm4$, two spin blocks of
$\pm2$), of the Kerr rotation, and of the quantized thermal Hall ($c=2$) are
all reversible by reversing the pump polarization; Appendix~\ref{app:gl}
shows this slaving is quantitatively stable (deterministic selection at
sub-percent pump polarization, wrong-chirality domains unstable or frozen
behind barriers of $10^2$--$10^3\,k_BT_c$, and a continuous reversal path
with only transient point nodes). Within Model~2 the $C=\pm4$ conclusion is
unconditional; with a co-condensed triplet component it holds below the
dominance threshold \eqref{eq:dominance}, across which the state passes
through a threefold point-node closing into a $C=\pm2$ phase
[Eq.~\eqref{eq:c2}]. The state is fully gapped on the $\Gamma$-centered
Fermi surfaces stated above, with the Dirac points gapped by
$|g_\pm(K)|=3/2$.

Beyond the two models, the decomposition $f_\pm=g_\pm+p_\pm$ is the $n=3$
instance of a general structure: the angular-momentum ladder of a chiral
bond form factor alternates in parity if and only if the bond-star order is
odd (Theorem~\ref{thm:alternation}), so single-star bond-local chiral form
factors enforce singlet--triplet mixing with geometrically fixed content,
counter-propagating windings, and universal amplitude ratios (Theorem~\ref{thm:content},
Table~\ref{tab:class}); bond-local glues mediate the two sectors with locked orbital
strength (Proposition~\ref{prop:locked}); and the Chern number of the mixed
state follows the dominant on-shell component across an exact in-phase
topological boundary (Theorem~\ref{thm:dominance}).

Neither model goes beyond the singlet-triplet framework: Model~1 is an
anisotropic unitary variant of the triplet, Model~2 a chiral combination
within the singlet sector, and the enforced mixed states of
Sec.~\ref{sec:alternation} are singlet--triplet composites within that same
space. Open computations, in order of importance: the two-sublattice BdG
spectrum beyond the weak-pairing limit, which converts the
sequential-condensation Prediction~\ref{pred:seq} into numbers; the
gap-equation determination of $|a_s/a_t|$ for specific materials,
which decides which side of the boundary \eqref{eq:dominance} a
given compound occupies; and the quantitative evaluation of the
$(\Gamma/K)^2$ tails on candidate materials.

\begin{acknowledgments}
The author thanks Prof. Hai-Bing Xia, Zhi-Xian Zhang, and especially Prof. Yang-Sheng Xu for their help.
\end{acknowledgments}

\appendix

\section{BdG diagonalization and quasiparticle spectra}
\label{app:bdg}

In the Nambu spinor
$\Psi_{\mathbf k}=(c_{\mathbf k\uparrow},c_{\mathbf k\downarrow},
c^\dagger_{-\mathbf k\uparrow},c^\dagger_{-\mathbf k\downarrow})^{\mathsf T}$
(sublattice indices suppressed; see the end of this appendix for the status
of the multiband extension), the BdG Hamiltonian is Eq.~\eqref{eq:bdg} of
Sec.~\ref{sec:bgbdg}, with the pairing matrix fixed by fermionic
antisymmetry [Sec.~\ref{sec:bgmatrix}],
\begin{align}
&\hat\Delta(\mathbf k)=\big[\Delta_0(\mathbf k)+\mathbf d(\mathbf k)
\cdot\boldsymbol\sigma\big]i\sigma_y,\nonumber\\
&\Delta_0(\mathbf k)=\Delta_0(-\mathbf k),\qquad
\mathbf d(\mathbf k)=-\mathbf d(-\mathbf k).
\end{align}
Particle-hole symmetry guarantees that the eigenvalues come in pairs $\pm
E_{\mathbf k}$ (Sec.~\ref{sec:bgbdg}). Since the scalar $\xi$ commutes with
$\hat\Delta$, the off-diagonal blocks of $H_{\mathrm{BdG}}^2$ vanish:
\begin{equation}
H_{\mathrm{BdG}}^2=\mathrm{diag}\big(
\xi^2\mathbf 1+\hat\Delta\hat\Delta^\dagger,\;
\hat\Delta^\dagger\hat\Delta+\xi^2\mathbf 1\big).
\end{equation}
The two diagonal blocks share their eigenvalues [nonzero eigenvalues of
$XY$ and $YX$ coincide], so the spectrum is determined by
$\hat\Delta\hat\Delta^\dagger$. To evaluate it we use the Pauli identity
\begin{equation}
(\mathbf a\!\cdot\!\boldsymbol\sigma)(\mathbf b\!\cdot\!\boldsymbol\sigma)
=(\mathbf a\!\cdot\!\mathbf b)\,\mathbf 1
+i\,(\mathbf a\times\mathbf b)\!\cdot\!\boldsymbol\sigma,
\label{eq:pauliid}
\end{equation}
which follows from $\sigma_i\sigma_j=\delta_{ij}\mathbf
1+i\epsilon_{ijk}\sigma_k$ by contracting with $a_ib_j$. Since
$(i\sigma_y)(-i\sigma_y)=\mathbf 1$, direct multiplication gives
\begin{align}
\hat\Delta\hat\Delta^\dagger
&=(\Delta_0+\mathbf d\!\cdot\!\boldsymbol\sigma)
(\Delta_0^\ast+\mathbf d^\ast\!\cdot\!\boldsymbol\sigma)\nonumber\\
&=\big(|\Delta_0|^2+|\mathbf d|^2\big)\mathbf 1
+\mathbf B\!\cdot\!\boldsymbol\sigma,
\label{eq:ddag}
\end{align}
with
\begin{equation}
\mathbf B=\Delta_0\mathbf d^\ast+\Delta_0^\ast\mathbf d
+i\,\mathbf d\times\mathbf d^\ast ,
\label{eq:Bvec}
\end{equation}
where the cross-product term comes from applying Eq.~\eqref{eq:pauliid} to
$(\mathbf d\cdot\boldsymbol\sigma)(\mathbf
d^\ast\!\cdot\!\boldsymbol\sigma)$. A vector of the form
$A\,\mathbf 1+\mathbf B\!\cdot\!\boldsymbol\sigma$ has eigenvalues
$A\pm|\mathbf B|$ [the eigenvalues of $\mathbf
B\!\cdot\!\boldsymbol\sigma$ are $\pm|\mathbf B|$ by
Eq.~\eqref{eq:pauliid} with $\mathbf a=\mathbf b=\mathbf B$, which gives
$(\mathbf B\!\cdot\!\boldsymbol\sigma)^2=|\mathbf B|^2\mathbf 1$]. The
general two-branch spectrum is therefore
\begin{equation}
E_{\mathbf k,\pm}^2=\xi_{\mathbf k}^2+|\Delta_0(\mathbf k)|^2
+|\mathbf d(\mathbf k)|^2\pm|\mathbf B(\mathbf k)| .
\label{eq:generalEapp}
\end{equation}
Two reductions are used in the main text. (i) Real kernel (relative phase
$0$ or $\pi$, so $\Delta_0$ and $\mathbf d$ can be chosen real): $\mathbf
B=2\Delta_0\mathbf d$ and
$E^2=\xi^2+(|\Delta_0|\pm|\mathbf d|)^2$. (ii) Unitary $\pi/2$-locked case
($\Delta_0\mathbf d^\ast$ purely imaginary and $\mathbf d\times\mathbf
d^\ast=0$): $\mathbf B=0$ and the square-sum form
$E=\sqrt{\xi^2+|\Delta_0|^2+|\mathbf d|^2}$ is recovered. For Model~1
($\Delta_0=0$, unitary triplet) and Model~2 ($\mathbf d=0$) the square-sum
form is exact within the single-band treatment.

Replacing $\xi_{\mathbf k}\mathbf 1$ by the full
honeycomb normal state $h(\mathbf k)$ (with sublattice off-diagonal
entries), the off-diagonal blocks of $H_{\mathrm{BdG}}^2$ become
$h\hat\Delta+\hat\Delta h^{\mathsf T}$ and do not vanish in general, so the
block reduction above does not apply to the two-sublattice matrix. The
spectra quoted in the main text are therefore single-band (or per-band)
statements; the form-factor-level conclusions (node structure, parity,
winding numbers) are unaffected. More than that: the Chern numbers themselves are exact beyond the single-band reduction. For the band crossing the Fermi surface, the weak-pairing collapse of Eq.~\eqref{eq:chernblock} to the Fermi-surface winding \eqref{eq:weakwinding} involves only the projected pairing phase; the normal-state Bloch spinor adds no winding on a $\Gamma$-centered
Fermi surface that encloses no additional zero of the honeycomb off-diagonal
element $\gamma(\vk)$: since $\gamma(\mathbf 0)=3$ is real positive, its
texture is trivial near $\Gamma$. Sublattice mixing therefore corrects only
the quasiparticle spectrum away from the Fermi surface in this range, while
$C=\pm2$ and $C=\pm4$ hold within the weak-pairing regime of the
two-sublattice BdG Hamiltonian

\section{Parity decomposition and exact projection onto the $E_2$ doublet}
\label{app:proj}

\subsection{Conventions and two lemmas}
Take $\boldsymbol\delta_j=(\cos\phi_j,\sin\phi_j)$ with $\phi_j=2\pi j/3$,
$j=1,2,3$, so that $\sum_j\boldsymbol\delta_j=0$ and the differences
$\boldsymbol\delta_j-\boldsymbol\delta_l$ ($j\neq l$) are nonzero Bravais
vectors of the triangular lattice. We use two elementary facts.
(i) Root-of-unity sums [Eq.~\eqref{eq:rootfilter} with $n=3$]:
\begin{equation}
\sum_je^{\pm i\phi_j}=0,\qquad
\sum_je^{\pm2i\phi_j}=0,\qquad
\sum_je^{\pm4i\phi_j}=0 .
\label{eq:roots}
\end{equation}
(ii) $C_3$ covariance of Brillouin-zone integrals: for any Bravais-periodic
integrand, $I_{jl}\equiv\int_{\mathrm{BZ}}\cos(\mathbf
k\cdot\boldsymbol\delta_j)\cos(\mathbf k\cdot\boldsymbol\delta_l)\,d^2k$
takes only two values, $I_0$ ($j=l$) and $I_1$ ($j\neq l$), and
$\int_{\mathrm{BZ}}e^{i\mathbf k\cdot\boldsymbol\delta_j}d^2k=c$ is
independent of $j$. (For integrands built solely from
$e^{i\mathbf k\cdot(\boldsymbol\delta_j-\boldsymbol\delta_l)}$ the $j\neq l$
terms vanish individually, since the differences are nonzero Bravais
vectors and the integral of a plane wave over a periodic zone vanishes
unless its wavevector is a reciprocal-lattice vector, i.e.\ zero here.)

\subsection{Proof of the parity decomposition}
\label{app:parityproof}
With $e^{3i\phi_j}=1$ one has $e^{\pm i\phi_j}=e^{\mp2i\phi_j}$, so the even
and odd parts of $f_\pm$ [Eq.~\eqref{eq:paritydec}] are
\begin{align}
g_\pm(\mathbf k)&=\sum_je^{\mp2i\phi_j}\cos(\mathbf k\cdot
\boldsymbol\delta_j),\\
p_\pm(\mathbf k)&=i\sum_je^{\pm i\phi_j}\sin(\mathbf k\cdot
\boldsymbol\delta_j).
\end{align}
Parity gives $\langle g_\pm|p_\pm\rangle=0$ exactly (odd integrand). The
small-$\mathbf k$ expansions follow from the first two moments of the
phase-weighted sums, which we compute once:
$\sum_je^{\pm i\phi_j}=0$ [Eq.~\eqref{eq:roots}];
\begin{equation}
\sum_je^{\pm i\phi_j}\boldsymbol\delta_j
=\tfrac32(1,\pm i),
\label{eq:moment1}
\end{equation}
since $\sum_je^{i\phi_j}\cos\phi_j=\sum_j(e^{2i\phi_j}+1)/2=3/2$ and
$\sum_je^{i\phi_j}\sin\phi_j=\sum_j(e^{2i\phi_j}-1)/(2i)=3i/2$ (and
complex conjugation gives the lower sign); and
\begin{equation}
\sum_je^{\pm i\phi_j}(\mathbf k\cdot\boldsymbol\delta_j)^2
=\tfrac34(k_x\mp ik_y)^2,
\label{eq:moment2}
\end{equation}
since $(\vk\cdot\vd_j)^2=\tfrac{k^2}{2}+\tfrac{k^2}{2}\cos2(\theta-\phi_j)$
and $\sum_je^{i\phi_j}\cos2(\theta-\phi_j)
=\tfrac12\big[e^{2i\theta}\sum_je^{-i\phi_j}
+e^{-2i\theta}\sum_je^{3i\phi_j}\big]
=\tfrac32e^{-2i\theta}$, so the sum is
$\tfrac{k^2}{2}\cdot0+\tfrac{k^2}{2}\cdot\tfrac32e^{-2i\theta}
=\tfrac34(k_x-ik_y)^2$. Expanding $\cos=1-x^2/2+\cdots$ in $g_\pm$ [the
constant term vanishes by Eq.~\eqref{eq:roots}] and $\sin=x+\cdots$ in
$p_\pm$, and using $e^{\mp2i\phi_j}=e^{\pm i\phi_j}$:
\begin{equation}
g_\pm(\mathbf k)=-\frac38(k_x\mp ik_y)^2+O(k^4),
\end{equation}
\begin{equation}
p_\pm(\mathbf k)=\frac{3i}{2}(k_x\pm ik_y)+O(k^3);
\end{equation}
the even (odd) part contains only even (odd) orders because $\cos$ ($\sin$)
is even (odd). For the $C_3$ transformation, $R_3\boldsymbol\delta_j
=\boldsymbol\delta_{j+1}$, so
\begin{equation}
g_\pm(R_3\mathbf k)=\sum_je^{\mp2i\phi_j}\cos(\mathbf k\cdot
\boldsymbol\delta_{j-1})
=e^{\pm i2\pi/3}g_\pm(\mathbf k),
\end{equation}
after relabeling $j\to j+1$: $g_\pm$ carries angular momentum
$\pm1\equiv\mp2\ (\mathrm{mod}\ 3)$, matching the phonon $u_\pm$. \qed

The moment identities used here are the $(n,m)=(3,\pm1)$ cases of the
general moment identity of Lemma~\ref{lem:moment}
(Sec.~\ref{sec:alternation}).

\subsection{Exact overlaps}
With the inner product $\langle a|b\rangle=\int_{\mathrm{BZ}}a^\ast(\mathbf
k)b(\mathbf k)\,d^2k$:

\emph{$s$-wave.} $\langle s|g_\pm\rangle=c\sum_je^{\mp2i\phi_j}=0$ exactly,
by Eq.~\eqref{eq:roots}.

\emph{Opposite chirality.}
\begin{align}
\langle g_+|g_-\rangle&=\sum_{jl}e^{2i\phi_j}e^{2i\phi_l}I_{jl}\\
&=I_0\sum_je^{4i\phi_j}
+I_1\sum_{j\neq l}e^{2i(\phi_j+\phi_l)}=0,
\end{align}
since both sums vanish by Eq.~\eqref{eq:roots}
[$\sum_{j\neq l}e^{2i(\phi_j+\phi_l)}=(\sum_je^{2i\phi_j})^2
-\sum_je^{4i\phi_j}=0$].

\emph{Norm.} $\langle g_\pm|g_\pm\rangle=3(I_0-I_1)>0$: only positivity is
needed below.

\emph{Projection of the kernel.} With
$V_\pm\propto g_\pm g_\pm^\ast$ separable, its unique eigenvector with
nonzero eigenvalue is $g_\pm$ itself [Eq.~\eqref{eq:rankone}]. Writing
$|d_\pm\rangle=g_\pm/\sqrt{\langle g_\pm|g_\pm\rangle}$,
\begin{equation}
\langle d_\pm|V_\pm|d_\pm\rangle
=\frac{g^2}{\omega_0}\,\langle g_\pm|g_\pm\rangle^2>0,\qquad
\langle d_\mp|V_\pm|d_\mp\rangle=0,
\end{equation}
and $\langle s|V_\pm|s\rangle=0$, all exactly. Together with the parity
orthogonality $\langle g_\pm|p_\pm\rangle=0$, this proves
Proposition~\ref{prop:chiral}.

\subsection{Value at the Dirac points}
At a Dirac point $K$ the nearest-neighbor phases are the three cube roots:
$\mathbf K\cdot\boldsymbol\delta_j=\theta_0+\phi_j$ with
$\sum_je^{i\mathbf K\cdot\boldsymbol\delta_j}=0$ [the defining property of
the Dirac point: the nearest-neighbor hopping structure factor vanishes
there]. Hence
\begin{align}
g_\pm(K)&=\sum_je^{\mp2i\phi_j}
\cos(\theta_0+\phi_j)\\
&=\frac12e^{\mp i\theta_0}\sum_je^{\mp3i\phi_j}
+\frac12e^{\pm i\theta_0}\sum_je^{\mp i\phi_j}
=\frac32\,e^{\mp i\theta_0},
\end{align}
using $\cos x=(e^{ix}+e^{-ix})/2$, then $e^{\mp3i\phi_j}=1$ and
Eq.~\eqref{eq:roots}. The gap attains its maximum $|g_\pm|=3/2$ at the zone
corners, as stated in Eq.~\eqref{eq:gK}.

\section{Ginzburg--Landau analysis of chirality slaving}
\label{app:gl}

This appendix quantifies the free-energy competition underlying the
chirality slaving of Proposition~\ref{prop:chiral}: (i) whether the
pump-induced bias selects the chirality deterministically at the
transition; (ii) whether wrong-chirality domains can exist or nucleate
below $T_c$; and (iii) what the reversal path costs energetically. The
Ginzburg--Landau expansion itself---the free energy as a polynomial in the
order parameter with coefficients fixed by the microscopic
susceptibilities---is standard \cite{Coleman2015,SigristUeda1991}; what is
specific here is that the coefficients are fixed \emph{exactly} by the
overlap integrals of Appendix~\ref{app:proj}.

\subsection{Quartic structure from the exact overlaps}
Write $\hat\Delta(\mathbf k)=\eta_+g_+(\mathbf k)+\eta_-g_-(\mathbf k)$
and expand the weak-coupling free energy in $\eta_\pm$. The quadratic
cross term is proportional to $\langle g_+|g_-\rangle_{\mathrm{FS}}$,
which vanishes exactly on any $C_3$-symmetric Fermi surface because
$g_+^\ast g_-$ carries angular momentum $-2\equiv1\ (\mathrm{mod}\ 3)$
[$g_\pm$ carries $\pm1$ by Eq.~\eqref{eq:c3}, so $g_+^\ast g_-$ carries
$-1-1=-2$; modulo 3 the conjugate harmonic is equivalent, and either way
the angular average of a nonzero mod-3 harmonic vanishes by
Eq.~\eqref{eq:rootfilter}]; the two chiral channels therefore decouple at
the quadratic level for \emph{any} $C_3$-symmetric kernel, not only in
weak coupling. With
$\alpha_\sigma=a(T-T_c^\sigma)$, $a=N(0)/T_c$, the quartic term is fixed
by $\langle|\hat\Delta|^4\rangle_{\mathrm{FS}}$. Expand:
\begin{align}
|\hat\Delta|^2&=|g|^2\big(|\eta_+|^2+|\eta_-|^2\big)
+g_+^\ast g_-\,\eta_+^\ast\eta_-
+g_-^\ast g_+\,\eta_-^\ast\eta_+,
\end{align}
using $|g_+|=|g_-|\equiv|g|$ [$C_3$ plus inversion];
squaring and averaging, the terms linear in $g_\pm^\ast g_\mp$ vanish by
$\langle g_+^\ast g_-\rangle_{\mathrm{FS}}=0$ and the terms quadratic in it
by $\langle(g_+^\ast g_-)^2\rangle_{\mathrm{FS}}=0$ (angular momenta
$-2\equiv1$ and $-4\equiv2\ (\mathrm{mod}\ 3)$ respectively), leaving
\begin{equation}
\langle|\hat\Delta|^4\rangle_{\mathrm{FS}}
=\langle|g|^4\rangle_{\mathrm{FS}}
\Big[\big(|\eta_+|^2+|\eta_-|^2\big)^2+2|\eta_+|^2|\eta_-|^2\Big],
\end{equation}
so that
\begin{equation}
F=\alpha_+|\eta_+|^2+\alpha_-|\eta_-|^2
+\beta_1\big(|\eta_+|^2+|\eta_-|^2\big)^2
+\beta_2|\eta_+|^2|\eta_-|^2,
\label{eq:GL}
\end{equation}
with $\beta_2=2\beta_1>0$
exactly in weak coupling, with $\beta_1\propto\langle|g|^4\rangle_{\mathrm{FS}}$.
The anisotropic term $\eta_+^2\eta_-^{\ast2}+\mathrm{c.c.}$ is forbidden
by $C_3$ at this order (angular momentum $8\equiv2\pmod3$); the first
allowed anisotropy is of sixth order. Equation~\eqref{eq:GL} has exactly
two minima, the chiral states $(\eta_+,0)$ and $(0,\eta_-)$: the cross
quartic $\beta_2>0$ penalizes any simultaneous presence of the two
components, so mixed extrema lie above the chiral ones, and the only
degeneracy left to lift is the $Z_2$ chirality.

\subsection{Pump bias and the BCS amplification}
Let $p=(N_+-N_-)/(N_++N_-)$ be the circular polarization of the driven
$E_2$ doublet and $f=\lambda_{\mathrm{ph}}/\lambda_0$ the chiral-phonon
share of the total dimensionless coupling $\lambda_0$. The
spin-fluctuation part of the glue is chirality blind: the isotropic (trace)
part $\bar J\delta_{\mu\nu}$ of the exchange [Eq.~\eqref{eq:tbond}]
contributes equally to the two chiral
channels. Hence
\begin{equation}
\lambda_\pm=\lambda_0\pm\delta\lambda,\qquad
\delta\lambda=\tfrac12 pf\,\lambda_0,
\end{equation}
and the two channel transition temperatures obey, from
Eq.~\eqref{eq:bcs} [$T_c^\sigma=1.13\omega_De^{-1/\lambda_\sigma}$],
\begin{equation}
\ln\frac{T_c^+}{T_c^-}=\frac{1}{\lambda_-}-\frac{1}{\lambda_+}
=\frac{2\delta\lambda}{\lambda_0^2-\delta\lambda^2}
\approx\frac{pf}{\lambda_0},\qquad
\frac{\delta T_c}{T_c}\approx\frac{2}{\lambda_0}\,
\frac{\delta\lambda}{\lambda_0}.
\label{eq:amplify}
\end{equation}
The BCS sensitivity amplifies the asymmetry by $2/\lambda_0\sim7$--$10$:
a percent-level coupling asymmetry produces a $\delta T_c/T_c$ of
several percent.

\subsection{Deterministic selection at the transition}
A wrong-chirality fluctuation seed occupying a correlation area is
suppressed relative to the selected one by
\begin{equation}
\frac{P_{\mathrm{wrong}}}{P_{\mathrm{right}}}
\sim\exp\!\left(-\frac{\delta T_c}{\mathrm{Gi}\,T_c}\right),
\end{equation}
where $\mathrm{Gi}$ is the Ginzburg number---the width of the fluctuation
region---which in two dimensions is set by comparing the condensation
energy in a coherence area with $k_BT_c$: with $\xi_0^2
f_{\mathrm{cond}}=\varepsilon_F/(2\pi^3)$ [derived below],
$\mathrm{Gi}\sim k_BT_c/(f_{\mathrm{cond}}\xi_0^2)\sim T_c/\varepsilon_F$.
Deterministic single-chirality selection therefore requires only
\begin{equation}
pf\;\gtrsim\;\lambda_0\,\mathrm{Gi}\;\sim\;10^{-3},
\label{eq:threshold}
\end{equation}
satisfied by sub-percent values of either factor. Under bias, opposite
domains do not form at all: the degeneracy is lifted before domains can
nucleate, rather than domains being outcompeted after formation.

\subsection{Fate of the wrong chirality below $T_c$}
For $T_c^-<T<T_c^+$ the minority channel is uncondensed ($\alpha_->0$)
and the selected chiral state is the only extremum. For $T<T_c^-$ the
minority extremum $(0,\eta_-)$ reappears, with
$|\eta_-|^2=-\alpha_-/2\beta_1$ [minimizing
$F(0,\eta_-)=\alpha_-|\eta_-|^2+(\beta_1+\beta_2\cdot0)|\eta_-|^4$---note
the quartic along the axis is $\beta_1$]; expanding to quadratic order in
$\eta_+$ about it, the coefficient of $|\eta_+|^2$ is
\begin{align}
\alpha_++(2\beta_1+\beta_2)|\eta_-|^2
&=\alpha_+-\Big(1+\frac{\beta_2}{2\beta_1}\Big)\alpha_-\nonumber\\
&=a\Big(T_c-\frac32\,\delta T_c-T\Big),
\label{eq:unstable}
\end{align}
where the last equality uses $\alpha_\pm=a(T-T_c^\pm)$ with
$T_c^\pm=T_c\pm\delta T_c/2$ ($\delta T_c$ is the full splitting) and the
exact weak-coupling value $\beta_2=2\beta_1$.
The wrong-chirality extremum is therefore locally unstable in the window
$T_c-\tfrac32\delta T_c<T<T_c^-$ and becomes metastable only below
$T^\ast=T_c-\tfrac32\delta T_c$; with strong-coupling corrections
$\beta_2=2\beta_1(1+\varepsilon)$ the boundary moves to
$T^\ast=T_c-\frac{3+\varepsilon}{2(1+\varepsilon)}\,\delta T_c$
[repeating the same expansion with the corrected ratio].
Where the metastable wrong state exists, two-dimensional nucleation of
a wrong-chirality droplet costs
$E^\ast=\pi\sigma_{\mathrm{DW}}^2/\delta f$ [the standard competition of
wall energy $2\pi R\sigma_{\mathrm{DW}}$ against volume gain $\pi
R^2\delta f$, maximized over $R$], with domain-wall tension
$\sigma_{\mathrm{DW}}\sim\xi_0 f_{\mathrm{cond}}$ [condensation energy
lost across one coherence length], bulk bias
$\delta f=2f_{\mathrm{cond}}\,\delta T_c/(T_c-T)$ [linearizing
$f_{\mathrm{cond}}\propto(T_c-T)$ around the selected minimum], and the
two-dimensional identity
$\xi_0^2 f_{\mathrm{cond}}=\varepsilon_F/(2\pi^3)$ [combining
$f_{\mathrm{cond}}=\tfrac12N(0)\Delta^2$, the BCS gap scale
$\Delta\sim k_BT_c$, the single-spin 2D free-electron density of states
$N(0)=m/(2\pi\hbar^2)$, and $\xi_0=\hbar v_F/\pi\Delta$; all standard
weak-coupling relations \cite{Coleman2015}]. Numerically
$E^\ast/k_BT_c\sim10^2$--$10^3$ for $\varepsilon_F\sim0.5\,$eV and
$\delta T_c/T_c\sim0.05$: thermal nucleation is frozen out by orders of
magnitude. The equilibrium picture is therefore: deterministic
selection at $T_c^+$, a unique minimum down to $T_c^-$, an unstable
wrong extremum down to $T^\ast$, and a metastable but frozen one below
$T^\ast$.

\subsection{The reversal path}
Sweeping the pump helicity, the minimum-energy path between the two
chiral minima is angular: $|\eta_+|=\rho\cos\theta$,
$|\eta_-|=\rho\sin\theta$. With Eq.~\eqref{eq:GL} the free energy on this
path is
\begin{equation}
F(\rho,\theta)=-|a|\rho^2+\beta_1\rho^4+\beta_2\rho^4\cos^2\theta\sin^2\theta,
\end{equation}
[using $(|\eta_+|^2+|\eta_-|^2)^2=\rho^4$ identically], so the $\theta$
dependence enters only through the $\beta_2$ term, which is maximal at
$\theta=\pi/4$: the saddle sits at $\theta=\pi/4$, where the effective
quartic coefficient is $\beta_1+\beta_2/4=3\beta_1/2$ [with
$\beta_2=2\beta_1$], giving $\rho_{\mathrm{s}}^2=|a|/(3\beta_1)
=\tfrac23\rho_0^2$ [with $\rho_0^2=|a|/(2\beta_1)$ the equilibrium value]
and a barrier height $F_{\mathrm{s}}-F_{\min}=|a|^2/(12\beta_1)
=|F_{\mathrm{cond}}|/3$ (to be contrasted with $|F_{\mathrm{cond}}|$ for
the amplitude-collapse path). The order-parameter amplitude therefore
dips only to $\sqrt{2/3}\approx0.82$ of its equilibrium value and never
collapses. The quasiparticle gap, however, behaves differently: since
$g_-=g_+^\ast$ [Eq.~\eqref{eq:gpm}: the two are complex conjugates as
functions of $\vk$], the mixed gap obeys
\begin{align}
|\hat\Delta(\mathbf k)|^2&=\rho^2\big[\cos^2\theta\,|g_+|^2
+\sin^2\theta\,|g_-|^2\nonumber\\
&\quad+2\cos\theta\sin\theta\,\mathrm{Re}(g_+^\ast g_-)\big]\nonumber\\
&=\rho^2|g(\mathbf k)|^2
\big[1+\sin2\theta\,\cos2\phi(\mathbf k)\big],
\label{eq:mixgap2}
\end{align}
with $\phi(\mathbf k)\equiv\arg g_+(\mathbf k)$
[using $g_+^\ast g_-=(g_+^\ast)^2=|g|^2e^{-2i\phi}$],
and because $\phi$ winds twice around the Fermi surface
[$g_+\propto(k_x-ik_y)^2$, Eq.~\eqref{eq:gpm}], at $\theta=\pi/4$ the
quasiparticle gap closes at the four Fermi-surface points (per spin block)
where $\cos2\phi=-1$. It is at these transient point nodes that the Chern
number changes, by $4\times2=8$ [four nodes per block each transferring
one unit of winding, times the two-block factor of
Eq.~\eqref{eq:cfull}]. The reversal is thus continuous at the mean-field
level---no first-order transition, no global gap collapse---but it is
accompanied by an instantaneous point-node closing, itself a transient
spectroscopic signature of the switching event. The metastability boundary
\eqref{eq:unstable} adds a hysteretic detail to the protocol: reversal is
barrier-free while $T$ lies within the window of width $2\delta T_c$ below
$T_c^+$, whereas switching at $T<T^\ast$ must proceed by droplet nucleation
and is correspondingly slow; reversing the pump after warming through the
window is the cleanest protocol.

\subsection{Pump requirements and a discriminating signature}
Condition~\eqref{eq:threshold} translates into a driven-phonon occupancy
$N_{\mathrm{dr}}\gtrsim2p_{\min}N_{\mathrm{th}}\sim10^{-2}N_{\mathrm{th}}$,
orders of magnitude below the occupancies routinely reached in
pump--probe experiments: pump power is not the bottleneck. The binding
constraints are instead (i) the phonon share $f$, which however does not
gate \emph{sign} selection---the unpumped $Z_2$ degeneracy is exact, so
an infinitesimal bias already selects the sign, and $f$ only sets the
size of auxiliary effects; (ii) temporal overlap of the phonon imbalance
with the ordering window---continuous-wave pumping with field cooling
satisfies this trivially, while pulsed schemes require the imbalance to
outlive the ordering time
$\tau_{\mathrm{GL}}\sim\pi\hbar/(8k_B|T_c-T|)$ [the Ginzburg--Landau
relaxation time of the order parameter, $\tau_{\mathrm{GL}}=\hbar\gamma$
with $\gamma=\pi/(8k_B|T-T_c|)$ the standard weak-coupling damping
coefficient \cite{Coleman2015}]; and (iii) the weak-coupling quartic
structure $\beta_2\simeq2\beta_1$.
A sharp discriminating test follows from Eq.~\eqref{eq:amplify}: a
circular pump raises the realized transition temperature above its
unpumped value by $T_c^+/T_c^0-1\approx pf/(2\lambda_0)$, whereas an
unpolarized pump of equal intensity carries $p=0$ and produces no such
enhancement. Finally, this mechanism is distinct from dynamical
optical control of the order parameter~\cite{Claassen2019}: there the
chirality is set by the pulse history through nonequilibrium dynamics,
whereas here the sign follows the concurrent pump helicity, reversibly
and without a quench.

Within the model's assumptions, the chirality slaving is
quantitatively stable, with margins of two to three orders of magnitude
on every criterion; the assumptions on which the result rests are a chiral-phonon
share $f$ of a few percent, the weak-coupling relation
$\beta_2\simeq2\beta_1$, and pump coverage of the ordering window.

\section{Flux-sector argument for Eq.~\eqref{eq:selection}}
\label{app:flux}

We spell out the flux bookkeeping behind the selection rule
\eqref{eq:selection}. With $\sigma_j^\nu=ib_j^\nu c_j$
[Eq.~\eqref{eq:sigrep}], the operator toggles
$u_{jk}=ib_j^\nu b_k^\nu$ on the $\nu$ bond at $j$: since
$(b_j^\nu)^2=1$ anticommutes with $b_j^\nu$ itself, conjugation by the
$b$-part of $\sigma_j^\nu$ sends $u_{jk}\to-u_{jk}$. By the definition
\eqref{eq:wp}, each plaquette operator $W_p$ is a product of bond variables
around the plaquette, so flipping one bond flips exactly the two plaquette
operators $W_p$ sharing that bond. Acting on the zero-flux ground state,
$\sigma_j^\nu|\mathrm{GS}\rangle$ therefore lies in a two-flux sector (times
a matter-Majorana operation from the factor $c_j$). A second operator
$\sigma_i^\mu$ returns the state to the zero-flux sector---the only sector
overlapping $\langle\mathrm{GS}|$---if and only if it toggles the same bond
variable, i.e.\ iff $\mu=\nu$ and $(ij)$ is that same bond (or $i=j$, where
$c_j$ toggles twice and the flux pair is not created). Since the flux label
is conserved under $H_K$ [$[u_{ij},H_K]=0$, Eq.~\eqref{eq:udef}], the
argument holds at all times and frequencies; the ground-state correlator is
exactly ultra-short ranged \cite{Baskaran2007}. Excited-state or
finite-temperature correlators with pre-existing fluxes are longer ranged;
the dilute-flux limit $T\ll\Delta_F$ protects \eqref{eq:selection} for the
ground-state kernel used here.

\section{Brillouin-zone integrals and asymptotics}
\label{app:bz}

$m_\alpha=\langle\sigma_i^\alpha\sigma_j^\alpha
\rangle$ on an $\alpha$ bond (Pauli convention), fixed by
Hellmann--Feynman from $e_{\mathrm{cell}}=-\langle|g|\rangle$ with $g$ in
\eqref{eq:g} [Eq.~\eqref{eq:hf}]; this gives $m_\alpha\to1$ in the
isolated-dimer limit and $e_{\mathrm{cell}}=-\sum_\alpha J_\alpha m_\alpha$
identically. At isotropy $m=0.5249$ and $e_{\mathrm{site}}=-0.7873\,J$
\cite{Baskaran2007}, which our quadrature reproduces to five digits
(converged at $\sim10^3$ mesh points per axis; the mesh must avoid the
$J_z=0$ line node).

\emph{Result~\ref{res:zero}.} At $J_z=0$ with $J_x=J_y$, write
$g=e^{ik_1}+e^{ik_2}=e^{iu}\,2\cos v$ with $u=(k_1+k_2)/2$,
$v=(k_1-k_2)/2$ [factor out the mean phase]; then
$\mathrm{Re}\,g/|g|=\cos u\,\mathrm{sgn}(\cos v)$. The change of variables
$(k_1,k_2)\to(u,v)$ has constant Jacobian $|\partial(u,v)/\partial
(k_1,k_2)|=1/2$ and covers the torus uniformly, so the average factorizes:
$\langle\cos u\,\mathrm{sgn}\cos v\rangle=
\langle\cos u\rangle\,\langle\mathrm{sgn}\cos v\rangle=0$.

\emph{Result~\ref{res:2pi}.} Similarly
$\mathrm{Re}(ge^{-ik_1})/|g|=|\cos v|$, and
$\langle|\cos v|\rangle=\frac{1}{\pi}\int_{-\pi/2}^{\pi/2}\cos v\,dv
=2/\pi$.

\emph{Result~\ref{res:log}.} At small $J_z$,
$\partial m_z/\partial J_z=\langle(\mathrm{Im}\,g)^2/|g|^3\rangle$
[differentiating Eq.~\eqref{eq:mz}: $\partial_{J_z}(\mathrm{Re}\,g/|g|)
=\big[|g|^2-(\mathrm{Re}\,g)^2\big]/|g|^3=(\mathrm{Im}\,g)^2/|g|^3$]
diverges at the line node $\cos v=0$ of the $J_z=0$ model: locally
$|g|\simeq\sqrt{J_z^2+4v^2}$ (near $u=0$, where the line node carries
spectral weight), so the $v$ integral is cut off at $|v|\sim J_z$,
\begin{equation}
\begin{split}
\partial_{J_z}m_z&\simeq\frac{1}{2\pi}\int\frac{4v^2\,dv}{(J_z^2+4v^2)^{3/2}}
\cdot(\text{node measure})\\
&\;\longrightarrow\;\frac{1}{2\pi}\ln\frac{c}{J_z},
\end{split}
\end{equation}
and integrating in $J_z$ gives
\eqref{eq:log}; the fitted coefficient $0.159$ agrees with $1/(2\pi)=
0.1592$ to three digits, and the constant $0.522$ absorbs the regular
part.

\section{Diagonalization of the bond tensor}
\label{app:tensor}

In the basis $(\hat a,\hat s,\hat\gamma)$ with
$\hat a=(\hat\alpha-\hat\beta)/\sqrt2$, $\hat s=(\hat\alpha+\hat\beta)
/\sqrt2$, the tensor \eqref{eq:tensor} transforms as follows. The diagonal
entries are
$T_{aa}=\tfrac12(T_{\alpha\alpha}+T_{\beta\beta}-T_{\alpha\beta}-T_{\beta\alpha})
=\tfrac12(J+J-\Gamma-\Gamma)=J-\Gamma$ and
$T_{ss}=\tfrac12(J+J+\Gamma+\Gamma)=J+\Gamma$, with $T_{\gamma\gamma}=J+K$
unchanged; the off-diagonal entries are
$T_{as}=\tfrac12(T_{\alpha\alpha}-T_{\beta\beta})=0$,
$T_{a\gamma}=\tfrac{1}{\sqrt2}(T_{\alpha\gamma}-T_{\beta\gamma})
=\tfrac{1}{\sqrt2}(\Gamma'-\Gamma')=0$, and
$T_{s\gamma}=\tfrac{1}{\sqrt2}(T_{\alpha\gamma}+T_{\beta\gamma})=\sqrt2\,\Gamma'$.
Hence
\begin{equation}
\mathsf T_\gamma=
\begin{pmatrix}
J-\Gamma & 0 & 0\\
0 & J+\Gamma & \sqrt2\,\Gamma'\\
0 & \sqrt2\,\Gamma' & J+K
\end{pmatrix},
\end{equation}
which is \eqref{eq:la} and the $2\times2$ block whose diagonalization
gives \eqref{eq:tilt}. All entries are exact; no small parameter is
used.

\section{Numerical verification of the bond-star theorems}
\label{app:verify}

\subsection{Reproduction protocol and error budget for $r_\alpha$}
\label{app:ralpha}

The numbers of Sec.~\ref{sec:rfactor} are obtained as follows. We use
periodic $2L^2$-site honeycomb clusters ($L=4,5,7$; $32$, $50$, $98$ sites)
with gauge $u_{ij}=+1$ on every bond (zero-flux sector); acting with
$\sigma_j^\alpha$ flips $u_{ij}\to-1$ on the target $\alpha$ bond and adds
one matter Majorana, placing the system in the two-flux sector with the
fluxes on the two plaquettes adjacent to that bond. Gauge-invariant
correlators computed within a fixed gauge coincide with their projected
counterparts, so no explicit Gutzwiller projection is required. The
physical subspace is selected by $D_i=+1$; since every operator we
compute is gauge invariant, the fixed-gauge average equals the
projected average. The matter
quadratic forms $A_0$ and $A'$ are diagonalized exactly, and the Lehmann sum
for $\chi_\alpha$ [Eq.~\eqref{eq:rspec}] is assembled from Gaussian-state
(Pfaffian) overlaps including matter-fermion sectors with up to five fermions
(seven for $L=4$); the omitted-sector spectral weight satisfies
$\sum|W|<10^{-6}$ throughout. At every computed point the four hard checks
of Sec.~\ref{sec:rfactor} pass: the zeroth moment reproduces
Eqs.~\eqref{eq:mz}--\eqref{eq:mx} to better than $1.4\times10^{-8}$, the
spectral threshold sits at $2\Delta_F$, the first moment stays below
$8\times10^{-7}$ [Eq.~\eqref{eq:fsum}], and the second moment matches the
closed form \eqref{eq:mu2} to machine precision. Two independent validations
exist at $L=4$: (i) a direct Fock-space exact diagonalization of the spin
Hamiltonian (Hilbert-space dimension $2^{16}$) reproduces $E_0$, $m_\alpha$,
$\chi_\alpha$, $r_\alpha$, and the per-pole weights to machine precision;
(ii) the closed-form Wick/Pfaffian moments reproduce $\mu_0$--$\mu_3$
exactly (the analytic cross-check of Sec.~\ref{sec:rfactor}).
Representative data are collected in Tables~\ref{tab:ralpha} and
\ref{tab:poles}; the residual size dependence of $r_\alpha$ is analyzed
in the finite-size extrapolation below [Eq.~\eqref{eq:rfinal}].

\begin{table}[h]
\caption{Isotropic point, $z$ bond: flux gap $2\Delta_F$, exact zeroth
moment $m_z$ [Eq.~\eqref{eq:mz}], susceptibility $\chi_z$, and retardation
factor $r_z=\Delta_F\chi_z/m_z$ on the four clusters.}
\label{tab:ralpha}
\begin{ruledtabular}
\begin{tabular}{ccccc}
$L$ & $2\Delta_F$ & $m_z$ & $\chi_z$ & $r_z$\\
\hline
4 & 0.340218 & 0.529508 & 0.779543 & 0.250435\\
5 & 0.278368 & 0.524913 & 0.912706 & 0.242010\\
7 & 0.287488 & 0.525660 & 0.890870 & 0.243612\\
8 & 0.272200 & 0.525005 & 0.933914 & 0.242104\\
\end{tabular}
\end{ruledtabular}
\end{table}

\paragraph{Finite-size extrapolation.}
The four clusters give $r_z=0.25043,\,0.24201,\,0.24361,\,0.24210$ at
$L=4,5,7,8$. ($L=9$ is excluded: for $L$ divisible by $3$ the Dirac
points land on the momentum grid, the matter spectrum acquires exact
zero modes with $\min|w|\approx10^{-16}$, and the covariance matrix is
ill-conditioned, so $\chi_\alpha$ is itself pathological at that
size.) The $L=4$ point lies $3\%$ above the others and has not reached
the asymptotic regime. The three largest clusters ($L=5,7,8$) agree to
$0.7\%$ with no discernible monotone drift; their mean is $0.2426$. The
residual scatter is therefore a parity/shape oscillation rather than a
$1/L^2$ or $1/L$ correction, consistent with the non-monotonic
behaviour of the flux gap itself ($2\Delta_F=0.2784,\,0.2875,\,0.2722$
at $L=5,7,8$). The $L=7$ outlier correlates with an anomalously large
flux gap on that cluster ($2\Delta_F=0.2875$, versus $0.2784$ and
$0.2722$ at $L=5,8$), suggesting that the oscillation is a flux-sector
finite-size effect rather than a statistical fluctuation. We quote the
thermodynamic-limit estimate
\begin{equation}
r_\alpha^{(\infty)}=0.243\pm0.002
\label{eq:rfinal}
\end{equation}
from the $L=5,7,8$ average, the error bar covering their full
scatter; the quoted value is stable against dropping any one of the
three clusters to within the error bar. Anisotropic-point values
follow the same pattern: at $J_z=1.5$, the two clean sizes ($L=4,7$)
give $r_z=0.375$ and $0.378$, with a $0.7\%$ cluster difference
comparable to the isotropic case; deep in the A phase ($J_z=4$) the
three sizes agree to $0.2\%$ and reproduce the exact dimer-limit value
$1/2$ to within $0.1\%$.

\begin{table}[h]
\caption{Leading poles of the spectral function at the isotropic point
($L=4$, $z$ bond): position $\theta$ (measured from $E_0$) and signed
spectral weight $iW$. The lowest pole at
$\theta_{\min}=2\Delta_F+\varepsilon_{\min}$ ($\varepsilon_{\min}=0.4579$)
carries $64\%$ of $\sum|W|$ and alone gives $r_z\approx0.26$; the
alternating signs of the subleading poles enforce the first-moment identity
\eqref{eq:fsum}. The vacuum pole at $2\Delta_F=0.3402$ is parity-forbidden
(zero weight).}
\label{tab:poles}
\begin{ruledtabular}
\begin{tabular}{cc}
$\theta$ & $iW$\\
\hline
0.7981 & $+0.64372$\\
3.4876 & $-0.10894$\\
4.1177 & $+0.10389$\\
5.3247 & $-0.12453$\\
6.2368 & $+0.01252$\\
\end{tabular}
\end{ruledtabular}
\end{table}

\subsection{Uniqueness of the leading binomial contribution}
\label{app:unique}

In Lemma~\ref{lem:moment}, harmonic $h$ on the ladder $h\equiv m\pmod n$
first appears at Taylor order $l=|h|$, where the constraint $l-2s=h$ admits
the single solution $s=0$ ($h>0$) or $s=l$ ($h<0$). Hence the leading
amplitude \eqref{eq:amplitude} carries binomial weight
$\binom{|h|}{0}=\binom{|h|}{|h|}=1$ and is nonzero for every ladder
harmonic. The next contribution to the same harmonic arises at $l=|h|+2$
(the parity of $l$ is fixed by that of $h$ since $l-2s=h$), so the
corrections in Eqs.~\eqref{eq:triplead}--\eqref{eq:singlead} are
$O(k^{|h|+2})$ as stated. For $0<m<n$ with $n$ odd, the ladder's two
minimal-magnitude representatives are $h=m$ and $h=m-n$: their magnitudes
are $m$ and $n-m$, exactly one of which is even, which fixes $\{L_s,L_t\}$
and the signs in Eq.~\eqref{eq:winding}.

\subsection{Numerical verification protocol}
\label{app:numerics}

The moment identity \eqref{eq:moment} and the amplitude formula
\eqref{eq:amplitude} were verified by direct Fourier extraction. For each
$(n,m)$, $\chi_m^{(n)}(\vk)$ as defined in Eq.~\eqref{eq:chimn} was
evaluated on a uniform grid of $N=16384$ azimuthal angles $\theta$ at
several small radii $k$, and the angular Fourier coefficients $c_h(k)$ were
extracted by discrete transform. For each ladder harmonic $h$ the power law
$|c_h|\propto k^{|h|}$ and the leading amplitude were compared against
Eq.~\eqref{eq:amplitude}; for harmonics outside the ladder, $|c_h|$ was
checked to vanish to machine precision. Results are summarized in
Table~\ref{tab:verify} (leading exponents and amplitudes) and
Table~\ref{tab:ratio} (universal ratios); all agree with the analytic
predictions within the stated fitting accuracy ($\le 10^{-5}$ relative,
dominated by the next Taylor order), and ladder-external harmonics measure
$|c_h|\lesssim 10^{-17}$ at $k=0.02$.

\begin{table}[h]
\caption{\label{tab:verify}Verification of Lemma~\ref{lem:moment}: the
leading exponent of $|c_h(k)|$, predicted to be $|h|$, from Fourier
extraction as described in Sec.~\ref{app:numerics}.}
\begin{ruledtabular}
\begin{tabular}{cclc}
$n$ & $m$ & $h$ (sector) & exponent\\
\hline
3 & 1 & $+1$ (triplet $p$) & 1.0000\\
3 & 1 & $-2$ (singlet $d$) & 2.0000\\
5 & 1 & $+1$ (triplet $p$) & 1.0000\\
5 & 1 & $-4$ (singlet $g$) & 3.9998\\
5 & 2 & $+2$ (singlet $d$) & 2.0000\\
5 & 2 & $-3$ (triplet $f$) & 2.9997\\
7 & 1 & $+1$ / $-6$ & 1.0000 / 5.9997\\
7 & 2 & $+2$ / $-5$ & 2.0000 / 4.9998\\
7 & 3 & $+3$ / $-4$ & 3.0000 / 3.9998\\
\end{tabular}
\end{ruledtabular}
\end{table}

\begin{table}[h]
\caption{\label{tab:ratio}Verification of the universal ratio
\eqref{eq:ratio}: numerical $|A_s/A_t|$ versus prediction, maximal relative
deviation over $k\in[3\times10^{-3},\,1.2\times10^{-2}]$.}
\begin{ruledtabular}
\begin{tabular}{cclcc}
$n$ & $m$ & $\{L_s,L_t\}$ & prediction & max.\ deviation\\
\hline
3 & 1 & $\{2,1\}$ & $k/4$ & $6\times10^{-6}$\\
5 & 1 & $\{4,1\}$ & $k^3/192$ & $1.1\times10^{-5}$\\
5 & 2 & $\{2,3\}$ & $6/k$ & $3\times10^{-6}$\\
7 & 3 & $\{4,3\}$ & $k/8$ & $1.8\times10^{-6}$\\
\end{tabular}
\end{ruledtabular}
\end{table}

Parity purity on even-fold stars was checked identically: for
$(n,m)=(4,1),(4,2),(6,2),(6,3)$ all harmonics of the wrong parity measure
$\lesssim10^{-17}$, while the allowed ladders reproduce Eq.~\eqref{eq:pure}.

\subsection{Kinematic Fermi-surface weights}
\label{app:weights}

The channel splitting referred to in Proposition~\ref{prop:locked} is purely
kinematic. For $(n,m)=(3,1)$ on a circular Fermi surface of radius $k_F$,
every Fermi-surface overlap of bond phases is a Bessel function by
Eq.~\eqref{eq:fsaverage}. Explicitly,
\begin{align}
\langle|f_\pm|^2\rangle_{\rm FS}
&=\sum_{j,l}e^{i(\phi_j-\phi_l)}
\big\langle e^{i\vk\cdot(\vd_j-\vd_l)}\big\rangle_{\rm FS}\nonumber\\
&=3\big[1-J_0(\sqrt{3}k_F)\big]\,,\label{eq:f2}\\
\mathrm{Re}\,\big\langle f_\pm^*(\vk)\,f_\pm(-\vk)\big\rangle_{\rm FS}
&=3\big[J_0(2k_F)-J_0(k_F)\big]\,,\label{eq:ff}\\
\langle|g_\pm|^2\rangle_{\rm FS}&=\tfrac12\big[\langle|f_\pm|^2\rangle_{\rm FS}\nonumber\\
&\qquad+\mathrm{Re}\langle f_\pm^*f_\pm(-\vk)\rangle_{\rm FS}\big],\label{eq:g2}\\
\langle|p_\pm|^2\rangle_{\rm FS}&=\tfrac12\big[\langle|f_\pm|^2\rangle_{\rm FS}\nonumber\\
&\qquad-\mathrm{Re}\langle f_\pm^*f_\pm(-\vk)\rangle_{\rm FS}\big].\label{eq:p2}
\end{align}
Equation \eqref{eq:f2} uses $|\vd_j-\vd_l|=\sqrt{3}$ for $j\neq l$ [two unit
vectors at $\pm120^\circ$ differ by $2\sin60^\circ=\sqrt3$], so the $j=l$
terms contribute $3$ and the $j\neq l$ terms
$J_0(\sqrt3 k_F)\sum_{j\neq l}e^{i(\phi_j-\phi_l)}=-3J_0(\sqrt3 k_F)$
[the phase sum is $|\sum_je^{i\phi_j}|^2-\sum_j1=0-3$ by
Eq.~\eqref{eq:roots}]; Eq.~\eqref{eq:ff} uses $|\vd_j+\vd_l|=1$ for $j\neq
l$ [unit vectors at $\pm120^\circ$ sum to a unit vector] and the same phase
sum, while the $j=l$ terms contribute $3J_0(2k_F)$. Equations
\eqref{eq:g2}--\eqref{eq:p2} are the definitions \eqref{eq:paritydec}
squared and averaged. The small-$k_F$ expansions
$\langle|g|^2\rangle=O(k_F^4)$ and $\langle|p|^2\rangle=\tfrac94
k_F^2+O(k^4)$ reproduce the leading terms of Corollary~\ref{cor:n3}
[inserting Eq.~\eqref{eq:besselsmall} and squaring
Eqs.~\eqref{eq:gpm}--\eqref{eq:ppm} averaged over the circle],
providing an independent consistency check. Both weights are nonzero for
generic $k_F$: the singlet and triplet channels of a bond-diagonal kernel
are simultaneously active, with a ratio fixed by Fermi-surface geometry
rather than by any microscopic coupling.

\subsection{Numerical verification of Theorem~\ref{thm:dominance}}
\label{app:topology}

The exact circle form \eqref{eq:jacobi} was checked against direct Fourier
extraction of Eq.~\eqref{eq:chimn} at several radii: for $(n,m)=(3,1)$ at
$k_F=0.7,\,1.5$ and $(5,2)$ at $k_F=0.9$, all ladder harmonics with $|h|\le
8$ agree with $n\,i^{|h|}J_{|h|}(k_F)$ to relative accuracy $\le 10^{-9}$,
and ladder-external harmonics remain at machine precision. The winding of
the full mixed gap
$\Delta(\theta)=a_s\,\chi^{(e)}(k_F,\theta)+a_t\,\chi^{(o)}(k_F,\theta)$
(all ladder harmonics retained) was computed by phase unwrapping while
scanning $|a_t/a_s|$; the singlet-to-triplet winding flip $-2\to+1$ (for
$n=3$) occurs at the critical ratio of Table~\ref{tab:flip}, compared with
the two-term prediction $J_{L_s}(k_F)/J_{L_t}(k_F)$ of
Theorem~\ref{thm:dominance}. The residual shift is the $O(\varepsilon)$
tail displacement, growing $\propto k_F^2$ as expected. In the two-term
truncation the boundary zeros are equally spaced, at
$\theta=\pi/3,\,\pi,\,5\pi/3$ for $n=3$ [the three roots of
$e^{-3i\theta}=-A_t/A_s$], confirming statement (iii).

\begin{table}[h]
\caption{\label{tab:flip}Critical amplitude ratio $r_c=|a_t/a_s|_c$ of the
winding transition for $(n,m)=(3,1)$: numerical (full ladder) versus the
two-term prediction $J_2(k_F)/J_1(k_F)$.}
\begin{ruledtabular}
\begin{tabular}{cccc}
$k_F$ & $r_c$ (numerical) & $J_2/J_1$ & ratio\\
\hline
0.3 & 0.0755 & 0.0753 & 1.003\\
0.5 & 0.1271 & 0.1263 & 1.007\\
0.8 & 0.2086 & 0.2056 & 1.015\\
1.2 & 0.3304 & 0.3198 & 1.033\\
\end{tabular}
\end{ruledtabular}
\end{table}

\end{document}